\documentclass[acmsmall, nonacm]{acmart}

\usepackage{mathpartir}
\usepackage{cleveref}
\usepackage{wrapfig}
\usepackage{tikzit}
\usepackage{tikz-cd}
\usepackage{tikz}
\usetikzlibrary{positioning,decorations.markings,arrows.meta,calc,fit,quotes,cd,math,arrows,backgrounds,shapes.geometric}
\usepackage{mathtools,stackengine}
\usepackage{tcolorbox}
\usepackage{suffix}

\tikzstyle{action}=[fill=white, draw=none, shape=circle, tikzit draw=black, inner sep=0.5pt]
\tikzstyle{medium box}=[fill=white, draw=none, shape=rectangle, minimum height=0.1cm, minimum width=0.1cm, tikzit draw=black, inner sep=0.5pt]
\tikzstyle{var}=[fill=white, draw=black, shape=circle, tikzit draw=blue]
\tikzstyle{input}=[fill=white, draw=black, shape=rectangle, minimum size=0.1cm, tikzit draw=red, inner sep=1pt]

\tikzstyle{visible}=[-, fill=none, draw=black, densely dotted, tikzit draw=blue]

\newcommand{\MCC}{\ensuremath{\mathcal C}}

\newcommand{\MCL}{\ensuremath{\mathcal L}}

\newcommand{\MCO}{\ensuremath{\mathcal O}}
\newcommand{\MCP}{\ensuremath{\mathcal P}}

\newcommand{\MCT}{\ensuremath{\mathcal T}}

\newcommand{\MCX}{\ensuremath{\mathcal X}}
\newcommand{\MCY}{\ensuremath{\mathcal Y}}
\newcommand{\MCZ}{\ensuremath{\mathcal Z}}

\newcommand{\fork}{\mathsf{fork}}

\newcommand{\sop}{\mathsf{stop}}
\newcommand{\wait}{\mathsf{wait}}
\newcommand{\print}[1]{\mathsf{act}_{#1}}
\newcommand{\act}[1]{\print{#1}}
\newcommand{\perform}[1]{{\mathsf{perform}}_{#1}}

\newcommand{\gfork}{\mathsf{\underline{fork}}}
\newcommand{\gsop}{\mathsf{\underline{sto}p}}
\newcommand{\gwait}{\mathsf{\underline{wait}}}
\newcommand{\gprint}[1]{{\mathsf{p\underline{erform}}}_{#1}}
\newcommand{\gprintstop}[1]{\mathsf{\underline{act}}_{#1}}

\newcommand{\sfork}{\fork}
\newcommand{\swait}{\wait}
\newcommand{\ssop}{\sop}
\newcommand{\sprint}[1]{\print{#1}}

\newcommand{\vbar}{\mathrel{|}}
\newcommand{\sdots}{\mathinner{{\ldotp}{\ldotp}{\ldotp}}} 
\newcommand{\cterm}[1]{\mathrel{\vdash^{\mathbf{c}}_{#1}}}
\newcommand{\vterm}[1]{\mathrel{\vdash^{\mathbf{v}}_{#1}}}

\newcommand{\tzero}{\mathsf{0}}
\newcommand{\tone}{\mathsf{1}}
\newcommand{\tprod}[3]{\textstyle{\prod_{#1=#2}^#3}}
\newcommand{\tsum}[3]{\textstyle{\sum_{#1=#2}^#3}}
\newcommand{\letin}[3]{\mathsf{let}\, #1=#2\, \mathsf{in}\, #3}
\newcommand{\inj}[2]{\mathsf{inj}_{#1}\,#2}
\newcommand{\ret}[1]{\mathsf{return}\,#1}
\newcommand{\proj}[2]{\mathsf{proj}_{#1}\,#2}
\newcommand{\casesum}[6]{\mathsf{case}\, #1\, \mathsf{of}\, \{\mathsf{inj}_{#4}( #2) \Rightarrow #3\}_{#4=#5}^{#6}}
\newcommand{\lbd}[2]{\lambda #1.\,#2}
\newcommand{\emptid}{\mathtt{0}} 
\newcommand{\app}[2]{#1\,#2}
\newcommand{\casetwo}[5]{\mathsf{case}\, #1\, \mathsf{of}\, \{\mathsf{inj}_1( #2) \Rightarrow #3,\ \mathsf{inj}_2( #4) \Rightarrow #5 \}}
\newcommand{\casevert}[3][]{\mathsf{case}\, #2\, \mathsf{of}\,#1
  \left\{#3\right\}
}
\newcommand{\clause}[4][,]{%
  \mathsf{inj}_{#2}(#3) \Rightarrow #4#1%
}

\newcommand{\tid}{\mathsf{tid}}

\newcommand{\den}[1]{\llbracket #1\rrbracket}
\newcommand{\abs}[1]{\lvert #1 \rvert}

\newcommand{\possop}{\mathsf{s}} 
\newcommand{\repr}[1]{T_{#1}}
\newcommand{\indexcat}{\mathbf{FinRel}}
\newcommand{\Set}{\mathbf{Set}}
\newcommand{\algsig}{\MCO}
\newcommand{\nfset}[1]{\mathsf{NF}_{#1}}

\newcommand{\defeq}{\stackrel{\textrm{def}}=}

\newcommand{\interp}{\textbf{interp}}
\newcommand{\reify}{\textbf{reify}}
\newcommand{\inc}{\textbf{inc}}

\newcommand{\actth}{\MCC}
\newcommand{\forkth}{\MCT}

\newcommand{\tunit}{\mathsf{unit}}
\newcommand{\tidset}{\mathsf{tid\,set}}
\newcommand{\tempty}{\mathsf{empty}}
\newcommand{\maybe}[1]{#1\,\mathsf{option}}
\newcommand{\gnode}[1]{\underline{\mathsf{node}}_{#1}}
\newcommand{\osome}{\mathsf{Some}\,}
\newcommand{\onone}{\mathsf{None}}
\newcommand{\casesnnl}[4]{\mathsf{case\,}#1\mathsf{\,of\,}\{&\mathsf{Some}\,#2\Rightarrow #3;\\&
  \mathsf{None}\Rightarrow #4\}}

\newcommand{\FinRelIntro}{\mathbf{FinRel}}

\newcommand{\Tids}{\mathbb{T}}

\newcommand{\config}[3]{\langle #1;#2;#3\rangle}
\newcommand{\sconfig}[1]{\langle[a]#1\rangle}
\newcommand{\threads}{\mathit{thread}}
\newcommand{\stopped}{\mathit{finished}}
\newcommand{\waiter}{\prec}
\newcommand{\gettids}{\mathit{tids}}

\newcommand{\ar}{\mathsf{ar}}

\newcommand{\symb}{\Sigma}
\newcommand{\node}{\mathsf{node}}
\newcommand{\posetend}{\mathsf{end}}

\newcommand{\id}{\mathsf{id}}
\newcommand{\noderepm}[1]{{S_{#1}}}
\newcommand{\noderep}[1]{{\mathbf{S}_{#1}}}

\newcommand{\FinRel}{\mathbf{FinRel}}

\newcommand{\ctx}[1]{\mathcal{C}[#1]}

\newcommand{\ctxeq}{\stackrel{\mathrm{ctx}}=}

\newcommand{\bl}{\begin{array}[t]{@{}l@{}}}
\newcommand{\el}{\end{array}}

\newcommand\Konig{K\H{o}nig}

\usepackage{tikz-cd}

\setcopyright{cc}
\setcctype{by}
\acmDOI{10.1145/3776706}
\acmYear{2026}
\acmJournal{PACMPL}
\acmVolume{10}
\acmNumber{POPL}
\acmArticle{64}
\acmMonth{1}
\received{2025-07-10}
\received[accepted]{2025-11-06}

\begin{document}

\title[An Equational Axiomatization of Dynamic Threads via Algebraic Effects]{An Equational Axiomatization of Dynamic Threads\\ via Algebraic Effects}
\subtitle{Presheaves on Finite Relations, Labelled Posets, and Parameterized Algebraic Theories}

\author{Ohad Kammar}
\orcid{0000-0002-2071-0929}
\affiliation{%
  \institution{University of Edinburgh}
  \city{Edinburgh}
  \country{United Kingdom}
}
\email{ohad.kammar@ed.ac.uk}

\author{Jack Liell-Cock}
\orcid{0009-0005-7121-8095}
\affiliation{%
  \institution{University of Oxford}
  \city{Oxford}
  \country{United Kingdom}
}
\email{jack.liell-cock@cs.ox.ac.uk}

\author{Sam Lindley}
\orcid{0000-0002-1360-4714}
\affiliation{%
  \institution{University of Edinburgh}
  \city{Edinburgh}
  \country{United Kingdom}
}
\email{Sam.Lindley@ed.ac.uk}

\author{Cristina Matache}
\orcid{0009-0003-6036-6426}
\affiliation{%
  \institution{University of Edinburgh}
  \city{Edinburgh}
  \country{United Kingdom}
}
\email{cristina.matache@ed.ac.uk}

\author{Sam Staton}
\orcid{0000-0002-7149-3805}
\affiliation{%
  \institution{University of Oxford}
  \city{Oxford}
  \country{United Kingdom}
}
\email{sam.staton@cs.ox.ac.uk}

\begin{abstract}
  We use the theory of algebraic effects to give a complete equational axiomatization for dynamic threads.
  Our method is based on parameterized algebraic theories, which give a concrete syntax for strong monads on functor categories,
  and are a convenient framework for names and binding.

  Our programs are built from the key primitives `fork' and `wait'. `Fork' creates a child thread and passes its name (thread ID) to the parent thread.
  `Wait' allows us to wait for given child threads to finish.
   We provide a parameterized algebraic theory built from fork and wait, together with basic atomic actions and laws such as associativity of `fork'.

   Our equational axiomatization is complete in two senses.
   First, for closed expressions, it completely captures equality of labelled posets (pomsets), an established model of concurrency: model complete. Second, any two open expressions are provably equal if they are equal under all closing substitutions: syntactically complete.

  The benefit of algebraic effects is that the semantic analysis can focus on the algebraic operations of fork and wait.
  We then extend the analysis to a simple concurrent programming language by giving operational and denotational semantics.
  The denotational semantics is built using the methods of parameterized algebraic theories and we show that it is sound, adequate, and fully abstract at first order for labelled-poset observations.
\end{abstract}

\begin{CCSXML}
<ccs2012>
   <concept>
       <concept_id>10003752.10010124.10010131.10010133</concept_id>
       <concept_desc>Theory of computation~Denotational semantics</concept_desc>
       <concept_significance>500</concept_significance>
       </concept>
   <concept>
       <concept_id>10003752.10010124.10010131.10010137</concept_id>
       <concept_desc>Theory of computation~Categorical semantics</concept_desc>
       <concept_significance>500</concept_significance>
       </concept>
   <concept>
       <concept_id>10003752.10003753.10003761</concept_id>
       <concept_desc>Theory of computation~Concurrency</concept_desc>
       <concept_significance>500</concept_significance>
       </concept>
 </ccs2012>
\end{CCSXML}

\ccsdesc[500]{Theory of computation~Denotational semantics}
\ccsdesc[500]{Theory of computation~Categorical semantics}
\ccsdesc[500]{Theory of computation~Concurrency}
  
\keywords{%
  parameterized algebraic theories,
  Lawvere theories,
  monads,
  presheaves,
  multi-threaded programming abstractions,
  unix-like fork,
  causal semantics,
  pomsets,
  representation theorems.
}


\maketitle
\section{Introduction}
\label{sec:intro}
The theory of algebraic effects provides a way of analyzing semantic aspects of different computational effects in isolation, and separately from other aspects of programming languages, via the algebraic theories from universal algebra.
This paper provides an analysis of concurrency using the methods of algebraic effects.

A theory of algebraic effects for concurrency has proved elusive~\cite{DBLP:books/daglib/p/GlabbeekP10,plotkin-conc-alg-eff}. This is in spite of the success of equational and compositional reasoning in process algebra~\cite{process-algebra,sangiorgi2001pi,DBLP:books/daglib/0067019,DBLP:journals/scp/HoareS14}, and equational theories of concurrency such as concurrent Kleene algebra~\cite{gischer,cka}. Even more paradoxically, algebraic effects have already inspired powerful concurrency libraries~\cite{SivaramakrishnanDWKJM21,PhippsCostinRGLHSPL23}, but these software implementations do not yet tie with the theories of algebraic effects in terms of universal algebra and category theory. (See~\S\ref{sec:conc-etc} for further discussion of the literature.)

The key technique in our work is \emph{to take thread IDs seriously}. This necessitates an algebraic framework that supports abstract names or IDs, and binding and passing them. For this, we use `parameterized algebraic theories'~\cite{Staton13,Staton13PL}, which already have a tight connection with monads and algebraic effects. There are four operations in our algebraic theory:
\begin{itemize}
    \item $\fork$: Forking a child thread. This is the key operation and is written
$\fork(a.x(a),y)$.
This spawns a new child thread with ID~$a$, running continuation $y$, while concurrently running continuation $x$ in the parent thread, which is passed the ID~$a$ of the child.
\item $\wait$: A command to wait for a thread to end before proceeding.
\item $\sop$: A command to end the current thread now.
  Invoking this command will unblock all the threads waiting for the current one.
\item $\act\sigma$: Primitive atomic actions. Aside: going forward, we could combine with other algebraic effects, such as memory access to look at concurrent shared memory, but for now to focus on concurrency we restrict attention to primitive atomic actions.
\end{itemize}

\paragraph{Contributions}We present a theory with eight equations between these four operations (\S\ref{sec:pres-theory-fork}). We give a syntax-free representation theorem of terms modulo equations (\S\ref{sec:rep-theorem}), and show that for closed terms, the representation exactly matches the long-established model of true concurrency based on labelled posets (`pomsets', Thm.~\ref{thm:lab-poset-free-model}). In fact, this might be the first basic syntactic theory for labelled posets.
For open terms with free variables, we prove a completeness theorem: there can be no further equations on open terms while retaining the labelled posets model on closed terms (\S\ref{sec:compl-theor-theory}).

Algebraic effects allow us to focus on a particular theory, without worrying about other programming language primitives, but it is typically easy to return to a fuller programming language having analyzed the algebraic effects.
In Section~\ref{sec:background}, we give a typical functional programming language with concurrency primitives and an operational semantics. In Section~\ref{sec:denot-semant-progr} we use the algebraic effects and the representation theorem to build a denotational semantics for the programming language that is sound, adequate, and fully abstract at first order.

\subsection{Motivating Fork and Wait with Thread IDs as Language Primitives}

In the previous section we introduced an operation $\fork(a.x(a),y)$, where $\fork$ has type $(\tid \to \mathtt{t}) \to \mathtt{t} \to \mathtt{t}$ polymorphic in $\mathtt{t}$, and $\tid$ is the type of thread IDs.
The style of programming with operations such as $\fork$ is unusual, but according to the theory of algebraic effects, algebraic operations have a counterpart in generic effects~\cite{pp-algop-geneff}, and this matches more closely to realistic languages with effects.
The generic effect for `$\fork$' is a command $\gfork:\tunit\to (\maybe \tid) $,
\[
  \gfork() = \fork(a.\ret (\osome a),\ret \onone)\text.
\]
The algebraic operation $\fork$ can be recovered from the command $\gfork$ by pattern matching on the result of $\gfork$ and using explicit sequencing.

We provide a mini-programming language with an operational semantics in \Cref{sec:background}, which works with pools of threads.
There, $\gfork$ will spawn a new child thread into the thread pool, and the continuation is duplicated. The caller of $\gfork$ can check whether they are the parent or child by
looking at the return value of $\gfork$, and if they are the parent they will be given the
ID of the child, otherwise $\onone$.
Indeed this generic operation $\gfork$ is reminiscent of the POSIX \texttt{fork} construct~\cite{ieee1003.1-2024},
which in the POSIX standard is typed \texttt{pid\_t fork(void)}, which returns the child ID to the parent and~$0$ to the child.
%

Alongside the standard programming primitives, our other generic effects, which match the algebraic operations $\wait$, $\act{\sigma}$, $\sop$, are:
\[
  \gwait:\tid\to \tunit\qquad \gprint\sigma:\tunit\to \tunit \qquad\gsop:\tunit \to \tempty\text.
\]
Here: $\gwait(a)$ puts the current thread into a waiting state, recording for the scheduler the thread~$a$ that it is waiting for; $\gprint\sigma()$ performs the action $\sigma$ immediately, which is recorded by a label in our transition system; and $\gsop()$ ends the current thread, unblocking all other threads that were waiting for it. Here, the type $\tempty$ shows that nothing else will happen on this execution path.

Our operational semantics uses a labelled transition system that records the actions performed.
Inspired by true concurrency models such as asynchronous transition systems (e.g.~\cite{mn-async-ccs}),
we also include some location information, by way of noting the ID of the thread that performed each action. In this simple situation, this is sufficient to observe not only the traces of actions but also the independence between different actions. We can thus, from the operational semantics, obtain a labelled partial order, labelled by actions $\sigma$. Labelled posets, sometimes called `pomsets', are another model of `true' concurrency~\cite{DBLP:journals/ijpp/Pratt86}. For our semantics, the linearizations of the posets are exactly the execution traces of the program.

By defining `well-formed configurations' for our particularly simple language,
we can show that programs never deadlock, roughly because a child can never wait for its parent.
We also show that every closed program determines a unique labelled poset.
This clarifies that our language is very simple, in that programs all terminate, and there is no `conflict' in the sense of event structures~\cite{event-structures}, nor are there any `races'.
These are useful properties to have, and also useful for later relating to denotational semantics, but they do imply that we are studying a very idealized situation compared to how dynamic threads work in practice.
We expect future work to extend with other primitives that allow recursion and conflict.

\subsection{A Simple Complete Fragment for Labelled Posets (Pomsets)}\label{sec:simple-compl-fragm}
This simple language allows us to construct all labelled posets, in other words, it completely describes that model of true concurrency.
To show this, we define $\gnode\sigma:\tidset\to \tid$ (whose type signature is stated informally for now) by
\begin{align*}
  \gnode\sigma([a_1,\dots, a_n])\quad\defeq\quad \\
  \casesnnl {\gfork()} {b}{\ret b}{\gwait(a_1);\dots;\gwait(a_n);\gprint\sigma(); \mathsf{case\,}\gsop\mathsf{\,of\,}\{\} }\text.
\end{align*}
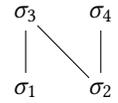
\begin{wrapfigure}[5]{r}{0.15\textwidth}\vspace{-6mm}
\ \ \begin{tikzpicture}
	\begin{pgfonlayer}{nodelayer}
		\node [style=action] (0) at (-1, -1) {$\sigma_1$};
		\node [style=action] (1) at (1, -1) {$\sigma_2$};
		\node [style=action] (2) at (-1, 1) {$\sigma_3$};
		\node [style=action] (3) at (1, 1) {$\sigma_4$};
	\end{pgfonlayer}
	\begin{pgfonlayer}{edgelayer}
		\draw (0) to (2);
		\draw (1) to (2);
		\draw (1) to (3);
	\end{pgfonlayer}
\end{tikzpicture}\caption{N-shape poset\label{fig:N}}
  \end{wrapfigure}
 In the language fragment containing $\gnode\sigma$ and $\gsop$, but without $\gfork$ and $\gwait$, every thread ID performs exactly one action.
Thus the induced labelled poset is a partial order on thread IDs, recording which waits for which, each labelled with their action.
The command $\gnode\sigma([a_1,\dots, a_n])$ adds a node labelled~$\sigma$ to the labelled poset, setting its immediate predecessors to $a_1,\dots, a_n$, and returns the name of the new node.

  For example, the following program induces the N-shape poset (Fig.~\ref{fig:N}).
\[
  \letin {a_1}{\gnode{\sigma_1}([])}
  \letin {a_2}{\gnode{\sigma_2}([])}
  \letin {a_3}{\gnode{\sigma_3}([a_1,a_2])}
  \letin {a_4}{\gnode{\sigma_4}([a_2])} \gsop\]

We can completely axiomatize labelled posets by two axioms, written informally for now using the $(\vec{\ \ })$ and $[\ ]$ notation to denote sets of thread IDs.
See~\Cref{exa:act-theory} for the formal statement.
\begin{align*}
  \letin c{\gnode{\sigma_1}(\vec a)}
  {\letin d{\gnode{\sigma_2}(\vec b)}{[c,d]}}
    \quad&=\quad
    \letin d{\gnode{\sigma_2}(\vec b)}
  {
    \letin c{\gnode{\sigma_1}(\vec a)}
    {[c,d]}}
  \\
   \letin b {\gnode\sigma(\vec a)}
    {[b]}
    \quad &=\quad
    \letin b {\gnode\sigma(\vec a)}
    {[b] \, {+\!\!+} \, \vec a}
  \end{align*}
  The first law says that it does not matter in which order we add nodes, as long as ID dependencies are respected, and the second captures the transitivity of the partial order.
  We show that these two axioms are complete in Theorem~\ref{thm:lab-poset-free-model}, using the more formal framework of parameterized algebraic theories.
  The key point is that by passing around the thread IDs, we are able to fully describe the established model of true concurrency based on labelled posets.

\subsection{Technical Setting: Functor Categories and Naturality for Syntax with Binding and Semantics with Names}

To cope with the dynamic threads and varying number of thread names, we follow the long-standing tradition of using functor categories. In brief, let $\FinRelIntro$ be the category of finite sets of names and relations between them, and let $\Set$ be the category of all sets and functions.
Then the computations at some type $A$ form a functor $\den A:\FinRelIntro\to \Set$,
mapping a set $w$ of available thread IDs to the set $\den A(w)$ of computations that use at most those thread IDs.

According to the functorial action here, for each relation $R\subseteq w\times w'$,
we have a reindexing function $\den A(w)\to \den A(w')$. The idea is that
if a computation in world $w$ would wait for some thread ID~$a\in w$,
then we can transform it into one that instead waits for all the thread IDs in the direct image,
$\{b~|~R(a,b)\}$.

Programs of type $A\to B$ are interpreted as families of functions $\den A(w)\to \den B(w)$ that are moreover natural.
This, in particular, maintains the invariant that one cannot sum thread IDs, guess thread IDs, or compare them in some order. This is similar to the role of names in nominal sets~\cite{nom}.

The framework of parameterized algebraic theories is an established method for algebraic effects over functor categories and admits a concrete syntax, which we use for our axiomatizations.
Moreover, every parameterized algebraic theory induces a strong monad on the functor category, in our case on $[\FinRelIntro, \Set]$.
Monads on functor categories have long been used for denotational semantics of dynamic allocation~\cite{PlotkinP02,oles}.
In~\Cref{sec:interpretation}, we use the strong monad induced by our theory of dynamic threads to give Moggi-style denotational semantics~\cite{moggi_notions_1991}
to the mini-programming language of~\Cref{sec:background}.

\subsection{Fork and Wait in General, Parallel Composition, and Labelled Posets with Holes}\label{sec:fork-wait-general}

Although the $\gnode\sigma$ effect is enough to build all labelled posets, it is more paradigmatic to allow higher level parallelism through $\gfork$ and $\gwait$. For example, we can define a program that puts two other programs in parallel, $\mathsf{parallel}:((\tunit\to \tempty), (\tunit\to \tempty))\to \tempty$, by spawning two threads and waiting for them:
\begin{displaymath}
\bl
  \mathsf{parallel}(x,y) =
  \bl
  \mathsf{case}\,(\gfork())\,\mathsf{of} \{ \\
  \quad
    \begin{array}[t]{@{}l@{~}c@{~}l@{}}
      \osome a &\Rightarrow&
        \bl
        \mathsf{case}\,(\gfork())\,\mathsf{of}\,\{ \\
        \quad
          \begin{array}[t]{@{}l@{~}c@{~}l@{}}
          \osome b &\Rightarrow& \gwait(a);\gwait(b);\gsop()\\
          \onone   &\Rightarrow& y()\}\\
          \end{array}
        \el \\
      \onone &\Rightarrow& x()\} \\
    \end{array}
  \el \\
  \el
  \qquad\quad
  \raisebox{-30pt}[0pt]{\begin{tikzpicture}
	\begin{pgfonlayer}{nodelayer}
		\node [style=var] (0) at (-1, 0) {$x$};
		\node [style=var] (1) at (1, 0) {$y$};
		\node [style=none] (2) at (0, 2) {};
	\end{pgfonlayer}
	\begin{pgfonlayer}{edgelayer}
		\draw (0) to (2.center);
		\draw (1) to (2.center);
	\end{pgfonlayer}
\end{tikzpicture}
}
\end{displaymath}

For this reason, we provide an equational theory for $\fork$ and $\wait$ in Section~\ref{sec:presentation-theory}. A general idea is that $\fork(a.t,u)$ behaves like a monoid, with $\gwait(a);\gsop()$ like a unit, except care is needed for the thread ID parameter.

The \textbf{main result} of our paper is the representation theorem for the fork/wait theory (Theorem~\ref{thm:main-theorem}). This representation is along the lines of the labelled posets, except now there may be holes standing for the different continuations.
For example, the `$\mathsf{parallel}$' operation becomes the `cherries' diagram shown.
%
%
This is non-trivial because any thread plugged in for $x$ or $y$ may have child threads that are not waited for, and may wait on other thread IDs that are not in the diagram.

We also prove a completeness theorem (Theorem~\ref{thm:completeness}), which says that if two expressions give the same labelled poset whatever we substitute into the variables $x$, $y$ etc., then they are provably equal. We show this by finding special gadgets to substitute for the variables.
This can be thought of as a full abstraction result, and we make this connection in Theorem~\ref{thm:full-abstraction}.

For a final remark, we define an operation
${\mathsf{series}:((\tunit\to \tempty), (\tunit\to \tempty))\to \tempty}$
that forks a child thread and immediately waits for it:
\begin{align*}
  \mathsf{series}(x,y)=
  \mathsf{case}\,\gfork()\,\mathsf{of}\,\{&\osome a\Rightarrow \gwait(a);y()~|~
                                                     \onone \Rightarrow x()\}\end{align*}
    We can use `$\mathsf{series}$' and `$\mathsf{parallel}$' to build series-parallel graphs~\cite{McKee1983,Alur2023}, and
    we can easily deduce from our algebraic theory that the equational laws of series-parallel graphs hold.
    But note that we can also express the N shape, which is not series-parallel.

    Although `$\mathsf{series}$' is easy to program, it is not the same as the sequencing $x();y()$ of the programming language; `$\mathsf{series}$' requires the return type of $x$ and $y$ in to be the same. Another view is that the unit of $(;)$ is $(\ret ())$ (return a value) whereas the unit of $\mathsf{series}$ is $\gsop$ (end the current thread).
    In particular, $\gsop();x()$ does not execute $x$ at all.
    Note also that if a child returns without ever invoking $\gsop$, a thread waiting on this child will never unblock.

    The apparent similarity between `$\mathsf{series}$' and sequencing may be the reason for earlier claims that concurrency via algebraic effects has `undesired equations'~\cite[\S 8,p33]{DBLP:books/daglib/p/GlabbeekP10}, such as parallel composition commuting with sequencing. By focusing instead on $\gfork$, $\gwait$, and dynamic threads, we make clear the distinction between sequencing and series: even though `$\mathsf{parallel}$' commutes with sequencing, it does not commute with `$\mathsf{series}$' as expected.

\paragraph{Paper Summary}
The operational and denotational semantics are given in Sections~\ref{sec:background} and~\ref{sec:denot-semant-progr} respectively; the main programming language results are soundness, adequacy and full abstraction (\S\ref{sec:adeq-cont-equiv}).
The method for building the denotational semantics is via algebraic effects, developed in Sections~\ref{sec:param-algebr-theor}--\ref{sec:pres-theory-fork}, shown to have a representation (\S\ref{sec:rep-theorem}) and thereby to be complete (\S\ref{sec:compl-theor-theory}).

\section{Background Concurrent Programming Language}
\label{sec:background}

To precisely frame the situation, we discuss a fairly standard concurrent programming language and operational semantics.

\subsection{Language and Type System}\label{sec:type-system}

We consider a standard higher-order programming language with finite product and sum types (e.g.~\cite{DBLP:journals/iandc/LevyPT03,moggi_notions_1991}).
The grammar of types is:
\begin{displaymath}
  A,B \Coloneqq \tid \vbar \tprod{i}{1}{k}A_i \vbar \tsum{i}{1}{k} A_i \vbar A\rightarrow B
\end{displaymath}
When $k=0$ we get the empty product and sum types, denoted by $\tone$ and $\tzero$ respectively, instead of $\tunit$ and $\tempty$ in the introduction.
The base type of thread IDs $\tid$ is specific to our setting.

The language is fine-grain call-by-value~\cite{DBLP:journals/iandc/LevyPT03}, meaning that terms are stratified into values and computations.
Values are terms that do not beta reduce and do not perform any effects:
\begin{displaymath}
  v \Coloneqq  x\vbar (v_1,\sdots,v_k) \vbar \inj{i}{v} \vbar \lbd{x}{t} \vbar a
  \vbar \emptid \vbar v_1\oplus v_2 \vbar g
\end{displaymath}
Values include variables, the usual constructors for product and sum types, and functions; the body of a function, $t$, is a computation.
The symbol $a$ ranges over a countably infinite set $\Tids$ of actual thread IDs.
The constant $\emptid$ stands for the empty set of thread IDs and $\oplus$ takes the union of two sets of thread IDs.
The symbol $g$ ranges over a fixed set $\mathbb{F}$ of typed term constants $g:A\rightarrow B$.
These constants allow us to add concurrency features to our languages.

The grammar of computations contains the usual destructors for product, sum and function types, and a $\mathsf{let}$-construct for explicitly sequencing computations:
\begin{displaymath}
  t \Coloneqq \ret{v} \vbar \proj{i}{v} \vbar \casesum{v}{x_i}{t_i}{i}{1}{k} \vbar \app{v_1}{v_2} \vbar \letin{x}{t_1}{t_2}
\end{displaymath}
We include the following set $\mathbb{F}$ of value constants $g:A\rightarrow B$ in our language, where $\sigma$ ranges over a fixed set of observable actions $\Sigma$:
  \begin{equation}\label{eqn:fork-wait-geff-prims}
    \gfork : \tone\rightarrow \tid +\tone \qquad \gsop:\tone\rightarrow \tzero \qquad \gwait : \tid \rightarrow \tone  \qquad \gprint{\sigma}:\tone\rightarrow \tone
  \end{equation}
  Intuitively, $\gfork()$ forks a new child thread and returns its ID to the parent thread, in the left branch of the sum type $\tid+1$.
  The right branch is for the child thread which receives the unit value $()$.
  The continuation of $\gfork()$ will run twice, once in the parent thread and once in the child thread.

  The computation $\gsop()$ signals that the current thread has finished and its continuation will be discarded.
  Once a thread has invoked $\gsop()$ it cannot resume running.
  The computation $\gwait(v)$ waits for all the threads with IDs in $v$ to finish by invoking $\gsop()$, then returns unit.
  Finally, $\gprint{\sigma}()$ performs the observable action $\sigma$ then returns unit.

When writing programs, we use some syntactic sugar, such as $(t_1;t_2)$ for $\letin{x}{t_1}{t_2}$ where $t_2$ does not depend on $x$, and $\mathsf{case}\,t\,\mathsf{of}\dots$ instead of
  $\letin x t \mathsf{case}\,v\,\mathsf{of}\dots$ where it is unambiguous.

\begin{example}\label{ex:prog-fork-1}
  Consider the computations below:
  \begin{align*}
    &\letin{y}{\gfork()}{\casetwo{y}{x_1}{\gwait(x_1);\gprint{\sigma_1}();\gsop()}{}{\gprint{\sigma_2}();\gsop()}}\\&
\letin{y}{\gfork()}{\casetwo{y}{x_1}{\gprint{\sigma_1}(); \gwait(x_1);\gsop()}{}{\gprint{\sigma_2}();\gsop()}}
  \end{align*}
  In both, the main thread forks a new child thread that performs action $\sigma_2$ then stops; the ID of this new thread is bound to $x_1$.
  In the first, the main thread waits for the child to finish before performing action $\sigma_1$.
  So the sequence of observed actions is $\sigma_2$ followed by $\sigma_1$.
  In the second, the main thread does not wait, so we may also see the other order, or $\sigma_1$ and $\sigma_2$ concurrently.
\end{example}

There are separate typing judgements $\vterm{w}$ and $\cterm{w}$ for values and computations respectively.
  The judgements that are standard are shown in \Cref{fig:std-typing-rules}.
  All judgements are annotated with a finite subset $w \subseteq \Tids$ of thread IDs, called a world, that does not change throughout a typing derivation.
  A world $w$ stands for threads that have been created by the environment.
  A term may use a thread ID in the world via the typing rule:
  \begin{displaymath}
    \inferrule
    {a\in w}
    {\Gamma \vterm{w} a:\tid}
  \end{displaymath}



For equational reasoning about programs, it is very helpful to combine thread IDs into \emph{compound} thread IDs.
For example, suppose that a thread~$a$ has nothing left to do but is waiting for~$b$ and~$c$ before it finishes. Then thread~$a$ is rather redundant, and the only reason to keep it is that the parent thread might at some point wait for~$a$. If the parent could instead wait for both~$b$ \emph{and}~$c$, then we can indeed finish~$a$ already. (This is an instance of axiom~(\ref{eq:fork-sub}).)
To reason in this example it is helpful to substitute the compound thread ID $(b\oplus c)$ for $a$.
To facilitate this equational reasoning, which is the aim of this paper, we have the following constructions of compound threads IDs, which in the operational semantics are treated as sets of IDs:
\begin{mathpar}
  \inferrule
  { }
  {\Gamma \vterm{w} \emptid : \tid}
  \and
  \inferrule
  {\Gamma \vterm{w} v_1:\tid \\ \Gamma \vterm{w} v_2:\tid }
  {\Gamma \vterm{w} v_1\oplus v_2:\tid}
  \and
\end{mathpar}

\begin{figure}
  \centering
\begin{mathpar}
  \inferrule
  { }
  {\Gamma,x:A,\Gamma' \vterm{w} x:A}
  \and
  \inferrule
  {\bigl(\Gamma \vterm{w} v_i:A_i\bigr)_{i=1}^k}
  {\Gamma \vterm{w} (v_1,\sdots,v_k): \tprod{i}{1}{k}A_i }
  \and
  \inferrule
  {\Gamma \vterm{w} v_i:A_i}
  {\Gamma \vterm{w} \inj{i}{v_i} : \tsum{i}{1}{k}A_i}
  \and
  \inferrule
  {\Gamma,x:A \cterm{w} t:B}
  {\Gamma \vterm{w} \lbd{x}{t}:A\rightarrow B}
  \and
  \inferrule
  {\Gamma \vterm{w} v:A}
  {\Gamma \cterm{w} \ret{v}:A }
  \and
  \inferrule
  {\Gamma \vterm{w} v:\tprod{i}{1}{k}A_i}
  {\Gamma \cterm{w} \proj{i}{v}:A_i}
  \and
  \inferrule
  {\Gamma \vterm{w} v: \tsum{i}{1}{k}A_i \and \bigl(\Gamma,x_i:A_i \cterm{w} t_i :B \bigr)_{i=1}^k}
  {\Gamma \cterm{w} \casesum{v}{x_i}{t_i}{i}{1}{k} :B}
  \and
  \inferrule
  {\Gamma \vterm{w} v_1:A\rightarrow B \and \Gamma \vterm{w} v_2:A}
  {\Gamma \cterm{w} \app{v_1}{v_2} : B}
  \and
  \inferrule
  {\Gamma \cterm{w} t_1:A \and \Gamma,x:A \cterm{w} t_2:B}
  {\Gamma \cterm{w} \letin{x}{t_1}{t_2}:B}
  \and
  \inferrule
  {(g:A\rightarrow B) \in \mathbb{F} }
  {\Gamma \vterm{w} g : A\rightarrow B}
\end{mathpar}
\caption{Standard typing rules for a fine-grain call-by-value programming language.
  Here ${\mathbb{F}}$ is given in \eqref{eqn:fork-wait-geff-prims}.}\label{fig:std-typing-rules}
\end{figure}


\subsection{Operational Semantics}
\label{sec:opsem}

We now define an operational semantics for the language, based on a labelled transition
relation over configurations. These are pools of threads, some of which are ready to run, some are finished, and some are stuck waiting for others to finish before they can run. We include a relation stating which threads are waiting for which others to finish.

\subsubsection{Alternative Language Construct: $\gprintstop\sigma$} \label{sec:alt-lang}

In order to set up the operational semantics, it is convenient to consider the following
operation
\[
  \gprintstop \sigma : \tone \to \tzero
\]
which performs action $\sigma$ and then finishes immediately.
This is interdefinable with ${\gprint\sigma:\tone\to \tone}$:
\begin{align*}
  \gprintstop \sigma() \ \ &= \ \ \gprint\sigma();\gsop()
\\\text{and}\quad
  \gprint \sigma ()\ \ &=\ \  \letin x {\gfork()} {\casetwo x {a}{\gwait(a)}{}{\gprintstop \sigma()}}\text.
\end{align*}
That is, an action that may be followed by other commands can be achieved by forking a new thread that merely performs the action, and then waiting for it.

Arguably, $\gprint\sigma$ is more natural in programming, but $\gprintstop\sigma$ has an easier semantics. We focus on semantics, and so we focus on $\gprintstop\sigma$ as a primitive, regarding $\gprint\sigma$ as derived.

\subsubsection{Configurations}
\begin{definition}
A \emph{configuration} is a triple $\config w \waiter \threads$
where
\begin{itemize}\item $w\subseteq \Tids$ is a finite set of thread IDs that are involved in this configuration.
\item $(\waiter)\subseteq \Tids\times w$ is a relation, relating thread IDs in $w$ with the, potentially external, IDs from $\Tids$ that they are waiting for.
\item $\threads:w\to \{\text{computations $t$}\}\uplus \stopped$ is a function assigning to each active thread id the computation that it runs, or `$\stopped$' if the thread has finished.
  We often enumerate the map e.g.~writing $([a]t,[b]u)$, if $a\mapsto t$ and $b\mapsto u$.
\end{itemize}
\end{definition}
We abbreviate the configuration when there is one thread,
writing $\sconfig t$ instead of $\config {\{a\}} \emptyset {[a]t}$.

\subsubsection{Transition Relation}
The transition system is given inductively in Figure~\ref{fig:opsem}.
We define two transition relations between configurations: silent reductions $\longrightarrow$ and
labelled reductions $\xlongrightarrow \sigma$, where $\sigma$ is an action.
We also annotate our transition relation with the thread which reduced ($a$), following~\cite{mn-async-ccs}; this is not necessary and can be erased, but is useful in the metatheory.

The relation $a \waiter b$ specifies that thread $b$ is waiting on thread $a$ to finish. The last transition rule in Figure~\ref{fig:opsem} says that
a thread can step if indeed all the threads it was waiting for have finished.
After a step, the waiting relation $(\waiter)$ needs to be updated with any new waits $(\waiter')$.

The other transition rules are for the reduction of single threads.
Wait reduces by recording what it is waiting for.
Fork spawns a new child thread $b$, passing its identifier $b$ to the parent thread~$a$.

In this simple language, there is only one evaluation context, $\letin x {[-]} s$.
To evaluate here depends on which threads are reduced, spawned or finished by the expression in the context ($t$). We then continue to evaluate the let-expression
with all of the threads from $w$, each of which proceeds with its own copy of the continuation $s$; any finished threads will finish without evaluating $s$.

There are a couple of subtle points about the $\waiter$ relation.
First, a configuration $\config w\waiter \threads$ may have $b\waiter a$ for some $b$ not in $w$. Thus a thread may wait on thread IDs not in the current pool. This is to allow us to
restrict our view to particular threads, but will not happen at the top level.

Second, a configuration may have $a\waiter b$ even if both~$a$ and~$b$ are finished.
One could garbage-collect this redundant information, as any efficient implementation would do, but this is not necessary, and the metatheory is easier without it.

\begin{figure}
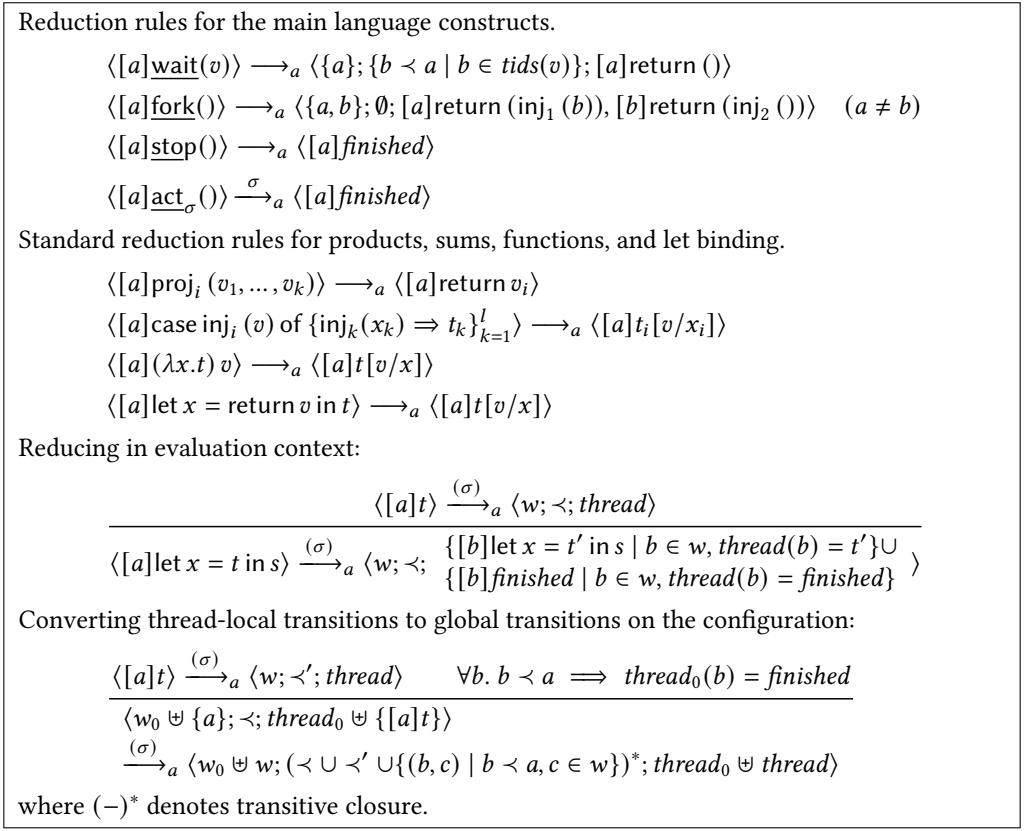

  \framebox{
    \begin{minipage}{0.95\linewidth}Reduction rules for the main language constructs.
 \begin{align*}
   &\sconfig {\gwait(v)}
  \longrightarrow_a
     \config {\{a\}}{\{b\waiter a~|~b\in\gettids(v)\}} {[a]\ret()}
\\&  \sconfig{\gfork()}
  \longrightarrow_a
   \config   {\{a,b\}}\emptyset{[a]\ret(\inj 1(b)),[b]\ret(\inj 2())}
   \quad (a\neq b)
\\   &
     \sconfig{\gsop()}
  \longrightarrow_a
  \sconfig{\stopped}
   \\&
  \sconfig {\gprintstop\sigma()}
  \xlongrightarrow\sigma_a
  \sconfig{\stopped}
\intertext{   Standard reduction rules for products, sums, functions, and let binding. }
&  \sconfig{\proj{i}(v_1,\sdots,v_k) }
  \longrightarrow_a
  \sconfig{\ret{v_i}}
\\&
   \sconfig {\raisebox{0pt}[0pt]{\(\casesum{\inj{i}(v)}{x_k}{t_k}{k}1{l}\)}}
   \longrightarrow_a \sconfig{t_i[v/x_i]}
\\&
   \sconfig{(\lambda x.t)\,v}
   \longrightarrow_a \sconfig{t[v/x]}
   \\&  \sconfig {\letin{x}{\ret{v}}{t}}
   \longrightarrow_a \sconfig {t[v / x]}
\intertext{Reducing in evaluation context:}
&  \inferrule
  { \sconfig{t} \xlongrightarrow{(\sigma)}_a \config w\waiter \threads }
  { \sconfig{\letin{x}{t}{s}}
    \xlongrightarrow{(\sigma)}_a
    \config w  \waiter {\mbox{\(\begin{array}{l}\{[b]\letin{x}{
      t'}{s}~|~b\in w,\threads(b)=t'\}
    \cup \\\{[b]\stopped~|~b\in w,\threads(b)=\stopped\}\end{array}\)}}}
\intertext{Converting thread-local transitions to global transitions on the configuration:}
&  \inferrule
  { \sconfig{t} \xlongrightarrow{(\sigma)}_a
    \config {w}{\waiter'}{\threads}
    \qquad \forall b. \ b\waiter a \implies \threads_0(b)=\stopped}
  {\mbox{\(\begin{array}{l} \config {w_0\uplus \{a\}}\waiter {\threads_0\uplus \{[a]t\}}
\\    \xlongrightarrow{(\sigma)}_a
    \config { w_0\uplus w}
    {(\waiter\cup\waiter'\cup\{(b,c)~|~b\waiter a, c\in w\})^*}
    {\threads_0\uplus \threads}\end{array}\)}}
\end{align*}
where $ (-)^*$ denotes transitive closure.
\end{minipage}}
\caption{Operational semantics for our concurrent programming language (Sec.~\ref{sec:opsem}). We write $\xlongrightarrow {(\sigma)}$ with parentheses to indicate two copies of the rule,
  one with the label and one without.\label{fig:opsem}
  Here $\gettids(a)=\{a\}$, $\gettids(v\oplus v')=\gettids(v)\cup \gettids(v')$, and
     $\gettids(\emptid)=\emptyset$.
}
\end{figure}

\subsubsection{Observation as a Labelled Poset}
We focus on true-concurrency semantics,
and so, instead of considering only linear traces,
we include the dependency order $\waiter$.
This gives a labelled poset (pomset~\cite{DBLP:journals/ijpp/Pratt86,pp-teams}, equivalently conflict-free event structure~\cite{event-structures}).
\begin{definition} \label{def:labelled-poset}
  Let $\Sigma$ be a set. A \emph{$\Sigma$-labelled poset} is a
  partially ordered set $P=(X,\leq)$ equipped with a function $\ell:X\to \Sigma$.
  (We omit $\Sigma$ where it is clear from the context.)
\end{definition}

\begin{definition}\label{def:observ-as-labell}
  A \emph{terminal configuration} $\config w\waiter \threads$ is one where all threads have finished:
  $\threads(a)=\stopped$ for all $a\in w$.

  Let $(X,\leq, \ell)$ be a labelled poset and $C$ a configuration.
  We write
  \begin{displaymath}
    C\Downarrow (X,\leq,\ell)
  \end{displaymath}
  when there is some terminal configuration $C'=\config w\waiter \threads$ such that $C$ admits a sequence of transitions
  \[
    C\longrightarrow^* \xlongrightarrow{\sigma_1}_{a_1}\longrightarrow^*\xlongrightarrow{\sigma_2}_{a_2} \cdots
    \longrightarrow^*\xlongrightarrow{\sigma_n}_{a_n}    \longrightarrow^* C' \text,
  \]
  and the following conditions on $(X,\leq,\ell)$ hold: $X=\{a_1,\dots, a_n\}$, the order on $X$ is given by $a_i\leq a_j$ iff $a_i\waiter a_j$ or $i=j$, and $\ell(a_i) = \sigma_i$.
\end{definition}
Recall from the reduction rules in~\Cref{fig:opsem} that each action $\sigma_i$ happens in a separate thread that finishes immediately.
So the set $X$ consists of all the (distinct) IDs, $a_1,\dots,a_n$, of the threads which act, $\ell$ specifies what each action is, and $\leq$ encodes the causal dependencies between actions.
In Example~\ref{ex:prog-fork-1}, the first program is related by $\Downarrow$ to the order
$(\sigma_2<\sigma_1)$, whereas the second one is related to the discrete order $\{\sigma_1,\sigma_2\}$.
See~\cite{pp-teams} for further discussion of these concurrent notions of observation.

\subsection{Operational Meta-theory}
A
well-typed program will never deadlock, and moreover it induces a unique observed labelled poset. More elaborate languages would not have these properties,
but in this simple setting they are useful for
connecting exactly with the denotational semantics in Section~\ref{sec:denot-semant-progr}.
\subsubsection{Well-Formed Configurations}
Well-typed programs never deadlock, that is, there are never two threads that are waiting for each other.
To show this, we consider well-formed configurations for which there exists a linear order, which encodes a \emph{potential creation order} of threads.
The idea is that the configuration appears as if the greatest thread is a parent thread that has itself forked \emph{all} the other threads in the configuration; the smallest thread is the child that was forked first, then the second child etc.
A child can only refer to siblings that were forked earlier, so are smaller in the order; the parent can refer to any of the children.
Note that the threads might not have been created in this (or any other) linear order,
and there may have been more complex parent-child relationships.
The operational semantics does not depend on the creation order and there may be multiple linear orders that are all consistent with a given configuration.
\begin{definition}
  Consider a configuration $C=\config w\waiter \threads$.
  Consider a linear order $<$ on $w$,  and a type $A$. Let $w'\subseteq \Tids$ be disjoint from $w$.
(The idea is that $<$ is the creation order, and $w'$ describes some threads not in the current pool, which may be useful when we are zooming in on single threads.)
  We say $C$ is \emph{well-formed} for $(A,<,w')$ if
  \begin{itemize}
  \item The waiting order $\waiter$ is transitive.
  \item A thread only waits on threads in the pool or in $w'$: if $a\prec b$ then $a\in w\uplus w'$.
  \item Threads only wait for siblings that were created earlier:
    if $a\waiter b$ and $a,b\in w$ then $a<b$.
  \item All threads that have not yet finished have type $A$, and only rely on previously created siblings:
    for all $a\in w$ we have $\cterm {\{b<a\}\uplus w'} \threads(a):A$; (the parent i.e.~greatest in $<$, can rely and wait on all its children).
  \end{itemize}\end{definition}

  \begin{proposition}[Preservation]
    Let $C_1=\config {w_1}{\waiter_1}{\threads_1}$ and
    $C_2=\config {w_2}{\waiter_2}{\threads_2}$.
    If $C_1$ is well-formed for $(A,<_1,w')$ and $C_1\xlongrightarrow{(\sigma)} C_2$
    and $w'$ is disjoint from $w_2$, then there is a linear order $<_2$ extending $<_1$
    such that $C_2$ is well-formed for $(A,<_2,w')$.
  \end{proposition}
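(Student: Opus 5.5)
The argument is by inspecting the global transition rule, which is the only rule producing transitions between configurations. So suppose $C_1 = \config{w_0\uplus\{a\}}{\waiter}{\threads_0\uplus\{[a]t\}}$ steps to $C_2=\config{w_0\uplus w}{(\waiter\cup\waiter'\cup\{(b,c)\mid b\waiter a, c\in w\})^*}{\threads_0\uplus\threads}$, using a thread-local transition $\sconfig{t}\xlongrightarrow{(\sigma)}_a\config{w}{\waiter'}{\threads}$. We first prove a thread-local lemma by induction on the derivation of this local step, i.e.\ by induction on the depth of the evaluation context $\letin x{[-]}s$. Write $u=\{b\in w_1\mid b<_1 a\}\uplus w'$ for the world in which $t$ is typed. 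The lemma says: if $\cterm{u} t:B$, then $a\in w$, the set $w\setminus\{a\}$ is fresh, and there is a linear order on $w$ with $a$ greatest such that the following hold:
\begin{itemize}
\item every $b\in w$ satisfies $\cterm{u\uplus\{c\in w\mid c<b\}}\threads(b):B$;
\item $\waiter'$ only relates elements of $u\cup w$;
\item if $c\waiter' b$ with $b,c\in w$, then $c<b$.
\end{itemize}
The base cases are the listed axioms.
\begin{itemize}
\item For $\gfork$, take $b<a$. The typing of $\inj 1 b$ in world $u\uplus\{b\}$ and of $\inj 2()$ in world $u$ is immediate.
\item For $\gwait(v)$, the new waits are $c\waiter' a$ for $c\in\gettids(v)\subseteq u$, by typing of $v$.
\item For $\gsop$ and $\gprintstop\sigma$, the only thread is $a$, which becomes $\stopped$.
\item The pure rules ($\beta$ for products, sums, functions and $\mathsf{let}$) need a substitution lemma for values, in the same world. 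This is standard.
\end{itemize}
The inductive case wraps each $t'$ in $\letin x{t'}s$. Here we need weakening of computation typing along world inclusion $u\subseteq u\uplus\{c<b\}$, which holds since the world is just threaded through the rules.

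Next, define $<_2$ on $w_0\uplus w$. Insert the new threads $w\setminus\{a\}$, in the lemma's order, immediately below $a$, so that they sit above every $b<_1 a$ and below everything above $a$. This extends $<_1$. We must check the requirement that $w'$ is disjoint from $w_2$. This is exactly the hypothesis, and it guarantees that the fresh IDs do not clash with $w'$.

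Then we check the four well-formedness clauses.
\begin{itemize}
\item Transitivity holds by construction, since the waiting relation of $C_2$ is a transitive closure.
\item For the typing clause, the old threads in $w_0$ are unchanged. For $b<_1 a$, the set of smaller siblings is unchanged. For $b>_1 a$, the set of smaller siblings has only grown, so we use weakening. The new threads are typed by the lemma, because $u\uplus\{c\in w\mid c<b\}$ equals $\{c<_2 b\}\uplus w'$.
\item For the clause that waits stay in $w_2\uplus w'$, the generating pairs are the old $\waiter$ (targets in $w_1\uplus w'$), $\waiter'$ (in $u\cup w$), and pairs $(b,c)$ with $b\waiter a$; transitive closure preserves this.
\end{itemize}

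The main obstacle is the ordering clause: if $c\waiter_2 b$ and both lie in $w_2$, then $c<_2 b$. We handle it in two steps.

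First, check this on generators.
\begin{itemize}
\item Old pairs hold because $<_2$ extends $<_1$.
\item Pairs from $\waiter'$ hold by the lemma, or because the source lies in $u$ and so sits below all of $w$.
\item For a pair $(b,c)$ with $b\waiter a$ and $c\in w$: if $b\in w_1$, then $b<_1 a$, hence $b$ sits below the whole inserted block.
\end{itemize}

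Second, propagate along chains: a chain of generators whose endpoints lie in $w_2$ but whose intermediate points may lie in $w'$. This is the delicate point. Intermediate points in $w'$ cannot occur as targets, because targets of $\waiter$, of $\waiter'$ and of the new pairs all lie in $w_1$ or $w$. So every chain link into a pool thread comes from a generator pair, and the generator inequalities compose transitively in $<_2$.

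Finally, we note that the hypothesis $\forall b.\ b\waiter a\implies\threads_0(b)=\stopped$ is not needed for preservation. Only the typing and ordering invariants are used.
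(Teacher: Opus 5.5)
Your argument is correct and is essentially the paper's proof, which is induction on the transition derivation: your thread-local lemma serves as the induction hypothesis for the evaluation-context rule, and you handle the global rule by inserting the new threads just below $a$ and checking the waiting order generator-by-generator along chains.

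Two small repairs are needed.

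\begin{itemize}
\item Freshness of $w\setminus\{a\}$ cannot be a conclusion of the local lemma, because the local $\gfork$ rule only demands $b\neq a$. Take freshness as a hypothesis of the lemma instead, and discharge it from the disjoint union $w_0\uplus w$ in the global rule together with $w'\cap w_2=\emptyset$.
\item Since $\sconfig{t}$ is itself a configuration, the single-thread rules also produce transitions covered by the statement. You can cover them by adding to the lemma that $\waiter'$ is transitive. This is immediate: $\waiter'$ is either empty or has $a$ as its only target.
\end{itemize}
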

  \begin{proof}[Proof notes]
    By induction on the derivation of transitions.
  \end{proof}
  \subsubsection{Labelled Posets Uniquely Determined from Terms of Empty Type }

\begin{proposition}\label{prop:determinacy}
  \begin{enumerate}
    Consider a term $\cterm{\emptyset}t:\tzero$ of empty type in the empty world $\emptyset$.
  \item
    The configuration $\sconfig{t}$
    reaches a terminal configuration, i.e.~there exists a labelled poset $(P,\ell)$ such that
    $
      \sconfig{t}\Downarrow (P,\ell)\text.
    $
  \item
      If
  $\quad
    \sconfig{t}\Downarrow (P_1,\ell_1)
    \quad\text{and}\quad
    \sconfig{t}\Downarrow (P_2,\ell_2)\text,\quad
  $
  then the labelled posets are isomorphic, i.e.~there is an order isomorphism
  $f:P_1\cong P_2$ such that $\ell_1(e)=\ell_2(f(e))$.
  \end{enumerate}
  \end{proposition}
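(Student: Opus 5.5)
Both parts rest on well-formedness and the Preservation proposition. The initial configuration $\sconfig t$ is well-formed for $(\tzero,<,\emptyset)$ with the trivial order on $\{a\}$, since $\cterm{\emptyset}t:\tzero$. By Preservation every reachable configuration is well-formed for $(\tzero,<',\emptyset)$ for some linear creation order $<'$.

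For part (1) I will show termination and progress. For termination, I define a measure on well-formed configurations as the multiset of sizes of the running threads' terms, where size counts a $\mathsf{let}$-continuation once per syntactic occurrence and treats $\lambda$-bodies via a standard logical-relations or normalization argument. The language has no recursion, so the simply typed fine-grain CBV fragment is strongly normalizing. The cleanest route is a Tait-style reducibility predicate on closed terms over a world, proved closed under the thread-local rules. Forking duplicates the continuation, but each copy is a strictly smaller term, so the multiset ordering decreases.

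For progress, I take a non-terminal well-formed configuration. A running thread is either a redex that steps or is blocked on $\waiter$. If every unfinished thread is blocked, I pick the $<'$-least unfinished thread $a$. All $b\waiter a$ with $b\in w$ satisfy $b<'a$ and are therefore finished, and $w'=\emptyset$ rules out external waits. So $a$ can step. A term of type $\tzero$ cannot be $\ret v$ for a closed value, so a non-finished normal thread is never stuck. Combining progress and termination, some terminal configuration is reached, and the labels along the way define a labelled poset by Definition~\ref{def:observ-as-labell}; transitivity of $\waiter$ from well-formedness makes it a partial order.

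For part (2) I will prove a confluence or diamond property, which I expect to be the main obstacle. Transitions by distinct threads $a\neq b$ commute: the resulting configurations agree up to renaming of freshly created IDs. Waiting sets only grow by edges pointing into the stepping thread's own descendants. Enabledness is preserved, because a step never un-finishes a thread, and finishing only enables others. Steps of a single thread are deterministic up to the choice of fresh name in $\gfork$. Then by a standard Newman-style argument, using the termination from part (1), any two maximal runs reach terminal configurations related by a bijective renaming $\pi$ of thread IDs that preserves $\waiter$ and maps each action-performing thread to the corresponding one with the same label.

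The delicate point is the bookkeeping of the transitive closure in the global rule. I must check that performing $a$ then $b$ yields the same $\waiter$ as $b$ then $a$. This holds because new edges $\{(c,d)\mid c\waiter a, d\in w\}$ only involve $a$'s new children. Once that is checked, $\pi$ restricted to the acting threads gives the required order isomorphism $f$ with $\ell_1(e)=\ell_2(f(e))$.
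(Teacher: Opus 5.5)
Your overall plan matches the paper's: a Tait-style Kripke logical relation for thread-local termination, deadlock-freedom from the creation order of well-formed configurations, and confluence via local determinacy plus a one-step diamond. You handle fresh names by renaming instead of the paper's deterministic naming scheme, which is fine.

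The gap is in exactly the step you flag as delicate. Your claims that ``waiting sets only grow by edges pointing into the stepping thread's own descendants'' and that the new edges are just $\{(c,d)\mid c\waiter a, d\in w\}$ are false. A step $\sconfig{\gwait(v)}\longrightarrow_a\cdots$ contributes $\waiter'=\{b\waiter a\mid b\in\gettids(v)\}$, where $b$ may be unfinished. The transitive closure $(-)^*$ in the global rule then also adds $b\waiter d$ for every $d$ with $a\waiter d$. For instance, suppose $p$ forks $a$ and executes $\gwait(a)$, so $a\waiter p$. If $a$ then forks $c$ and executes $\gwait(c)$, the step of $a$ creates the edge $c\waiter p$, which points into $a$'s parent. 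For the same reason, ``a step never un-finishes a thread'' does not show that other threads stay enabled. The real danger is that a step attaches new waits on unfinished threads to some other thread.

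What is missing is the paper's strengthened invariant, proved by induction on transition derivations: if $\config{w}{\waiter}{\threads}\xlongrightarrow{(\sigma)}_a\config{w'}{\waiter'}{\threads'}$, $c\waiter' b$ and $b\in w$, then $c\waiter b$ or $a=b$ or $a\waiter b$. Since $a$ is unfinished when it steps, no other enabled thread $e$ satisfies $a\waiter e$, so $e$ stays enabled. Symmetrically, $e$'s step leaves $\{b\mid b\waiter a\}$ unchanged, so the inherited edges are the same in both orders. Both composites then yield $(\waiter\cup E_a\cup E_e)^*$, where $E_a,E_e$ are the edges the two steps add. With this lemma your diamond and the rest of part~(2) go through.

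A smaller point on part~(1): syntactic size can grow under $\letin x{\ret v}{s}\longrightarrow s[v/x]$ and under $\beta$. So the base order for your multiset argument has to be the well-founded thread-local step relation supplied by the reducibility predicate, following either branch of a fork. With that order, the Dershowitz--Manna multiset ordering does the job of the paper's appeal to \Konig's lemma.
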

  \begin{proof}[Proof notes]
    Part~1 holds in greater generality: every reduction sequence
    starting in a well-typed term $\cterm\emptyset t : A$ is
    finite. Our proof uses a straightforward combination of Tait's
    method~\cite{taits-method} and \Konig's tree
    lemma~\cite{konigs-lemma}.  Each reduction sequence of a program
    induces a finitely-branching tree in which each branch corresponds
    to the sequential execution of a single thread that does not
    mention the other threads. These thread-local executions include
    transitions steps in which the environment changes the status of a
    known tid to $\stopped$. Each infinite reduction sequence induces
    an infinite such tree, and \Konig's tree lemma implies it has an
    infinite branch. We then use Tait's method, i.e., design an
    appropriate Kripke logical relation, that shows that in all
    well-typed programs every thread has only finite sequential
    executions. The Kripke property of the relation is with respect to
    injective relabelling of tids. We define two `value' relations:
    one indexed by types over closed values, and the other indexed by
    contexts over closed environments.  The computation relation,
    indexed by types, over closed computations states that the
    computation has no infinite reduction sequence, and whenever the
    computation evaluates to $\ret v$, the value $v$ satisfies the
    appropriate value relation. We then prove the Fundamental Property
    of these relations: every well-typed value, computation, and
    substitution maps closed environments satisfying the value
    relation to values, computations, and closed environments
    satisfying the relation.

    For part 2, we prove a confluence result.
    First, we pick a deterministic naming scheme for the fresh thread IDs introduced by $\gfork()$,
    so that fresh thread IDs are independent of the evaluation order.
    One good scheme (e.g.~\cite{mn-async-ccs}) is that a thread ID is a finite sequence of numbers, with the idea that the ID
    $(m_1m_2m_3\dots m_k m_{k+1})$ is the $m_{k+1}$th thread spawned directly by the thread $(m_1\dots m_k)$.

We then show that
\begin{enumerate}
\item If $C\xlongrightarrow{(\sigma)}_a C_1$ and $C\xlongrightarrow{(\sigma)}_a C_2$ then $C_1=C_2$; and
\item If $C\xlongrightarrow{(\sigma)}_a C_1$ and $C\xlongrightarrow{(\tau)}_b C_2$ then
  there is $C'$ such that $C_1\xlongrightarrow{(\tau)}_b C'$ and $C_2\xlongrightarrow{(\sigma)}_a C'$.
\end{enumerate}
The first is local determinacy within each thread, which is straightforwardly proved by induction on the structure of transition derivations.
The second is also proved by induction on the structure of transition derivations.
However, some care is needed that the transitive closure in the local-to-global rule for a step in a particular enabled thread does not introduce dependencies that would cause a different currently enabled thread to have to wait.
Here the key strengthening of the induction hypothesis is to show that
\[\text{If }\config w \waiter \threads\xlongrightarrow {(\sigma)}_a\config {w'}{\waiter'}{\threads'}
\text{and $c\waiter' b$ and $b\in w$ then either $c\waiter b$ or $a=b$ or $a\waiter b$.}\]
\end{proof}

\section{Parameterized Algebraic Theories, Illustrated via a New Theory of Labelled Posets}\label{sec:param-algebr-theor}

Algebraic effects are formalized using algebraic theories from universal algebra.
This section recalls the concepts of algebraic theories and their generalization, parameterized algebraic theories, along with a novel running example, the theory of labelled posets.

\subsection{Algebraic Theories}

\begin{definition}
  A (first-order finitary) \emph{algebraic signature} $\MCO = \langle |\MCO|, \ar\rangle$ is a collection of operations $|\MCO|$
  and a function $\ar: |\MCO| \to \mathbb{N}$, associating a natural number to each operation, called its \emph{arity}.
\end{definition}

  We write $\mathsf{O} : n$ for an operation $\mathsf{O}$ with arity $n$.
  A \emph{context} $\Delta = a_1,\sdots,a_n$ is a list of distinct variables.
  \emph{Terms} in a context $\Delta$ are inductively generated by:
  \begin{mathpar}
    \inferrule{ }{\Delta, a, \Delta' \vdash a}
    \and
    \inferrule{\Delta \vdash u_1 \quad \cdots \quad \Delta \vdash u_n \and \mathsf{O} : n}{\Delta \vdash \mathsf{O}(u_1, \sdots, u_n)}
  \end{mathpar}

  \begin{definition}\label{def:alg-theory}
    A (first-order finitary) \emph{algebraic theory} $\MCT = (\MCO, E)$ is a pair of an algebraic signature $\MCO$ and a set of \emph{equations} $E$, where an equation is a pair of terms in the same context, $\Delta\vdash t_1$ and $\Delta\vdash t_2$,
    which we write $\Delta\vdash t_1=t_2$.
  \end{definition}


\begin{example} \label{ex:semi-lattices}
  The algebraic theory of semi-lattices $\MCL$ has two operations: $\oplus : 2$ and $\emptid : 0$. The equations are that $\oplus$ is associative, symmetric, idempotent, and has $\emptid$ as its unit:
  \begin{mathpar}
    a,b,c \vdash (a\oplus b) \oplus c = a\oplus(b\oplus c)
    \and
    a,b \vdash a\oplus b = b\oplus a
    \and
    a \vdash a\oplus a = a
    \and
    a \vdash a\oplus \emptid = a
  \end{mathpar}
  The theory of semi-lattices is often used as a semantics for non-deterministic choice with failure~\cite{moggi_notions_1991}.
  Terms modulo equations in a context $\Delta$ correspond to  subsets of the variables from $\Delta$.
\end{example}

\subsection{Parameterized Algebraic Theories}

Parameterized algebraic theories are an extension of plain algebraic theories that allow
the binding of abstract parameters. They have been
used to axiomatize effects that involve a kind of resource, such as
new memory locations in local state and channels in the $\pi$-calculus~\cite{Staton13}.

A parameterized algebraic theory is parameterized over an ordinary algebraic theory in the sense of~\Cref{def:alg-theory}.
For the rest of the paper, we fix this ordinary algebraic theory to be the theory of semi-lattices from~\Cref{ex:semi-lattices}.
We recall the definition of parameterized algebraic theories along with a running example of a novel theory of labelled posets.

\begin{definition}
  A \emph{parameterized signature} $\MCO = \langle |\MCO|, \ar\rangle$ is a collection of operations $|\MCO|$
  along with a function $\ar: |\MCO| \to \mathbb{N}\times \mathbb{N}^*$,
  associating to each operation $\mathsf{O}$ an arity consisting of a natural number and a list of natural numbers: $\ar(\mathsf{O})=(p \mid m_1, \sdots, m_k)$.
  This means $\mathsf{O}$ takes $p$ parameters and $k$ continuations, binding $m_i$ parameters in the $i$th continuation.
\end{definition}

\begin{example}\label{exa:act-sig}
  Consider operations $\node_\sigma : (1 \mid 1)$ and  $\posetend : (0 \mid )$ where $\sigma$ ranges over a fixed set of observable actions $\Sigma$.
  The $\node_\sigma$ operation takes one free parameter and one continuation binding one parameter variable; $\posetend$ takes zero parameters and no continuations.
  A parameter stands for a term in the theory of semi-lattices.
\end{example}

A parameterized context $\Gamma \mid \Delta$ has two components: $\Delta$ is a list of \emph{parameter variables} i.e.~an ordinary algebraic context in the underlying parameterizing theory; in our case this is the theory of semi-lattices and parameters are thread IDs.
The component $\Gamma$ is a list of distinct \emph{computation variables} $\Gamma = x_1 : m_1,\sdots,x_k:m_k$, where each $x_i$ is annotated with the number $m_i$ of parameters it uses; $x_i$ is a variable for which we can substitute a term of the parameterized theory.
  We often also refer to $x_i$ as a continuation.
  Terms in context $\Gamma\vbar \Delta$ are inductively generated by:
  \begin{displaymath}
    \inferrule{\left(\Delta\vdash u_i\right)^m_{i=1}}{\Gamma, x:m,\Gamma' \vbar  \Delta \vdash x(u_1,\sdots, u_m)}
    \hspace{0.4cm}
    \inferrule
    { \left(\Delta\vdash u_i\right)^p_{i=1} \and \left(\Gamma \vbar \Delta, b_1,\sdots, b_{m_i}\vdash t_i\right)^k_{i=1} \and \mathsf{O} :(p\vbar m_1,\sdots, m_k)}
    {\Gamma \vbar \Delta \vdash \mathsf{O}(u_1,\sdots, u_p,\ b_1\sdots b_{m_1}.t_1,\ \dots,\ b_1\sdots b_{m_k}.t_k)}
  \end{displaymath}
  where $\Delta\vdash u_i$ is a term judgement in the ordinary algebraic theory of semi-lattices from~\Cref{ex:semi-lattices}.
  Both contexts admit all the usual structural rules and we treat all terms up to renaming of variables.

Using the signature from~\Cref{exa:act-sig} we can build the following terms in context:
\begin{align}
  &x: 1 \mid a \vdash \node_\sigma(a, b.x(b)) \label{eq:1}
  \\
  &x:2 \mid a_1, a_2 \vdash \node_\sigma(a_1\oplus a_2, b_1.\node_\tau(a_1, b_2.x(b_2, b_1))) \label{eq:2}
\end{align}
In the term~(\ref{eq:1}), $a$ is a free parameter while $b$ is bound in $x$.
From a concurrency perspective, we interpret $\node_\sigma(a, b.x(b))$ as
forking a new child thread that performs action $\sigma$ after the thread with ID $a$ has performed its action.
The thread ID of the child performing $\sigma$ is $b$ and the continuation $x(b)$ is executed concurrently.

The algebraic operation $\node_\sigma$ is the counterpart to the generic effect $\gnode{\sigma}$ discussed in~\Cref{sec:simple-compl-fragm}.
We can encode one in terms of the other as follows:
\begin{align*}
  &\gnode{\sigma}(a)\ =\ \node_\sigma(a, b.\ret{b}) \\
  &\node_\sigma(a, b.t)\ =\ \letin{b}{\gnode{\sigma}(a)}{t}
\end{align*}
where $a$ now stands for a set of thread IDs thanks to our use of the semi-lattice theory, and thus the type of $\gnode{\sigma}$ is $\tid\to\tid$.

The term~(\ref{eq:2}) uses the operation $\oplus$ from the theory of semi-lattices to wait on both thread IDs $a_1$ and $a_2$ before executing $\sigma$.
\Cref{fig:node-ops}~(c) is a graphical representation of the term~(\ref{eq:2}), where $a_1$ and $a_2$ are inputs at bottom, and the two parameters that $x:2$ takes are outputs at the top.
The names of bound parameters $b_1$ and $b_2$ do not appear.
The solid line signifies causal dependency.
Terms built using $\node$ and $\posetend$ contain at most one computation variable, so the name of this variable is not recorded in the graphical representation.

We define two substitution operations on terms, one for parameters variables and one for computation variables, in the standard capture-avoiding way as to admit the following rules:
\begin{equation}\label{eq:subst-rules}
  \inferrule
  {\Gamma\vbar\Delta,a\vdash t \and \Delta\vdash u}
  {\Gamma\vbar\Delta\vdash t[u/a]}
  \qquad\qquad
  \inferrule{\Gamma,x:m\vbar \Delta\vdash t \and \Gamma\vbar\Delta,b_1,\sdots, b_m\vdash s}{\Gamma\vbar\Delta \vdash t[b_1\sdots b_m.s/x] }
\end{equation}
The notation $b_1\sdots b_m.s/x$ emphasises that the bound parameters $b_1,\sdots,b_m$ in $s$ will be replaced with the parameters passed to $x$.

Below are examples of each kind of substitution.
They can be understood graphically: the first transforms \Cref{fig:node-ops}~(b) into \Cref{fig:node-ops}~(c) and the second transforms \Cref{fig:node-ops}~(c) into \Cref{fig:node-ops}~(d).
{\footnotesize
\begin{align}
  \node_\sigma(a_3, b_1.\node_\tau(a_1, b_2.x(b_2, b_1)))[a_1\oplus a_2 / a_3] &= \node_\sigma(a_1\oplus a_2, b_1.\node_\tau(a_1, b_2.x(b_2, b_1))) \label{eq:3}\\
  \node_\sigma(a_1\oplus a_2, c_1.\node_\tau(a_1, c_2.x(c_2, c_1)))[b_1b_2.y(b_1\oplus b_2) / x] &= \node_\sigma(a_1\oplus a_2, c_1.\node_\tau(a_1, c_2.y(c_2 \oplus c_1))) \label{eq:4}
\end{align}
}

\begin{definition}
  A \emph{parameterized algebraic theory} $\MCT = (\MCO, E)$ is a pair of a parameterized signature $\MCO$ and a set $E$ of equations.
  An equation is a pair of terms in the same context $\Gamma\mid \Delta$, which we write as $\Gamma\mid\Delta\vdash t_1 = t_2$.
\end{definition}

\begin{example}\label{exa:act-theory}
  The parameterized \emph{theory of labelled posets}, $\actth$, consists of the signature from~\Cref{exa:act-sig}, containing $\node_\sigma$ and $\posetend$, and of the following two equations:
  \begin{align}
    x: 2 \mid a_1, a_2 &\vdash \node_\sigma(a_1, b_1.\node_\tau(a_2, b_2.x(b_1, b_2))) = \node_\tau(a_2, b_2.\node_\sigma(a_1, b_1.x(b_1, b_2))) \label{eq:node-comm} \\
    x: 1 \mid a &\vdash \node_\sigma(a, b.x(b)) = \node_\sigma(a, b.x(a\oplus b)) \label{eq:node-trans}
  \end{align}
  The first equation states that independent actions may happen in any order.
  The second equation encodes the transitivity of causal dependencies.
  There are no equations involving $\posetend$; intuitively $\posetend$ finishes the execution of the whole program.
\end{example}

In~\Cref{sec:pres-theory-fork} we will present an extended example of a parameterized algebraic theory for forking threads, together with examples of equational reasoning in~\Cref{exa:fork-wait-perform,exa:fork-wait-act,exa:fork-wait-main-thread}.

Given a parameterized theory $\MCT = (\MCO, E)$, we can form an equivalence relation $=_\MCT$ on the terms of $\MCT$, called \emph{derivable equality}, by closing all \emph{simultaneous substitution instances} of the equational axioms from $E$ under reflexivity, symmetry, transitivity, and two congruence rules, one for variables and one for the operations in $\MCO$.
Below is the congruence rule for variables; it allows us to use the equations of the theory of semi-lattices when reasoning about parameterized terms.
\begin{displaymath} 
  \inferrule{ \left(\Delta\vdash u_i=u'_i\right)^m_{i = 1}}{\Gamma, x:m \vbar  \Delta \vdash x(u_1, \sdots, u_m)=_\MCT x(u'_1, \sdots, u'_m)}
\end{displaymath}


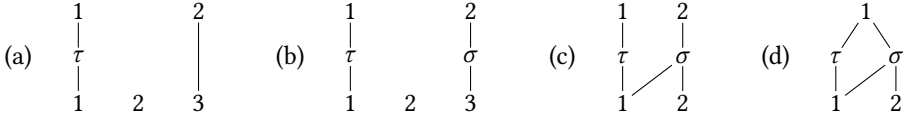
\begin{figure}
  \begin{minipage}{0.95\linewidth}
  \begin{center}
    \begin{tikzpicture}[yscale = .75, inner sep=1pt]
      \node (x1)  at (0.0, 1.6) {$1$};
      \node (x2)  at (1.6, 1.6) {$2$};
      \node (tau) at (0.0, 0.8) {$\tau$};
      \node (a1)  at (0, 0)     {$1$};
      \node (a2)  at (0.8, 0)   {$2$};
      \node (a3)  at (1.6, 0)   {$3$};

      \node (a) at (-0.8, 0.8) {(a)};

      \path[-]
        (a1)  edge (tau)
        (tau) edge (x1)
        (a3)  edge (x2);
    \end{tikzpicture}
    \hspace{0.7cm}
    \begin{tikzpicture}[yscale = .75, inner sep=1pt]
      \node (x1)  at (0.0, 1.6) {$1$};
      \node (x2)  at (1.6, 1.6) {$2$};
      \node (tau) at (0.0, 0.8) {$\tau$};
      \node (sig) at (1.6, 0.8) {$\sigma$};
      \node (a1)  at (0, 0)     {$1$};
      \node (a2)  at (0.8, 0)   {$2$};
      \node (a3)  at (1.6, 0)   {$3$};

      \node (b) at (-0.8, 0.8) {(b)};

      \path[-]
        (a1)  edge (tau)
        (tau) edge (x1)
        (a3)  edge (sig)
        (sig) edge (x2);
    \end{tikzpicture}
    \hspace{0.7cm}
    \begin{tikzpicture}[yscale = .75, inner sep=1pt]
      \node (x1)  at (0, 1.6)   {$1$};
      \node (x2)  at (0.8, 1.6) {$2$};
      \node (tau) at (0, 0.8)   {$\tau$};
      \node (sig) at (0.8, 0.8) {$\sigma$};
      \node (a1)  at (0, 0)     {$1$};
      \node (a2)  at (0.8, 0)   {$2$};

      \node (c) at (-0.8, 0.8) {(c)};

      \path[-]
        (a1)  edge (tau)
        (a1)  edge (sig)
        (tau) edge (x1)
        (a2)  edge (sig)
        (sig) edge (x2);
    \end{tikzpicture}
    \hspace{0.7cm}
    \begin{tikzpicture}[yscale = .75, inner sep=1pt]
      \node (y1)  at (0.4, 1.6)   {$1$};
      \node (tau) at (0, 0.8)   {$\tau$};
      \node (sig) at (0.8, 0.8) {$\sigma$};
      \node (a1)  at (0, 0)     {$1$};
      \node (a2)  at (0.8, 0)   {$2$};

      \node (d) at (-0.8, 0.8) {(d)};

      \path[-]
        (a1)  edge (tau)
        (a1)  edge (sig)
        (tau) edge (y1)
        (a2)  edge (sig)
        (sig) edge (y1);
    \end{tikzpicture}
  \end{center}
  \end{minipage}
  \caption{
    Graphical examples of terms built from the $\node$ operation.
    (a)~is the term $\node_\tau(a_1,b.x(b,a_3))$; numbers $1$ and $2$ at the top correspond to the two inputs of variable $x:2$.
    (b)~is the application of $\node_\sigma(a'_3, a_3.{-})$ to (a).
    (c)~is the term in~\cref{eq:2}; it is obtained from (b) by substituting  $a_1 \oplus a_2$ for $a'_3$ as in~\cref{eq:3};
    (d)~is obtained from (c) by substituting a term for the computation variable $x:2$, as in~\cref{eq:4}.
  }
  \label{fig:node-ops}
\end{figure}

\subsection{Models of Parameterized Algebraic Theories}

We recall models of parameterized algebraic theories by analogy with models for ordinary algebraic theories.
For this paper, it is sufficient to consider the case where the parameterizing theory is that of semi-lattices~(\Cref{ex:semi-lattices}), but more general notions of models exist~\cite{Staton13, Staton13PL, Staton15}.
In~\Cref{sec:model-caus-struct}, we illustrate models by considering the theory of labelled posets from~\Cref{exa:act-theory}.

\subsubsection{Connection Between the Category of Finite Sets and Relations and the Theory of Semi-lattices}\label{sec:semilat-terms-interp}

We define the objects of the category $\indexcat$ to be natural numbers $n$ and morphisms $n\rightarrow n'$ to be relations $R \subseteq [n] \times [n']$, where $[n]$ denotes the set $\{1,\sdots,n\}$.
Composition $S\circ R$ of $R \subseteq [n] \times [n']$ with $S \subseteq [n'] \times [n'']$ is the usual composition of relations.
We can also think of $\FinRel$ as the Kleisli category of the powerset monad, restricted to finite sets.

For each $p$, we can define an isomorphism between the set of (equivalence classes of) terms $\{[a_1,\sdots,a_p\vdash u]\}$ in the theory of semi-lattices and the set of morphisms $\indexcat(1,p)$.
(In fact, $\FinRel$ is the opposite category to the Lawvere theory of semi-lattices.)
\begin{mathpar}
  \den{a_1,\sdots,a_p\vdash a_i} = \{(1, i)\} \and
  \den{a_1,\sdots,a_p\vdash \emptid} = \emptyset \and
  \den{a_1,\sdots,a_p\vdash u_1\oplus u_2} = \den{a_1,\sdots,a_p\vdash u_1} \cup \den{a_1,\sdots,a_p\vdash u_2}
\end{mathpar}

\subsubsection{Models of Parameterized Theories in $\Set^{\indexcat}$}\label{sec:models-param-theor}

A model of an ordinary algebraic theory consists of a set, the carrier, together with structure for interpreting the operations in the theory, such that all equational axioms are satisfied.
We are studying theories parameterized by the algebraic theory of semi-lattices, so we will consider models where, instead of a set, the carrier is a family of sets indexed by the objects of the category $\indexcat$.

\begin{definition}\label{def:sigma-struct}
  Let $\algsig$ be a parameterized signature.
  An \emph{$\algsig$-structure} $\MCX$ is an object $X$ in the functor category $\Set^{\indexcat}$ equipped with for each operation $\mathsf{O} :(p\vbar m_1,\sdots, m_k)$ a family of functions indexed by natural numbers $n$, and respecting naturality with respect to morphisms in $\indexcat$:
  \begin{displaymath}
    \mathsf{O}_{\MCX,n} : X(n+m_1)\times\ldots \times X(n+m_k)\rightarrow X(n+p)
  \end{displaymath}
\end{definition}

Given an $\algsig$-structure $\MCX$, the interpretation of operations can be extended to all terms using the interpretation of semi-lattices terms from~\Cref{sec:semilat-terms-interp}.
A term $x_1:m_1,\sdots,x_k:m_k \vbar a_1,\sdots,a_p\vdash t$ is interpreted as a family of functions
  \begin{displaymath}
    \den{t}_{\MCX,n} : X(n+m_1)\times \ldots \times X(n+m_k)\rightarrow X(n+p)
  \end{displaymath}
  natural in $n$, defined by structural recursion.
  A computation variable
  \begin{displaymath}
    x_1:m_1,\sdots,x_k:m_k \vbar a_1,\sdots,a_p\vdash x_i(u_1,\sdots,u_{m_i})
  \end{displaymath}
  is interpreted as projection followed by the interpretation of its semi-lattice term inputs
  \begin{displaymath}
    X(n+m_1)\times\ldots\times X(n+m_k) \xrightarrow{\pi_i} X(n+m_i) \xrightarrow{X(n+[\den{u_1},\dots,\den{u_{m_i}}])} X(n+p)
  \end{displaymath}
  where $\den{u_1},\sdots,\den{u_{m_i}} :1\rightarrow p$ are morphisms in $\indexcat$.

  A term of the form $\Gamma\vbar a_1,\sdots,a_p \vdash \mathsf{O}(u_1,\sdots, u_{p'},\ b_1\sdots b_{m'_1}.t_1,\ \dots,\ b_1\sdots b_{m'_{k'}}.t_{k'})$ is interpreted as the $n$-indexed family,
  \begin{displaymath}
    X(n+[\mathsf{id}_p,\den{u_1},\sdots,\den{u_{p'}}]) \circ \mathsf{O}_{\MCX,n+p} \circ \langle \den{t_1}_{\MCX,n},\sdots, \den{t_{k'}}_{\MCX,n} \rangle\text,
  \end{displaymath}
  where $\mathsf{O} :(p'\vbar m'_1,\sdots, m'_{k'})$.
  The map $[\mathsf{id}_p,\den{u_1},\sdots,\den{u_{p'}}] : p+p'\rightarrow p$ interprets the $p'$ arguments of $\mathsf{O}$ using the parameter variables $a_1,\sdots,a_p$.

\begin{definition}\label{def:param-model}
  Let $\MCT=(\algsig,E)$ be a parameterized theory.
  A $\algsig$-structure $\MCX$ is a \emph{model} for the theory $\MCT$ if for every equational axiom from $E$, $\Gamma\vbar\Delta\vdash s=t$, and for every natural number $n$, the following functions are equal:
  \begin{displaymath}
    \den{\Gamma\vbar\Delta\vdash s}_{\MCX,n} = \den{\Gamma\vbar\Delta\vdash t}_{\MCX,n}.
  \end{displaymath}
\end{definition}

\begin{proposition}
  For a parameterized theory $\MCT$, the derivable equality $=_\MCT$ is sound: if $s=_\MCT t$ is derivable, then $\den{s}_\MCX = \den{t}_\MCX$ in any $\MCT$-model $\MCX$.
\end{proposition}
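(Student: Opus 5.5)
The proof is by induction on the derivation of $s =_\MCT t$. The induction claim is: for every $n$, $\den{s}_{\MCX,n} = \den{t}_{\MCX,n}$ as functions $X(n+m_1)\times\dots\times X(n+m_k)\to X(n+p)$.

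\emph{Equivalence closure.} Reflexivity, symmetry and transitivity are immediate, because equality of functions is an equivalence relation.

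\emph{Congruence for variables.} The premises are semi-lattice equalities $\Delta\vdash u_i = u_i'$. By the isomorphism of \Cref{sec:semilat-terms-interp}, equal semi-lattice terms have the same interpretation $\den{u_i}=\den{u_i'}$ in $\indexcat(1,p)$. So the reindexing maps $X(n+[\den{u_1},\dots,\den{u_m}])$ coincide, and hence so do the interpretations of $x(\vec u)$ and $x(\vec u')$.

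\emph{Congruence for operations.} The interpretation of $\mathsf{O}(\vec u, \vec b.\vec t)$ is a composite built from $\den{t_j}_{\MCX,n}$, the fixed map $\mathsf{O}_{\MCX,n+p}$, and a reindexing determined by the $\den{u_i}$. By the induction hypothesis the $\den{t_j}$ agree, so the composites agree. Parameter arguments are handled as in the variable case.

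\emph{Axiom instances.} Here the real work is two substitution lemmas, matching the two rules in~\eqref{eq:subst-rules}.

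\textbf{Parameter substitution.} For $\Delta, a\vdash t$ and $\Delta\vdash u$,
\[\den{t[u/a]}_{\MCX,n} = X(n+[\mathsf{id}_{p},\den u])\circ \den{t}_{\MCX,n}.\]
This is proved by structural induction on $t$. The variable case uses functoriality of $X$ together with the fact that composition in $\indexcat$ matches syntactic substitution of semi-lattice terms. The operation case uses naturality of $\mathsf{O}_{\MCX}$ in $n$, which lets the reindexing be pushed through the operation.

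\textbf{Computation substitution.} For $t[\vec b.s/x]$, the interpretation equals $\den{t}_{\MCX,n}$ precomposed with $\den{s}_{\MCX,n}$ in the $x$-component. This is also proved by structural induction on $t$. The case $t = x(\vec u)$ uses the parameter-substitution lemma applied to $s$. The operation case uses the induction hypothesis in the extended context $n+m_j$, together with the fact that $\den{s}_{\MCX}$ is natural in $n$, so weakening it to the larger context is compatible with the reindexings.

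With these lemmas in hand, an axiom instance $s[\vec u/\vec a][\vec{b.r}/\vec x]$ is interpreted as $\den{s}_{\MCX}$ composed with the interpretations of the substituted pieces. By \Cref{def:param-model}, $\den{s}_{\MCX,n}=\den{t}_{\MCX,n}$ for every $n$, so the two sides of the instance agree.

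I expect the computation-substitution lemma to be the main obstacle. Two pieces of bookkeeping need care: the index shifts $n \mapsto n+m$ under binders, and the fact that the interpretation $\den{s}_{\MCX,n}$ must itself be natural in $n$. I would therefore first prove, by a separate structural induction, that every $\den{t}_{\MCX}$ is a natural family. That induction only needs naturality of each $\mathsf{O}_{\MCX}$ and functoriality of $X$.
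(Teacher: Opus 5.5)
The paper only recalls this proposition as a standard fact about parameterized theories and gives no proof. Your overall plan is the standard argument: induction on derivations, routine equivalence and congruence cases, and two substitution lemmas driven by naturality. There is, however, a concrete failure in your computation-substitution lemma.

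Take $\Gamma=x_1:m_1,\sdots,x_k:m_k$ and let $\Delta$ have length $p$. The $x$-slot of $\den{t}_{\MCX,n}$ is $X(n+m)$, but $\den{s}_{\MCX,n}$ lands in $X(n+p+m)$. This is because the substituend $s$ lives in $\Delta,b_1,\sdots,b_m$ and may mention the ambient parameters. A computation variable $x:m$, by contrast, is interpreted with access only to the world $n$ and its own $m$ arguments. So ``precompose $\den{t}_{\MCX,n}$ with $\den{s}_{\MCX,n}$ in the $x$-component'' only type-checks when $p=0$. The induction cannot stay in that case: in the operation case the subterms $t_j$ live in $\Delta,\vec c$, with $p+m'_j$ parameters.

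Relatedly, state the parameter lemma for simultaneous substitutions: $\den{t[\vec u/\vec a]}_{\MCX,n}=X(n+\rho)\circ\den{t}_{\MCX,n}$ with $\rho=[\den{u_1},\sdots,\den{u_p}]:p\to p'$. Axiom instances are simultaneous and also need weakening and exchange, which your single-variable form with $a$ last does not cover.

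The missing idea is to let the world absorb $\Delta$ and then merge the two copies of it. For $\vec\gamma\in\prod_i X(n+m_i)$,
\[
\den{t[b_1\sdots b_m.s/x]}_{\MCX,n}(\vec\gamma)
= X(n+\nabla_p)\Bigl(\den{t}_{\MCX,n+p}\bigl(X(\iota_1)\gamma_1,\sdots,X(\iota_k)\gamma_k,\ \den{s}_{\MCX,n}(\vec\gamma)\bigr)\Bigr).
\]
Here $\iota_i:n+m_i\to n+p+m_i$ is the evident injection, $\den{s}_{\MCX,n}(\vec\gamma)\in X((n+p)+m)$ is used as the $x$-argument at world $n+p$, and $\nabla_p=[\mathsf{id}_p,\mathsf{id}_p]:p+p\to p$.

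With this form the cases go through:
\begin{itemize}
\item The case $t=x(\vec u)$ reduces to the parameter lemma for $s[\vec u/\vec b]$, since $(n+\nabla_p)\circ(n+p+[\den{u_1},\sdots,\den{u_m}])=n+[\mathsf{id}_p,\den{u_1},\sdots,\den{u_m}]$.
\item The case $y(\vec v)$ with $y\neq x$ is a similar composite of relations.
\item The operation case applies the induction hypothesis in $\Delta,\vec c$ (with $s$ weakened). It then uses naturality of $\den{t_j}_{\MCX}$ along the injection $n+p\to n+p+m'_j$ to return to world $n+p$, and finally naturality of $\mathsf{O}_\MCX$ along $n+\nabla_p$.
\end{itemize}
So the naturality lemma you plan to prove first is the right tool. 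But it is needed for $\den{t_j}$ and $\mathsf{O}_\MCX$ with respect to this world shift, not just for $\den{s}$.

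Once the lemma is stated this way, each side of an axiom instance is computed from the axiom's own interpretation at a shifted world $n+p'$. That is exactly why the model condition must hold at every $n$, and your concluding step then goes through.
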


\subsubsection{A Model of the Theory of Labelled Posets}\label{sec:model-caus-struct}

To illustrate the notion of model from the previous section, we build a model for the parameterized algebraic theory of labelled posets from~\Cref{exa:act-theory}.
To do this we generalize the notion of $\symb$-labelled poset from~\Cref{def:labelled-poset}.

\begin{definition}\label{def:io-lab-poset}
  An \emph{$n$-input $m$-output $\symb$-labelled poset} $P = \langle V_P, \leq_P, l_P\rangle$ consists of a set $V_P$ of elements labelled by a function $l_P:V_P\rightarrow\symb$, and a partial order $\leq_P$ on the set $[n]\uplus V_P \uplus [m]$, such that the $n$ input elements are minimal and the $m$ output elements are maximal.
\end{definition}

  Examples of such posets appear in~\Cref{fig:node-ops}, with inputs at the bottom and outputs at the top.
  If there are no inputs and outputs,~\Cref{def:io-lab-poset} reduces to that of an ordinary $\symb$-labelled poset.
  An isomorphism between two $n$-input $m$-output $\symb$-labelled posets $P$ and $Q$ is a bijection $f: V_P \to V_Q$ that preserves the labels,
  and such that $\id_{[n]} \uplus f \uplus \id_{[m]}$ preserves and reflects the order.

For each natural number $n$, define the set $\noderepm m(n)$ to contain \emph{isomorphism classes of $n$-input $m$-output $\Sigma$-labelled posets}.
Given a relation $R\subseteq [n]\times [n']$, $\noderepm{m}(R)$ acts on a labelled poset by updating $i \leq e$ to $i' \leq e$ if $(i,i') \in R$.
The poset in~\Cref{fig:node-ops}~(c) is obtained from~\Cref{fig:node-ops}~(b) via this action.

For each natural number $m$, we equip $\noderepm m$ with a family of operations $\node_{\sigma,m}$, one for each $n$:
\begin{displaymath}
  (\node_{\sigma,m})_n : \noderepm m(n+1) \to \noderepm m (n+1)
\end{displaymath}
which given a labelled poset, labels its $(n+1)$-th input by $\sigma$ and creates a new $(n+1)$ input just below $\sigma$.
The poset in~\Cref{fig:node-ops}~(b) is the result of applying $(\node_{\sigma,2})_3$ to~\Cref{fig:node-ops}~(a).

\paragraph{Remark}
Labelled posets can be organized into a PROP (see~\cite{MARKL200887}),
where morphisms $n\to m$ are labelled posets $\noderepm{m}(n)$, identities are given by the poset with no labelled elements,
composition ``plugs'' the outputs of a poset into the inputs of another,
and monoidal composition is juxtaposition.
This categorical formulation was valuable for proving~\Cref{prop:lab-poset-model} and~\Cref{thm:lab-poset-free-model}.
The same PROP appears in~\cite{Fiore2013} as the PROP of L-labelled transitive idags, where it is characterized as being free on a suitable equational theory.
In~\cite{DBLP:journals/corr/Mimram15}, this PROP is shown to be the monoidal category presented by a certain rewriting system.
Similar categorical ideas appear elsewhere, e.g.~\cite{bsz-sig-flow,baez-prop-network,iposets},
but typically with a first-order emphasis, whereas we are aiming at a semantics for
programming language via monads~(\S\ref{sec:denot-semant-progr}).

  For each context $\Gamma$ of computation variables, we construct a functor $\noderep{\Gamma}:\FinRel\to\Set$ that will be the carrier of a model of the theory of labelled posets, in the sense of~\Cref{def:sigma-struct}.
  For each natural number $n$, define the set
  \begin{displaymath}
    \noderep{\Gamma}(n) = \biguplus_{x:m\in\Gamma} \noderepm m(n) \uplus \noderepm{0}(n).
  \end{displaymath}
  We equip $\noderep\Gamma(n)$ with an operation $(\node_{\sigma})_n : \noderep\Gamma(n+1) \to \noderep\Gamma(n+1)$ by applying $(\node_{\sigma,m})_n$ pointwise.
  Let $\posetend_{n}: 1 \to \noderep\Gamma(n)$ be the function that selects, from the right injection $\noderepm{0}(n)$, the labelled poset with only the $n$ inputs as elements and with discrete order.

\begin{proposition}\label{prop:lab-poset-model}
  For each context $\Gamma$, the functor $\noderep{\Gamma}$, together with the natural transformations $\node_\sigma$ and $\posetend$ defined above,
  is a model of the parameterized theory of labelled posets from~\Cref{exa:act-theory}.
\end{proposition}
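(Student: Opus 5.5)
I would check three things in order: that $\noderep{\Gamma}$ is a functor $\FinRel\to\Set$, that $\node_\sigma$ and $\posetend$ are natural in $n$, and that each interpreted side of the two axioms \eqref{eq:node-comm} and \eqref{eq:node-trans} agrees. Throughout I would use the PROP view from the Remark. An element of $\noderepm m(n)$ is a morphism $n\to m$ in the PROP of labelled posets with inputs and outputs. A relation $R\subseteq[n]\times[n']$ acts by precomposition with the unlabelled poset on $[n']\uplus[n]$ that has $i'\le i$ whenever $(i,i')\in R$, followed by transitive closure. Inputs of the result stay minimal, because no edges are added into them. Identities and composition in $\FinRel$ are then preserved, because taking the transitive closure of a composite of bipartite orders gives the same relation as composing the relations first. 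Isomorphism classes are respected, because the action never touches $V_P$.

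\emph{Naturality.} Unfolding Definition~\ref{def:sigma-struct}, the operation $(\node_\sigma)_n:\noderep\Gamma(n+1)\to\noderep\Gamma(n+1)$ must commute with $\noderep\Gamma(R+\id_1)$ for every $R:n\to n'$. In PROP terms, $\node_\sigma$ precomposes with $\id_n\otimes g_\sigma$, where $g_\sigma$ is the one-element poset $1\to 1$ labelled $\sigma$. Reindexing precomposes with $R\otimes\id_1$. These two act on disjoint input wires, so they commute by the interchange law. Naturality of $\posetend$ holds because the discrete poset with $n$ inputs is sent to the discrete poset with $n'$ inputs: the discrete poset is essentially the empty diagram $n\to 0$, and precomposing it with $R$ adds no order.

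\emph{Axiom \eqref{eq:node-comm}.} I would compute both sides on a poset $P\in\noderepm m(k+2)$, plus the trivial $\noderepm 0$ summand. The left side adds $\sigma$ below output wire $b_1$ and $\tau$ below $b_2$. It then fuses the new fresh inputs onto $a_1$ and $a_2$ via the reindexing $[\id,\den{a_1}]$ and $[\id,\den{a_2}]$. The right side does the same in the other order. Both yield the poset obtained from $P$ by adding a $\sigma$-element below input $k+1$ with new input $a_1$ below it, and a $\tau$-element below input $k+2$ with $a_2$ below it. This is the tensor $g_\sigma\otimes g_\tau$ precomposed, so the two sides are equal again by interchange.

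\emph{Axiom \eqref{eq:node-trans}.} This is the main obstacle, since here transitive closure genuinely matters. The right side reindexes $P$'s input $b$ along the relation $\{b\mapsto a, b\mapsto b\}$ before $\node_\sigma$ is applied. So in the result, every element above input $b$ also becomes above input $a$; after applying $\node_\sigma$, these are exactly the elements above the new $\sigma$-node. The left side already has these elements above $a$: they lie above $\sigma$, and $\sigma$ lies above $a$, so transitivity of $\le$ gives it. I would check that adding the edges $a\le e$ for all $e\ge\sigma$ to a transitively closed order changes nothing, and that no new relations among inputs arise, so minimality is kept. Hence both sides give the same isomorphism class. Finally, since $\noderep\Gamma$ is a disjoint union and the operations act summandwise, the axioms hold on all of $\noderep\Gamma(n)$, which proves Proposition~\ref{prop:lab-poset-model}.
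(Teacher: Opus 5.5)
Your proof is correct and takes essentially the route the paper indicates. The paper gives no detailed argument, only remarking that the PROP formulation of labelled posets was what it used, and your treatment of reindexing and $\mathsf{node}_\sigma$ as precomposition, with the interchange law giving naturality and \eqref{eq:node-comm}, together with your direct transitivity check for \eqref{eq:node-trans}, is exactly that argument carried out.
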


\subsection{Free Models of Parameterized Algebraic Theories}\label{sec:free-models}

We now return to the study of models of parameterized algebraic theories in general.
Using the evident notion of homomorphism between $\algsig$-structures, we can discuss $\MCO$-structures and $\MCT$-models that are \emph{free} over a collection $X\in\Set^\indexcat$ of generators.

\begin{definition}\label{def:free-model}
  Consider a $\MCT$-model $\MCY$ with carrier $Y\in\Set^{\indexcat}$ and a morphism $\eta_X:X\rightarrow Y$ in $\Set^{\indexcat}$.
  The model $\MCY$ is \emph{free on $X$} if for any other model $\MCZ$ and any morphism $f:X\rightarrow Z$ in $\Set^\indexcat$,
  there exists a unique homomorphism of models $\hat{f}:\MCY\rightarrow\MCZ$ that extends $f$, meaning $\hat{f}\circ\eta_X=f$ in $\Set^{\indexcat}$.
\end{definition}

Given a context of computation variables $\Gamma$, consider the functor $V_\Gamma$ where:
\begin{displaymath}
  V_\Gamma(n)= \left\{[x(u_1,\sdots,u_{m})]_{=_\MCT}\ \big|\ \Gamma\vbar a_1,\sdots,a_n \vdash x(u_1,\sdots,u_{m}) \right\}.
\end{displaymath}
The equivalence relation on terms in $V_\Gamma$ is non-trivial because the parameter terms $u_i$ are quotiented by the semi-lattice equations.

The \emph{term model} is given by the functor $F_\MCT(V_\Gamma)$, which contains equivalence classes of terms:
\begin{equation}\label{eq:equiv-clas-terms}
  F_\MCT(V_\Gamma)(n)=\left\{[t]_{=_\MCT} \ \big| \ \Gamma\vbar a_1,\sdots,a_n\vdash t \right\}
\end{equation}
The action on morphisms $n\rightarrow n'$, which are relations $R \subseteq [n]\times [n']$, is given by substitution of parameters.
The functor $F_\MCT(V_\Gamma)$ can be made into a $\MCT$-model using the syntactic term formation rules,
and we can construct a morphism $\eta_{V_\Gamma}:V_\Gamma\rightarrow F_\MCT(V_\Gamma)$ by embedding variables into terms.
We use the term model to prove the completeness result below.

\begin{proposition}
  ~
  \begin{enumerate}
  \item The functor $F_\MCT(V_\Gamma)$ is a $\MCT$-model and is moreover a free $\MCT$-model on $V_\Gamma$.
  \item The derivable equality $=_\MCT$ in a parameterized algebraic theory is complete: if an equation is valid in every $\MCT$-model then it is derivable in $=_\MCT$.
  \end{enumerate}
\end{proposition}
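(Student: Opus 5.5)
The plan is the standard Lindenbaum–Tarski argument, adapted to the functor category $\Set^{\indexcat}$. For part (1) I first check that $F_\MCT(V_\Gamma)$ is a well-defined functor. The action of a relation $R\subseteq[n]\times[n']$ substitutes for each $a_i$ the semi-lattice term $\bigoplus_{(i,j)\in R} a'_j$. This respects $=_\MCT$, because derivable equality is closed under simultaneous substitution of parameters, which I prove by induction on derivations. Functoriality then reduces to the fact that composing such substitutions corresponds to composing relations; this holds by the isomorphism $\indexcat(1,p)\cong$ semi-lattice terms from \Cref{sec:semilat-terms-interp}, together with the congruence rule for variables.

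Next I define the operations. For $\mathsf{O}:(p\vbar m_1,\sdots,m_k)$, the map $\mathsf{O}_n$ sends classes $[t_i]\in F_\MCT(V_\Gamma)(n+m_i)$ to the class of $\mathsf{O}(a_{n+1},\sdots,a_{n+p}, \vec b.t_1,\sdots,\vec b.t_k)$ in context $a_1,\sdots,a_{n+p}$. This is well defined by the congruence rule for operations, and it is natural because parameter substitution commutes with term formation.

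To see that this structure is a model, I prove a key lemma by induction on a term $s$ in context $x_1:m_1,\sdots \vbar a_1,\sdots,a_p$: the interpretation $\den{s}_{F,n}$, applied to classes $[t_j]$, equals the class of $s[\vec b.t_j/x_j][\,\mathrm{rename}\,]$. In words, semantic interpretation in the term model is syntactic substitution. Each axiom $s=t$ of $E$ then holds, since substitution instances of axioms are derivable.

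For freeness, let $\MCZ$ be any model and $f:V_\Gamma\to Z$ any morphism. I define $\hat f_n[t]=\den{t}_{\MCZ}$, evaluated with each $x:m$ sent to $f_m[x(b_1,\sdots,b_m)]\in Z(m)$, reindexed along $n$ using weakening, which is a map in $\indexcat$. This is well defined by the soundness proposition, and it is natural and a homomorphism by the substitution lemma. Uniqueness holds because every term is generated from the variables $x(\vec u)$ by the operations and reindexing. Those variable terms are themselves images of the generic $x(b_1,\sdots,b_m)$ under $V_\Gamma(R)$, so a homomorphism extending $f$ is forced on all of them. I expect the main obstacle to be the bookkeeping in the substitution lemma: matching the $X(n+[\id_p,\den{u}])$ reindexing in the interpretation clause with capture-avoiding parameter substitution, especially the shifting of bound parameters $b_1,\sdots,b_{m_i}$. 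I would handle this by first proving that renaming and substitution of parameters coincide with the $\indexcat$-action on both sides.

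Part (2) then follows immediately. Suppose $\Gamma\vbar a_1,\sdots,a_n\vdash s=t$ holds in every model; in particular it holds in $F_\MCT(V_\Gamma)$ at stage $0$. I evaluate both sides at the generic environment $x\mapsto[x(b_1,\sdots,b_m)]$. By the substitution lemma, $\den{s}$ yields $[s]$ and $\den{t}$ yields $[t]$, so $[s]=[t]$, that is, $s=_\MCT t$.
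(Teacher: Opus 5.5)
Your proposal is correct and is essentially the standard term-model argument the paper relies on. The paper states the result without a detailed proof, noting only that $F_\MCT(V_\Gamma)$ is made a model via the term-formation rules, that $\eta_{V_\Gamma}$ embeds variables, and that the term model yields completeness. Two small bookkeeping points: evaluating $\den{t}_{\MCZ,0}$ at the generic environment already lands in $Z(n)$, so no weakening is needed to define $\hat f_n$; and the observation that $[x(\vec u)]$ is the image of the generic $[x(b_1,\sdots,b_m)]$ belongs in the existence part, where via naturality of $f$ it gives $\hat f\circ\eta_{V_\Gamma}=f$, whereas for uniqueness $h\circ\eta_{V_\Gamma}=f$ already fixes $h$ on variables directly.
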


We can now characterize the labelled posets model from~\Cref{prop:lab-poset-model} using the universal property of a free model.
In particular, equivalence classes of closed terms $\{[\cdot\mid \cdot \vdash t]_{=_\MCC}\}$ built from $\node$ and $\posetend$~(\Cref{exa:act-theory}) are in bijection with ordinary $\symb$-labelled posets.

\begin{theorem}\label{thm:lab-poset-free-model}
  For each context $\Gamma$, the functor $\noderep{\Gamma}$ together with the natural transformations $\node_\sigma$ and $\posetend$ is isomorphic to the free model $F_\actth(V_\Gamma)$ of the theory of labelled posets from~\Cref{exa:act-theory}.
\end{theorem}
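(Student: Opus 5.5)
The model $\noderep\Gamma$ of \Cref{prop:lab-poset-model} receives a unique homomorphism from the free model, and I plan to show that this homomorphism is a bijection at every $n$.

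Define $\eta:V_\Gamma\to\noderep\Gamma$ by sending $x(u_1,\dots,u_m)$, for $x:m\in\Gamma$ and context $a_1,\dots,a_n$, to the $n$-input $m$-output poset in the $x$-summand that has no labelled elements and has $i\le j$ (input $i$ below output $j$) exactly when $a_i$ occurs in $u_j$. This is well defined because semi-lattice terms are sets of variables, and it is natural in $n$ because the relational action just composes relations. By freeness (the Proposition on term models) we get a unique homomorphism
\[
  h:F_\actth(V_\Gamma)\to\noderep\Gamma .
\]
Concretely, $h$ interprets a term: each $\node_\sigma$ adds a $\sigma$-labelled element above the inputs named in its parameter, $\posetend$ gives a poset in the $\noderepm 0$ summand, and a variable occurrence connects to outputs. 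It remains to show that each $h_n$ is bijective; a bijective homomorphism is an isomorphism of models.

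\emph{Surjectivity.} Take an $n$-input $m$-output poset $P$ in the $x$-summand. Choose a linear extension $e_1,\dots,e_k$ of $V_P$ and form
\[
  \node_{\ell(e_1)}(u_1,\,b_1.\node_{\ell(e_2)}(u_2,\,b_2.\cdots x(v_1,\dots,v_m)\cdots)).
\]
Here $u_j$ is the join of the inputs and the $b_i$ ($i<j$) lying below $e_j$, and $v_r$ is the join of everything below output $r$. For the $\noderepm 0$ summand, use $\posetend$ in place of $x$. Because $P$ is already transitive, $h$ sends this term to $P$.

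\emph{Injectivity (the main obstacle).} Show that every term is provably equal to such a normal form. A term in this fragment is a chain of $\node$'s ending in a variable or $\posetend$, so it has this shape already, except that its parameters need not be transitively closed. Using axiom~(\ref{eq:node-trans}) together with congruence, proceed inductively from the outermost $\node$ inwards: any later occurrence of $b$ may be replaced by $a\oplus b$. This closes every parameter downward, giving a saturated normal form whose parameter sets are exactly the down-sets of $h(t)$.

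Now suppose two saturated normal forms have the same image under $h$. They use the same labelled elements with the same down-sets and differ only in the chosen linear extension. Any two linear extensions of a finite poset are connected by adjacent transpositions of incomparable elements. Each such transposition is an instance of axiom~(\ref{eq:node-comm}), valid because the inner parameter does not mention the outer bound variable exactly when the two elements are incomparable.

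The delicate bookkeeping is that an isomorphism of posets must match elements together with their labels and outputs. Fixing the isomorphism $f$ and transporting one linear extension along $f$ handles this. Hence $h_n$ is injective, and $h$ is an isomorphism of models.
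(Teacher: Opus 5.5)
Your proposal is correct and follows essentially the same route as the paper. Both take the unique homomorphism out of the free model, linearize posets into nested $\node_\sigma$ terms, use axiom~(\ref{eq:node-trans}) to bring every term into such a saturated linearized form, and use axiom~(\ref{eq:node-comm}) to show the choice of linear extension does not matter. The one difference is packaging: you prove $h$ is bijective instead of exhibiting the linearization as an explicit inverse, which spares you a separate naturality check for that inverse.
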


\begin{proof}[Proof notes]
  An interpretation homomorphism $F_\actth(V_\Gamma) \to \noderep{\Gamma}$ is given by the unique map from the free model.
  We define an inverse map $\noderep{\Gamma} \to F_\actth(V_\Gamma)$ that linearizes a labelled poset into a nested $\node_\sigma(u, a.{-})$ term, with one $\node_\sigma$ operation for each element of the poset labelled by action $\sigma$.
  Fixing a poset element, all the elements preceding it in the poset order are encoded by the compound thread ID $u$.
  Equation~\eqref{eq:node-comm} ensures the map $\noderep{\Gamma} \to F_\actth(V_\Gamma)$ is independent of the choice of linearization, and equation~\eqref{eq:node-trans} ensures that all terms are $(=_\actth)$-equal to a term in the image of this map, by asking that causal dependencies are transitively closed.
  The proof strategy is similar to \Cref{thm:main-theorem}, for which we provide more detail.
\end{proof}

\section{A Parameterized Algebraic Theory of Dynamic Threads}\label{sec:pres-theory-fork}

In this section we introduce an equational axiomatization for the concurrency features from~\Cref{sec:background} ($\gfork$, $\gwait$, $\gsop$ and $\gprintstop{\sigma}$) as a parameterized algebraic theory.
In~\Cref{sec:rep-theorem} we interpret this theory semantically, using labelled posets similar to those from~\Cref{sec:model-caus-struct}.
Then in~\Cref{sec:denot-semant-progr} we extend the semantics to model the whole concurrent programming language from~\Cref{sec:background}.

\subsection{Presentation of the Theory}\label{sec:presentation-theory}

\subsubsection{Signature}
We introduce a \emph{theory of dynamic threads} $\forkth$, parameterized by the theory of semi-lattices, with the following signature, where $\sigma$ ranges over a fixed set of observable actions $\Sigma$:
  \begin{mathpar}
    \fork :(0 \vbar 1,0) \and \wait:(1\vbar 0) \and \sop:(0\vbar) \and \act{\sigma}:(0 \vbar)
  \end{mathpar}

In the term $\fork(a.x(a),y)$ the variable $x$ is the parent thread, while $y$ is the child thread; the parameter $a$ is the thread ID of the child $y$ and is bound in $x$.
The parent might wait for the child named $a$ to finish, then continue as $z$, using the operation $\wait(a,z)$.
The operation $\sop$ has no continuation and indicates that the current thread has finished execution; $\act{\sigma}$ performs the observable action $\sigma$ and finishes.
Parameters carry a semi-lattice structure, so it is possible to wait on a compound thread ID, e.g.~$\wait(a_1\oplus a_2,z)$ waits for both $a_1$ and $a_2$, or on no thread ID at all, $\emptid$.

\begin{example}\label{exa:fork-wait-signature}
   The term $t_1$ encodes sequential execution of action $\sigma_1$ followed by $\sigma_2$, while $t_2$ and $t_3$ encode concurrent execution of $\sigma_1$ and $\sigma_2$:
  \begin{mathpar}
    t_1 = \fork(a.\wait(a,\act{\sigma_2}),\act{\sigma_1})
    \and
    t_2 = \fork(a.\act{\sigma_2},\act{\sigma_1})
    \and
    t_3 = \fork(a.\act{\sigma_1},\act{\sigma_2})
  \end{mathpar}
  However, $t_2$ and $t_3$ have slightly different intended semantics.
  In the term $\fork(b.\wait(b,\act{\tau}),t_2)$ the ID $b$ refers only to the thread $\act{\sigma_2}$ and not to its child $\act{\sigma_1}$, so a possible execution is $\sigma_2\tau\sigma_1$.
  But this is not possible in $\fork(b.\wait(b,\act{\tau}),t_3)$ because here $\sigma_1$ must happen before $\tau$.

  More generally, in the expression $\fork(a.x(a),y)$ we often refer to $x$ as the \emph{main thread} because its ID is available to the environment to wait on, while the ID of $y$ is only available to $x$.

\end{example}

\subsubsection{Equations}
The equational axioms for the theory of dynamic threads appear in~\Cref{fig:fork-wait-eqs}.
There are no equations involving observable actions $\act{\sigma}$.
\Cref{eq:wait-closed} states that if $x$ is waiting for $a$ to finish, then waiting for $a$ in the future is redundant.
Commutativity of fork,~\cref{eq:fork-comm}, holds only if the children $y$ and $z$ do not use each other's IDs.
Similarly, associativity,~\cref{eq:fork-assoc}, holds if the parent $x$ does not use the ID $b$ of $z$.
\Cref{eq:child-becomes-main} says that forking a child $x$ and waiting for it to finish is the same as running $x$ as the main thread.
The wait is necessary because it forces the environment to wait on $x$ even when it is executed as a child thread.
\Cref{eq:fork-sub} removes a child that does not perform any observable action; it involves a substitution of parameter $b$ for $a$ in $x$.

\begin{figure}
  \framebox{
  \begin{minipage}{0.95\linewidth}
  Equations describing the interaction of $\wait$ with the semi-lattice structure of thread IDs.
  \begin{align}
    x:0 \vbar - &\vdash \wait(\emptid, x) = x \label{eq:wait-neutral-elem}\\
    x:0\vbar a,b &\vdash \wait(a, \wait(b, x)) = \wait(a \oplus b, x) \label{eq:wait-accumulates}\\
    x:1\vbar a,b &\vdash \wait(a, x(b)) = \wait(a, x(a\oplus b)) \label{eq:wait-closed}
  \end{align}
  The $\wait$ and $\fork$ operations commute; $\fork$ is commutative and associative.
  \begin{align}
    x:1,y:0\vbar b &\vdash \wait(b, \fork(a.x(a), y)) = \fork(a.\wait(b, x(a)), \wait(b, y)) \label{eq:fork-wait-comm}\\
    x:2,y:0,z:0\vbar -&\vdash\fork(a.\fork(b.x(a,b), y), z) = \fork(b.\fork(a.x(a,b), z), y) \label{eq:fork-comm} \\
    x:1,y:1,z:0 \vbar - &\vdash\fork(a.x(a), \fork(b.y(b), z)) = \fork(b.\fork(a.x(a), y(b)), z) \label{eq:fork-assoc}
  \end{align}
  The $\sop$ operation acts as a unit for $\fork$.
  \begin{align}
    x:0 \vbar - &\vdash\fork(a.\wait(a, \sop), x) = x \label{eq:child-becomes-main} \\
    x:1\vbar b &\vdash \fork(a.x(a), \wait(b, \sop)) = x(b) \label{eq:fork-sub}
  \end{align}
\end{minipage}
}
  \caption{Equations for the parameterized algebraic theory of dynamic threads $\forkth$.}
  \label{fig:fork-wait-eqs}
\end{figure}

\begin{example}\label{exa:fork-wait-perform}
  We can use $\act{\sigma}$ to encode an operation $\perform{\sigma}(x)$ which executes action $\sigma$ then continues as $x$.
    We can recover $\act{\sigma}$ from $\perform{\sigma}(x)$ by setting $x$ to be $\sop$ and using~\cref{eq:child-becomes-main}.
  \begin{displaymath}
    \perform{\sigma}(x) \defeq \fork(a.\wait(a,x),\act{\sigma})
  \end{displaymath}
  The algebraic operations $\act{\sigma}$ and $\perform{\sigma}$ are the counterpart to the generic effects $\gprintstop{\sigma}$ and $\gprint{\sigma}$ discussed in~\Cref{sec:alt-lang}.
\end{example}

\begin{example}\label{exa:fork-wait-act}
  The $\node_\sigma$ operation from~\Cref{exa:act-theory} can be encoded as:
  \begin{displaymath}
    \node_\sigma(a,b.x(b)) \defeq \fork(b.x(b),\wait(a,\act{\sigma}))
  \end{displaymath}
  \Cref{eq:node-comm} for $\node_\sigma$ amounts to commutativity of $\fork$~(\cref{eq:fork-comm}), while~\cref{eq:node-trans} can be derived:
  \begin{align*}
    \node_\sigma(a,b.x(b)) &= \fork(b.x(b),\wait(a,\fork(c.\wait(c,\sop), \act{\sigma})))  \tag{\cref{eq:child-becomes-main}} \\
                                        &= \fork(b.x(b),\fork(c.\wait(a\oplus c,\sop), \wait(a,\act{\sigma}))) \tag{\cref{eq:fork-wait-comm} and~\cref{eq:wait-accumulates}} \\
                                          &= \fork(c.x(a\oplus c),\wait(a,\act{\sigma})) \tag{\cref{eq:fork-assoc,eq:fork-sub}}
  \end{align*}
\end{example}

\begin{example}\label{exa:fork-wait-main-thread}
  In the term $t=\fork(a.\wait(a,\act{\sigma_1}),\fork(b.\sop,\act{\sigma_2}))$ thread ID $a$ only refers to the thread $\sop$ and not to its child $\act{\sigma_2}$:
  \begin{align*}
    t &= \fork(b.\fork(a.\wait(a,\act{\sigma_1}),\sop),\act{\sigma_1}) & \tag{\cref{eq:fork-assoc}}\\
                                                                  &= \fork(b.\fork(a.\wait(a,\act{\sigma_1}),\wait(\emptid,\sop)),\act{\sigma_1}) &\tag{\cref{eq:wait-neutral-elem}}\\
    &=\fork(b.\act{\sigma_1},\act{\sigma_2}) & \tag{\cref{eq:fork-sub} and~\cref{eq:wait-neutral-elem}}
  \end{align*}
\end{example}

\subsection{Normal Forms}\label{sec:normal-forms}

In this section, we identify a convenient subclass of terms, which we refer to as normal forms.
We show that all the terms in the theory of dynamic threads $\forkth$ are equal to a normal form, up to the derivable equality $=_\forkth$.
We define a normal form to be a term in context of the form:
\begin{equation}\label{eq:normal-form}
      \Gamma \vbar a_1,\sdots,a_n \vdash \fork(b_1.\sdots\fork(b_p.\wait(u_{p+1},\sop),\wait(u_p,t_p)),\sdots\wait(u_1,t_1))
\end{equation}
where each subterm $t_i$ is either an observable action $\act{\sigma}$ or a variable $x(u_{i1},\sdots,u_{im})$, for some $x:m$ in $\Gamma$.
We also require that the parameters (i.e.~compound thread IDs) $u_1,\sdots,u_{p+1}$, and the parameters occurring in each $t_i=x(u_{i1},\sdots,u_{im})$ satisfy closure conditions explained below.

Informally, a normal form consists of a parent that forks $p$ child threads, waits on some collection of thread IDs, $u_{p+1}$, then finishes.
A child must perform exactly one action, or be a free computation variable.
Thanks to the term formation rules, $u_i$ can only use thread IDs $b_1,\sdots,b_{i-1}$ that have been forked earlier, or thread IDs from the context $a_1,\sdots,a_n$.

\begin{example}\label{exa:normal-forms-ex1}
  The term $\fork(b_1.\wait(b_1,\act{\sigma_2}),\act{\sigma_1})$, from~\Cref{exa:fork-wait-signature}, is not in normal form but it is $(=_\forkth)$-equal to the following normal form:
  \begin{displaymath}
    \mathsf{nf}_1=\fork(b_1.\fork(b_2.\wait(b_1\oplus b_2,\sop),\wait(b_1,\act{\sigma_2})),\act{\sigma_1})
  \end{displaymath}
  To show this, use~\cref{eq:child-becomes-main} followed by~(\ref{eq:fork-wait-comm}) and ~(\ref{eq:wait-accumulates}) to show the subterm $\wait(b_1,\act{\sigma_2})$ equals
  \begin{displaymath}
     \wait(b_1,\fork(b_2.\wait(b_2,\sop),\act{\sigma_2})) = \fork(b_2.\wait(b_1\oplus b_2,\sop),\wait(b_1,\act{\sigma_2})).
  \end{displaymath}
\end{example}

\subsubsection{Closure Conditions for Normal Forms}\label{sec:clos-cond-norm}
The closure conditions that a term of shape~(\ref{eq:normal-form}) needs to satisfy to be a normal form are that: if ID $b_j$ appears in $u_i$, then all the IDs in $u_j$ are included in $u_i$; the analogous condition when $b_j$ appears in $u_{ik}$, where $t_i=x(u_{i1},\sdots,u_{im})$; and the IDs in $u_i$ must appear in each $u_{ik}$.
Imposing these closure conditions means that a normal form contains redundant information about dependencies between different threads, but this will help us formulate a correspondence between normal forms and labelled posets, in~\Cref{sec:lab-poset-as-nf} and~\Cref{thm:main-theorem}.

\begin{example}
  The normal form from~\Cref{exa:normal-forms-ex1} satisfies these closure conditions, as does the following term, with free IDs $a_1$ and $a_2$:
  \begin{equation}\label{eq:nf-ex-2}
    \mathsf{nf}_2=\fork(b_1.\fork(b_2.\wait(b_1\oplus b_2\oplus a_1,\sop),\wait(b_1\oplus a_1,x(b_1\oplus a_1\oplus a_2))),\wait(a_1,\act{\sigma_1}))
  \end{equation}
\end{example}

\begin{definition}\label{def:normal-form}
  Fix a context of computation variables $\Gamma$.
  For each natural number $n$, define the set $\nfset{\Gamma}(n)$ to contain normal forms, i.e.~terms of shape~(\ref{eq:normal-form}) respecting the closure conditions above, quotiented by: the equivalence relation generated by reordering of child threads that do not depend on each other, and by the semi-lattice equations on compound thread IDs.
  Moreover, $\nfset{\Gamma}$ has a functorial action on relations $R\in [n]\times[n']$ given by parameter substitution.
\end{definition}

This definition implies that two representatives of the same equivalence class of normal forms are also $(=_\forkth)$-equal in the theory of dynamic threads~(\Cref{fig:fork-wait-eqs}) because the reordering of independent child threads corresponds to~\cref{eq:fork-comm}, and $=_\forkth$ is closed under the semi-lattice equations by definition.

\begin{theorem}\label{thm:normal-form-surj}
  Every term  $\Gamma\vbar a_1,\sdots,a_n \vdash t$ in the theory of dynamic threads
  $\forkth$ is derivably equal to an equivalence class of normal forms from $\nfset{\Gamma}(n)$; this class is not a priori unique.
\end{theorem}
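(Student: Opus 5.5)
We proceed by structural induction on the term $t$, showing that normal forms are closed (up to $=_\forkth$) under each term constructor. It is convenient to first prove a weaker statement: every term is equal to a term of shape~(\ref{eq:normal-form}) without the closure conditions. Separately, we show that such a \emph{pre-normal form} can always be saturated to satisfy the closure conditions.

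For saturation we use~\cref{eq:wait-closed} and \cref{eq:wait-accumulates} to push waited-on IDs into the variable parameters $u_{ik}$. For the transitivity-style conditions (if $b_j$ occurs in $u_i$, then $u_j\subseteq u_i$), we use the derived law of \Cref{exa:fork-wait-act}, i.e.\ the instance of~\eqref{eq:node-trans}. More generally, we prove a lemma generalising that derivation:
\[\fork(b.x(b),\wait(u,s)) = \fork(b.x(u\oplus b),\wait(u,s))\]
whenever $s$ is an action or a variable. It is proved exactly as in \Cref{exa:fork-wait-act}: expand $s$ via~\eqref{eq:child-becomes-main}, commute the wait with~\eqref{eq:fork-wait-comm}, and use associativity~\eqref{eq:fork-assoc} and \eqref{eq:fork-sub}. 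Iterating this finitely often (the IDs only grow, within a finite set) reaches a fixed point where all closure conditions hold.

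The base and operation cases for pre-normal forms go as follows.
\begin{itemize}
\item $\sop$ is $\wait(\emptid,\sop)$ by~\eqref{eq:wait-neutral-elem}, the case $p=0$.
\item $\act\sigma$ and a variable $x(\vec u)$ become $\fork(b.\wait(b,\sop),s)$ by~\eqref{eq:child-becomes-main}, with $s$ wrapped as $\wait(\emptid,s)$.
\item For $\wait(u,t)$ with $t$ pre-normal, push the wait through all forks using~\eqref{eq:fork-wait-comm} repeatedly, and merge nested waits with~\eqref{eq:wait-accumulates}.
\item For $\fork(a.t(a),s)$ with $t,s$ pre-normal, the main step is flattening.
\end{itemize}

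\textbf{Flattening (main obstacle).} Write the child $s$ as $\fork(c_1.\cdots\fork(c_q.\wait(v,\sop),\ldots),\ldots)$. Use associativity~\eqref{eq:fork-assoc} repeatedly to lift each of $s$'s forks out to the top level, before the parent's. The parent $t(a)$ does not mention the $c_j$, so this is valid. After this we are left with a child $\wait(v,\sop)$ bound to $a$, which~\eqref{eq:fork-sub} eliminates by substituting $v$ for $a$ throughout $t$. The parent's own forks already sit in the first-argument spine, and commutativity~\eqref{eq:fork-comm} reorders as needed. Finally, pushing waits inward via~\eqref{eq:fork-wait-comm} puts every child in the form $\wait(u_i,t_i)$.

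The delicate bookkeeping is the scoping of bound IDs during these rearrangements. Each use of~\eqref{eq:fork-assoc} and~\eqref{eq:fork-comm} has a side condition on which IDs the continuations may mention. I expect the cleanest approach is to phrase the flattening as a lemma by induction on $q$, checking the freshness side conditions at each step.
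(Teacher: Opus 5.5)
Your proposal is correct and takes essentially the paper's approach, a structural induction on terms using the equations of \Cref{fig:fork-wait-eqs}. Separating out a saturation pass is a clean way to handle the closure conditions: it uses \eqref{eq:wait-closed} together with your generalisation of the derivation in \Cref{exa:fork-wait-act}, which in fact holds for arbitrary $s$.

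One small remark: after the flattening by \eqref{eq:fork-assoc} and \eqref{eq:fork-sub} the term is already pre-normal, so the further appeals to \eqref{eq:fork-comm} and \eqref{eq:fork-wait-comm} are not needed.
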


We prove the theorem by induction on terms using the equations from~\Cref{fig:fork-wait-eqs}.
As a corollary, normal forms map surjectively onto (equivalence classes of) terms via the identity.
We have not shown for now that every term is equal to a \emph{unique} equivalence class of normal forms, we prove this in~\Cref{cor:norm-form-unique}.

\section{A Representation Theorem for the Theory of Dynamic Threads} \label{sec:rep-theorem}

In this section we interpret the parameterized theory of dynamic threads from~\Cref{sec:pres-theory-fork} semantically, using a notion of labelled poset similar to that used in~\Cref{sec:model-caus-struct}.
In~\Cref{sec:labell-posets-exampl} we discuss labelled posets informally, then in~\Cref{sec:labell-posets-form,sec:labelled-poset-model} we give their formal definition and show that they form a free model.
In~\Cref{sec:compl-theor-theory}, we show that this model is in a certain sense complete.

\subsection{Labelled Posets with Holes by Example}\label{sec:labell-posets-exampl}

We introduce labelled posets by example and use terms from the theory of dynamic threads~(\Cref{sec:pres-theory-fork}) to motivate them.
We define labelled posets formally in~\Cref{def:repr-lab-poset,def:lab-poset-well-formed}.

\subsubsection{Labelled Posets Represent Terms}
Consider~\Cref{fig:lab-poset-ex-1}~(a): two elements of the poset are labelled by observable actions $\sigma_1$ and $\sigma_2$.
The solid lines represent causal dependencies and induce a partial order: $\sigma_1$ must happen before $\sigma_2$.
All posets contain a distinguished maximal element $\possop$ which represents the \emph{end of the main thread}; we will use $\possop$ when defining an operation analogous to $\fork$ for posets.
In~\Cref{fig:lab-poset-ex-1}~(b), $\sigma_2$ is part of the main thread but $\sigma_1$ is not, as  discussed in~\Cref{exa:fork-wait-signature}.
Elements of the poset may be labelled by computation variables, for example, $x:0$ in~\Cref{fig:lab-poset-ex-1}~(c).

A term's free thread IDs appear as minimal elements in its poset, e.g.~in~\Cref{fig:lab-poset-ex-1}~(d) they are numbered $1$ and $2$.
Bound thread IDs do not appear in the poset.
\Cref{fig:lab-poset-ex-1}~(d) depicts the poset of:
\begin{displaymath}
  x:1 \vbar a_1,a_2 \vdash \fork(b_1.\wait(b_1,x(b_1\oplus a_2)),\wait(a_1,\act{\sigma_1}))
\end{displaymath}
In the poset, there is one element labelled $x$ which has one input.
The dotted line is not part of the partial order; it represents the thread IDs passed to variable $x$ and means that $x$ \emph{may} wait for $a_2$, depending on what computation $x$ is.
The dotted line induces a relation called \emph{visibility} which is assumed to be downward-closed with respect to the partial order (the solid line) and to contain all elements below $x$ in the partial order; therefore we omit to draw dotted lines from $\sigma_1$ and $1$ into $x$.

\subsubsection{Labelled Posets are Normal Forms}\label{sec:lab-poset-as-nf}
Recall that a normal form has the shape below, where each $t_i$ is either one observable action or a computation variable from $\Gamma$:
\begin{displaymath}
      \Gamma \vbar a_1,\sdots,a_n \vdash \fork(b_1.\sdots\fork(b_p.\wait(u_{p+1},\sop),\wait(u_p,t_p)),\sdots\wait(u_1,t_1))
\end{displaymath}
The corresponding poset has $n$ minimal elements corresponding to the free thread IDs $a_1,\sdots,a_n$.
There is one labelled element for each of the terms $t_1$ to $t_p$.
The parent, which stops, corresponds to the distinguished maximal element $\possop$.
The partial order (solid line) encodes the dependencies given by the compound thread IDs $u_1$ to $u_{p+1}$.
The visibility relation (dotted line) corresponds to the thread IDs passed to a variable $t_i=x(u_{i1},\sdots,u_{im})$.
The closure conditions on normal forms from~\Cref{sec:clos-cond-norm} correspond to the transitivity of the partial order and to the fact that the visibility relation is downward-closed with respect to the partial order.


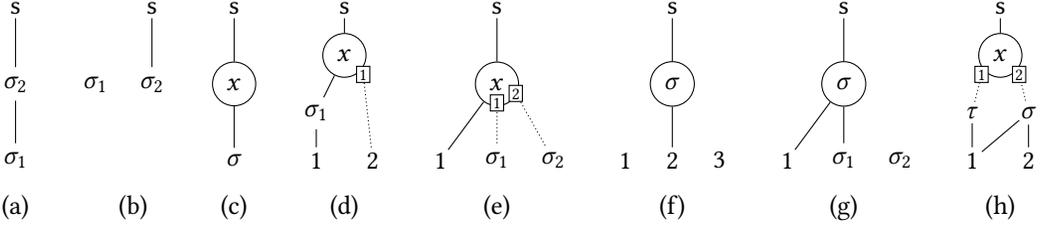
\begin{figure}
  \begin{minipage}[t]{1\linewidth}
  \begin{mathpar}
    \begin{tikzpicture}
	\begin{pgfonlayer}{nodelayer}
		\node [style=action] (0) at (0, 0) {$\possop$};
		\node [style=action] (1) at (0, -2) {$\sigma_2$};
		\node [style=action] (2) at (0, -4) {$\sigma_1$};
		\node [style=action] (3) at (0, -5.25) {(a)};
	\end{pgfonlayer}
	\begin{pgfonlayer}{edgelayer}
		\draw (1) to (0);
		\draw (2) to (1);
	\end{pgfonlayer}
\end{tikzpicture}
    \and
    \begin{tikzpicture}
	\begin{pgfonlayer}{nodelayer}
		\node [style=action] (0) at (0, 0) {$\possop$};
		\node [style=action] (1) at (0, -2) {$\sigma_2$};
		\node [style=action] (2) at (-1.5, -2) {$\sigma_1$};
		\node [style=action] (3) at (-0.5, -5.25) {(b)};
	\end{pgfonlayer}
	\begin{pgfonlayer}{edgelayer}
		\draw (1) to (0);
	\end{pgfonlayer}
\end{tikzpicture}
    \and
    \begin{tikzpicture}
	\begin{pgfonlayer}{nodelayer}
		\node [style=action] (1) at (0, 0) {$\possop$};
		\node [style=var] (2) at (0, -2) {$x$};
		\node [style=action] (3) at (0, -4) {$\sigma$};
		\node [style=action] (4) at (0, -5.25) {(c)};
	\end{pgfonlayer}
	\begin{pgfonlayer}{edgelayer}
		\draw (3) to (2);
		\draw (2) to (1);
	\end{pgfonlayer}
\end{tikzpicture}
    \and
    \begin{tikzpicture}
	\begin{pgfonlayer}{nodelayer}
		\node [style=action] (1) at (0, 0) {$\possop$};
		\node [style=var] (2) at (0, -1.25) {$x$};
		\node [style=action] (3) at (-0.75, -2.75) {$\sigma_1$};
		\node [style=input] (4) at (0.5, -1.75) {\tiny$1$};
		\node [style=action] (5) at (-0.75, -4) {$1$};
		\node [style=action] (6) at (0.75, -4) {$2$};
		\node [style=action] (7) at (0, -5.25) {(d)};
	\end{pgfonlayer}
	\begin{pgfonlayer}{edgelayer}
		\draw (3) to (2);
		\draw (2) to (1);
		\draw (5) to (3);
		\draw [style=visible] (6) to (4);
	\end{pgfonlayer}
\end{tikzpicture}
    \and
    \begin{tikzpicture}
	\begin{pgfonlayer}{nodelayer}
		\node [style=action] (1) at (0, 0) {$\possop$};
		\node [style=var] (2) at (0, -2) {$x$};
		\node [style=action] (3) at (0, -4) {$\sigma_1$};
		\node [style=input] (4) at (0, -2.5) {\tiny$1$};
		\node [style=action] (5) at (-1.5, -4) {$1$};
		\node [style=action] (6) at (1.5, -4) {$\sigma_2$};
		\node [style=input] (7) at (0.5, -2.25) {\tiny$2$};
		\node [style=action] (8) at (0, -5.25) {(e)};
	\end{pgfonlayer}
	\begin{pgfonlayer}{edgelayer}
		\draw (2) to (1);
		\draw [style=visible] (6) to (7);
		\draw [style=visible] (3) to (4);
		\draw (5) to (2);
	\end{pgfonlayer}
\end{tikzpicture}
    \and
    \begin{tikzpicture}
	\begin{pgfonlayer}{nodelayer}
		\node [style=action] (1) at (0, 0) {$\possop$};
		\node [style=var] (2) at (0, -2) {$\sigma$};
		\node [style=action] (3) at (0, -4) {$2$};
		\node [style=action] (5) at (-1.25, -4) {$1$};
		\node [style=action] (6) at (1.25, -4) {$3$};
		\node [style=action] (7) at (0, -5.25) {(f)};
	\end{pgfonlayer}
	\begin{pgfonlayer}{edgelayer}
		\draw (2) to (1);
		\draw (3) to (2);
	\end{pgfonlayer}
\end{tikzpicture}
    \and
    \begin{tikzpicture}
	\begin{pgfonlayer}{nodelayer}
		\node [style=action] (1) at (0, 0) {$\possop$};
		\node [style=var] (2) at (0, -2) {$\sigma$};
		\node [style=action] (3) at (0, -4) {$\sigma_1$};
		\node [style=action] (5) at (-1.5, -4) {$1$};
		\node [style=action] (6) at (1.5, -4) {$\sigma_2$};
		\node [style=action] (7) at (0, -5.25) {(g)};
	\end{pgfonlayer}
	\begin{pgfonlayer}{edgelayer}
		\draw (2) to (1);
		\draw (5) to (2);
		\draw (3) to (2);
	\end{pgfonlayer}
\end{tikzpicture}
    \and
    \begin{tikzpicture}
	\begin{pgfonlayer}{nodelayer}
		\node [style=action] (0) at (0, 0) {$\possop$};
		\node [style=var] (1) at (0, -1.25) {$x$};
		\node [style=action] (2) at (-0.75, -2.75) {$\tau$};
		\node [style=action] (3) at (0.75, -2.75) {$\sigma$};
		\node [style=input] (5) at (-0.5, -1.75) {\tiny$1$};
		\node [style=input] (6) at (0.5, -1.75) {\tiny$2$};
		\node [style=action] (7) at (-0.75, -4) {$1$};
		\node [style=action] (8) at (0.75, -4) {$2$};
		\node [style=action] (9) at (0, -5.25) {(h)};
	\end{pgfonlayer}
	\begin{pgfonlayer}{edgelayer}
		\draw (7) to (2);
		\draw (7) to (3);
		\draw (8) to (3);
		\draw (1) to (0);
		\draw [style=visible] (2) to (5);
		\draw [style=visible] (3) to (6);
	\end{pgfonlayer}
\end{tikzpicture}
  \end{mathpar}
  \end{minipage}
  \vspace{-1em}
  \caption{Examples of labelled posets.
    (a) is the term $\fork(a.\wait(a,\act{\sigma_2}),\act{\sigma_1})$.
    (b) is $\fork(a.\act{\sigma_2},\act{\sigma_1})$.
    (c) is $\fork(a.\wait(a,x),\act{\sigma})$.
    (d) is the normal form in~\cref{eq:nf-ex-2}.
    (g) is the result of substituting (f) for $(x:2)$ in $(e)$.
    (h) is the representation of~\Cref{fig:node-ops}~(c) using the notion of labelled poset introduced in this section.
  }
  \label{fig:lab-poset-ex-1}
\end{figure}

\subsubsection{Substitution for Labelled Posets}\label{sec:subst-labell-posets}
Just like terms in a parameterized theory, labelled posets admit substitution of another poset for a computation variable; the formal definition is discussed in~\Cref{sec:main-theorem}.
In~\Cref{fig:lab-poset-ex-1}, posets (e) and (f) represent respectively the terms
\begin{mathpar}
  x:2\vbar a_1 \vdash \fork(b_1.\fork(b_2.\wait(a_1,x(b_1,b_2)),\act{\sigma_2}),\act{\sigma_1})
  \and
  \text{and}
  \and
  \cdot\vbar a_1,b_1,b_2 \vdash \wait(b_1,\act{\sigma})
\end{mathpar}
while (g), the result of substituting (f) for $x:2$ in (e), is the term:
\begin{displaymath}
  \cdot\vbar a_1 \vdash \fork(b_1.\fork(b_2.\wait(a_1,\wait(b_1,\act{\sigma})),\act{\sigma_2}),\act{\sigma_1})
\end{displaymath}
The input $1$ of both (e) and (f) gets identified, while inputs $2$ and $3$ of (f) are mapped to the two inputs of variable $x$.

Substitution of a compound thread ID for an input of the poset corresponds to parameter substitution on terms.
We define this operation on posets in~\Cref{def:lab-poset-repr}.

\subsubsection{Connection to Labelled Posets from~\Cref{sec:model-caus-struct} and to Ordinary Labelled Posets}

The $\Sigma$-labelled posets with $n$ inputs and $m$ outputs from~\Cref{def:io-lab-poset} are a special case of the labelled posets from this section.
For example, the poset in~\Cref{fig:node-ops}~(c), which corresponds to the term $x:2 \vbar a_1,a_2 \vdash\node_\sigma(a_1\oplus a_2, b_1.\node_\tau(a_1, b_2.x(b_2, b_1)))$, is shown in~\Cref{fig:lab-poset-ex-1}~(h).
The two inputs of variable $x$ correspond to the two outputs of the poset from~\Cref{fig:node-ops}~(c).

If we regard the structure $\possop$ which marks the end of the main thread as itself a label, then a labelled poset with no inputs and no elements labelled by computation variables is an ordinary labelled poset in the sense of~\Cref{def:labelled-poset}, i.e.~a partially ordered set $(X,\leq)$ with a function $X\rightarrow\Sigma\uplus\{\possop\}$.

\subsection{Labelled Posets with Holes Formally}\label{sec:labell-posets-form}

The following two definitions formalize the ideas about labelled posets from the previous section.

\begin{definition}\label{def:repr-lab-poset}
  Let $\Gamma=x_1:m_1,\sdots,x_k:m_k$ be a context of computation variables, and $\Sigma$ be a set of observable actions.
  A \emph{$(\Gamma,\Sigma)$-labelled poset with $n$ inputs} $G=\langle V_1,V_2,\leq_G,l_1,l_2\rangle$ consists of:
  \begin{itemize}
  \item the set of $n$ input vertices $[n] = \{1,\dots,n\}$;
  \item finite disjoint sets of vertices $V_1$ (labelled by actions) and $V_2$ (labelled by variables);
\label{fig:lab-poset-not-well-formed}
  \item a distinguished vertex $\possop$;
  \item a partial order $\leq_G$ on the underlying set $\abs{G}\defeq [n] \uplus V_1\uplus V_2 \uplus \{\possop\}$ (depicted by solid lines);
  \item a labelling function $l_1:V_1\rightarrow \Sigma$, from the vertices in $V_1$ to observable actions;
  \item a function $l_2$ that labels the vertices in $V_2$ with variables $(x:m)$ from $\Gamma$:
    \begin{displaymath}
      l_2:V_2\rightarrow \sum_{(x:m)\in\Gamma}\big(f:[m]\rightarrow \MCP(\abs{G}) \big)
    \end{displaymath}
    and depending on the arity $m$ of this variable, $l_2$ also assigns a function $f:[m]\rightarrow \MCP(\abs{G})$ into the powerset of $\abs{G}$.
    (Each $f(i)$ is depicted by dotted lines).
  \end{itemize}
\end{definition}
If there are no inputs and no vertices labelled by variables, $n=0$ and $V_2=\emptyset$, then a labelled poset becomes an ordinary labelled poset with labels from $\Sigma\uplus\{\possop\}$.
An isomorphism $\alpha:G\rightarrow G'$ between labelled posets consists of two bijections $\alpha_1:V_1\rightarrow V'_1$ and $\alpha_2:V_2\rightarrow V'_2$ which preserve the two labelling functions, in a suitable sense, and such that  $\mathsf{id}_{[n]}\uplus\alpha_1\uplus\alpha_2\uplus \mathsf{id}_{\{\possop\}}$ preserves and reflects the partial order.


To obtain the correspondence between labelled posets and normal forms explained in~\Cref{sec:lab-poset-as-nf}, we restrict our attention to labelled posets that are well-formed.

\begin{definition}\label{def:lab-poset-well-formed}
  An $(\Gamma,\Sigma)$-labelled poset  with $n$ inputs $G=\langle V_1,V_2,\leq_G,l_1,l_2\rangle$ is \emph{well-formed} if:
  \begin{enumerate}
  \item the $n$ inputs are minimal and $\possop$ is maximal, with respect to the partial order $\leq_G$;
  \item for each $e\in V_2$ such that $l_2(e)=(x:m,\,f:[m]\rightarrow\MCP(\abs{G}))$ and for each $i\in [m]$: $e\in f(i)$ and $\possop\not\in f(i)$, and $f(i)$ is downward-closed with respect to $\leq_G$.
  \item\label{item:well-formed-pos-cond-3} Consider the \emph{visibility} relation $S\subseteq \abs{G}\times\abs{G}$ induced by the labelling function $l_2$:
    \begin{center}
      $(e',e)\in S$ $\Longleftrightarrow$ $e\in V_2$ and if $l_2(e)=(x:m,f)$ then $e'$ is in the image of $f$.
    \end{center}
    The transitive closure of the relation $({\leq_G})\cup S$ is anti-symmetric.
  \end{enumerate}
\end{definition}

All the labelled posets discussed in~\Cref{sec:labell-posets-exampl} satisfy the well-formedness conditions above.
Requiring that $s$ is maximal and $s\not\in f(i)$ ensures that child threads do not know the ID of the main thread.
Downward-closure of $f(i)$ and $e\in f(i)$ correspond to some of the closure conditions on normal forms~(\Cref{sec:clos-cond-norm}).
Condition~(\ref{item:well-formed-pos-cond-3}) ensures that, when taking into account the visibility relation, there are no cycles in the labelled poset.
Overall, well-formedness ensures that a labelled poset can be linearly ordered into a normal form of shape~(\ref{eq:normal-form}).
For example, the poset in~\Cref{fig:ill-formed-poset} cannot be linearized and does not satisfy condition~(\ref{item:well-formed-pos-cond-3}).

\subsection{The Labelled Poset Model and Representation Theorem}\label{sec:labelled-poset-model}

\begin{figure}
  \centering
  \begin{minipage}[b]{.15\linewidth}
    \begin{displaymath}
      \begin{tikzpicture}
	\begin{pgfonlayer}{nodelayer}
		\node [style=action] (0) at (0, 0) {$\possop$};
		\node [style=action] (1) at (0, -1.5) {$\sigma$};
		\node [style=var] (2) at (0, -3) {$x$};
		\node [style=input] (3) at (0.5, -3.5) {\tiny$1$};
	\end{pgfonlayer}
	\begin{pgfonlayer}{edgelayer}
		\draw (2) to (1);
		\draw [style=visible, in=-60, out=60, looseness=2.00] (1) to (3);
	\end{pgfonlayer}
\end{tikzpicture}
    \end{displaymath}
    \vspace{-2.5em}
    \caption{Example of an ill-formed poset.}
    \label{fig:ill-formed-poset}
  \end{minipage}
  \hfill
  \begin{minipage}[b]{0.82\linewidth}
  \begin{mathpar}
    \begin{tikzpicture}
	\begin{pgfonlayer}{nodelayer}
		\node [style=action] (0) at (0, 0) {};
		\node [style=action] (1) at (0, 0) {$\possop$};
		\node [style=action] (2) at (0, -2) {$\sigma_2$};
		\node [style=action] (3) at (-1.25, -2) {$\sigma_1$};
		\node [style=action] (4) at (-1.25, -4) {$1$};
		\node [style=action] (5) at (0, -4) {$2$};
		\node [style=var] (6) at (1.5, -2) {$x$};
		\node [style=input] (7) at (2, -2.5) {\tiny$1$};
		\node [style=action] (8) at (0, -5.25) {(a)};
	\end{pgfonlayer}
	\begin{pgfonlayer}{edgelayer}
		\draw (2) to (1);
		\draw (4) to (3);
		\draw (5) to (2);
		\draw [style=visible, in=-90, out=30, looseness=0.75] (5) to (7);
	\end{pgfonlayer}
\end{tikzpicture}
    \and
    \begin{tikzpicture}
	\begin{pgfonlayer}{nodelayer}
		\node [style=action] (0) at (0, 0) {};
		\node [style=action] (1) at (0, 0) {$\possop$};
		\node [style=action] (2) at (0, -2) {$\sigma_3$};
		\node [style=action] (5) at (0, -4) {$1$};
		\node [style=action] (6) at (0, -5.25) {(b)};
	\end{pgfonlayer}
	\begin{pgfonlayer}{edgelayer}
		\draw (2) to (1);
		\draw (5) to (2);
	\end{pgfonlayer}
\end{tikzpicture}
    \and
    \begin{tikzpicture}
	\begin{pgfonlayer}{nodelayer}
		\node [style=action] (0) at (0, 0) {};
		\node [style=action] (1) at (0, 0) {$\possop$};
		\node [style=action] (2) at (0, -2) {$\sigma_3$};
		\node [style=action] (5) at (0, -4) {$1$};
		\node [style=action] (6) at (0, -5.25) {(c)};
	\end{pgfonlayer}
	\begin{pgfonlayer}{edgelayer}
		\draw (5) to (2);
	\end{pgfonlayer}
\end{tikzpicture}
    \and
    \begin{tikzpicture}
	\begin{pgfonlayer}{nodelayer}
		\node [style=action] (0) at (0, 0) {};
		\node [style=action] (1) at (0, 0) {$\possop$};
		\node [style=action] (2) at (0, -1.25) {$\sigma_2$};
		\node [style=action] (3) at (-1.5, -2) {$\sigma_1$};
		\node [style=action] (4) at (0, -4) {$1$};
		\node [style=action] (5) at (0, -2.75) {$\sigma_3$};
		\node [style=var] (6) at (1.5, -2) {$x$};
		\node [style=input] (7) at (2, -2.5) {\tiny$1$};
		\node [style=action] (8) at (0, -5.25) {(d)};
	\end{pgfonlayer}
	\begin{pgfonlayer}{edgelayer}
		\draw (2) to (1);
		\draw (4) to (3);
		\draw (5) to (2);
		\draw [style=visible, in=-90, out=0, looseness=1.25] (5) to (7);
		\draw (4) to (5);
	\end{pgfonlayer}
\end{tikzpicture}
    \and
    \begin{tikzpicture}
	\begin{pgfonlayer}{nodelayer}
		\node [style=action] (0) at (0, 0) {};
		\node [style=action] (1) at (0, 0) {$\possop$};
		\node [style=action] (2) at (0, -2) {$\sigma_2$};
		\node [style=action] (3) at (-2.5, -2) {$\sigma_1$};
		\node [style=action] (4) at (-1.25, -4) {$1$};
		\node [style=var] (6) at (1.25, -2) {$x$};
		\node [style=input] (7) at (1.75, -2.5) {\tiny$1$};
		\node [style=action] (8) at (-1.25, -2) {$\sigma_3$};
		\node [style=action] (9) at (-0.25, -5.25) {(e)};
	\end{pgfonlayer}
	\begin{pgfonlayer}{edgelayer}
		\draw (2) to (1);
		\draw (4) to (3);
		\draw (4) to (8);
	\end{pgfonlayer}
\end{tikzpicture}
    \and
    \begin{tikzpicture}
	\begin{pgfonlayer}{nodelayer}
		\node [style=action] (1) at (1.25, 0) {$\possop$};
		\node [style=action] (2) at (0, -2) {$\sigma_3$};
		\node [style=action] (5) at (0, -4) {$1$};
		\node [style=action] (6) at (0.5, -5.25) {(f)};
		\node [style=action] (7) at (1.25, -4) {$2$};
	\end{pgfonlayer}
	\begin{pgfonlayer}{edgelayer}
		\draw (5) to (2);
		\draw (7) to (2);
		\draw (7) to (1);
	\end{pgfonlayer}
\end{tikzpicture}
  \end{mathpar}
  \vspace{-1em}
  \caption{Examples of forking on labelled posets:
    forking parent (a) with child (b) gives (d).
    If instead, the child is (c), the result of forking is (e).
    (f) is the result of applying $\wait_1$ to (c).}
  \label{fig:fork-examples}
  \end{minipage}
\end{figure}

\subsubsection{Model Structure}

In order to build a model for the theory of dynamic threads out of labelled posets, we organize them into a functor $\repr{\Gamma}:\indexcat\rightarrow \Set$ as follows.

\begin{definition}\label{def:lab-poset-repr}
  Let $\Gamma$ be a context of computation variables.
  For each natural number $n$, define the set $\repr{\Gamma}(n)$ to contain \emph{isomorphism classes of well-formed $(\Gamma,\Sigma)$-labelled posets with $n$ inputs}.
  The functorial action on relations $R\subseteq[n]\times[n']$ acts on the inputs of a poset by updating $i \leq e$ to $i' \leq e$ if $(i,i') \in R$, and updating the labelling function $l_2$ accordingly.
\end{definition}

We can equip $\repr{\Gamma}$ with a model structure, in the sense of~\Cref{def:sigma-struct}, for the $\fork$, $\wait$, $\sop$ and $\act{\sigma}$ operations of the theory of dynamic threads.
For each natural number $n$, we define:
  \begin{displaymath}
    \sfork_n : \repr{\Gamma}(n+1)\times \repr{\Gamma}(n) \rightarrow \repr{\Gamma}(n).
  \end{displaymath}
The labelled vertices of $\sfork_n(G_1,G_2)$ are the union of those from $G_1$ and $G_2$ and the labels are preserved.
The partial order of $\sfork_n(G_1,G_2)$ is obtained by connecting the $(n+1)$-th input of $G_1$ to the $\possop$ element of $G_2$ and closing under transitivity.
The visibility relation is obtained via the same connection from those of $G_1$ and $G_2$.

\Cref{fig:fork-examples} shows an example of forking.
The parent (a) and the children (b) and (c) correspond respectively to the terms:
\begin{mathpar}
  x:1\vbar a_1,a_2 \vdash t_1 = \fork(b_1.\fork(b_2.\wait(a_2,\sigma_2),x(a_2)),\wait(a_1,\sigma_1))
  \and
  x:1\vbar a_1 \vdash t_2 = \wait(a_1,\act{\sigma_3})
  \and
  x:1\vbar a_1 \vdash t_3 = \fork(c.\sop, \wait(a_1,\act{\sigma_3}))
\end{mathpar}
while (d) corresponds to $\fork(a_2.t_1,t_2)$ and (e) corresponds to $\fork(a_2.t_1,t_3)$.


The operation $\swait_n: \repr{\Gamma}(n) \rightarrow \repr{\Gamma}(n+1)$ adds a new input $n+1$ and connects it to all the labelled elements and to $\possop$.
\Cref{fig:fork-examples}~(f) is an example; it represents the term $\cdot\vbar a_1,a_2\vdash \wait(a_2,t_3)$.

The operation $\ssop_n:1\rightarrow \repr{\Gamma}(n)$ picks out the poset with only the $n$ inputs and $\possop$ as elements, no labelled vertices, and with the discrete partial order.
The operation $\sprint{\sigma,n}:1\rightarrow \repr{\Gamma}(n)$ gives a poset with one vertex labelled by $\sigma$, directly below $\possop$ in the partial order, and with the $n$ inputs not connected to anything.

\subsubsection{Main Theorem}\label{sec:main-theorem}
We now show that, for each $T_\Gamma$, labelled posets form a model for the theory of dynamic threads and that this model is free.
Recall that we described the term model $F_\forkth(V_\Gamma)$  as a free model on $V_\Gamma$ in~\Cref{sec:free-models}.
In~\Cref{sec:interpretation}, we give an interpretation of our programming language using the fact that $T_\Gamma:\FinRel\to\Set$ induces a strong monad $T$ on the functor category $\Set^\FinRel$ by setting $T(V_\Gamma)=T_\Gamma$.

  \begin{theorem}[Representation Theorem]\label{thm:main-theorem}
    For each context $\Gamma$, the functor $\repr{\Gamma}$ from~\Cref{def:lab-poset-repr} and the operations on labelled posets $\fork$, $\wait$, $\sop$, $\act{\sigma}$ respect the equations of the theory of dynamic threads from~\Cref{fig:fork-wait-eqs}, and thus form a model of the theory, in the sense of~\Cref{def:param-model}.

     Moreover, $\repr{\Gamma}$ is isomorphic to the term model $F_\forkth(V_\Gamma)$ of the theory of dynamic threads.
  \end{theorem}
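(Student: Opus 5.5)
The proof has two parts: first that $\repr{\Gamma}$ is a model, then that it is the free one. For the model part, I will check the structure maps directly. First I show that $\sfork_n$, $\swait_n$, $\ssop_n$ and $\sprint{\sigma,n}$ preserve well-formedness in the sense of \Cref{def:lab-poset-well-formed}. For fork, the only delicate clause is acyclicity of $({\leq})\cup S$. A cycle in $\sfork_n(G_1,G_2)$ would have to pass through the new edge from $\possop_{G_2}$ to input $n+1$ of $G_1$. But nothing in $G_1$ lies below anything in $G_2$, since $G_2$'s inputs are shared and minimal, so no such cycle closes. The condition $\possop\notin f(i)$ for $G_2$'s variable vertices holds because the old $\possop_{G_2}$ is no longer the distinguished element.

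I also check that these operations are natural in $n$ with respect to relations $R$. This holds because the functorial action only rewires inputs, while the operations only touch the fresh input or $\possop$.

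I then verify each of the eight equations of \Cref{fig:fork-wait-eqs} by computing both sides on arbitrary posets substituted for the variables. The wait equations (\ref{eq:wait-neutral-elem})--(\ref{eq:wait-closed}) follow from downward closure and transitivity. Equations (\ref{eq:fork-comm}) and (\ref{eq:fork-assoc}) follow because plugging $\possop$'s into inputs is an associative and commutative gluing, up to isomorphism. Equation (\ref{eq:child-becomes-main}) holds because forking $\wait(a,\sop)$ as parent makes the child's $\possop$ the only thing below the new $\possop$. Equation (\ref{eq:fork-sub}) holds because a child with no labelled vertices just replaces input $a$ by the inputs below its $\possop$, i.e.\ by $b$.

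For the isomorphism, the freeness of $F_\forkth(V_\Gamma)$ gives a unique homomorphism $\Phi\colon F_\forkth(V_\Gamma)\to\repr{\Gamma}$ extending the map $V_\Gamma\to\repr{\Gamma}$. That map sends $x(u_1,\dots,u_m)$ to the poset with a single variable vertex $e$ directly below $\possop$ and with $f(i)=\{e\}\cup{\downarrow}\den{u_i}$. I construct an inverse $\Psi\colon\repr{\Gamma}\to F_\forkth(V_\Gamma)$ through normal forms. Given a well-formed $G$, condition (\ref{item:well-formed-pos-cond-3}) lets me choose a linear extension of the transitive closure of $({\leq_G})\cup S$ on $V_1\uplus V_2$. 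From it I read off a term of shape (\ref{eq:normal-form}): the $i$-th vertex becomes $\wait(u_i,t_i)$, with $u_i$ the join of the inputs and earlier vertices strictly below it, and $t_i$ either $\act{\sigma}$ or $x(u_{i1},\dots,u_{im})$ with $u_{ik}$ read from $f(k)$. Finally $u_{p+1}$ is read from ${\downarrow}\possop$. Well-formedness gives exactly the closure conditions of \Cref{sec:clos-cond-norm}. Any two linear extensions differ by swapping adjacent independent vertices, so they yield the same class in $\nfset{\Gamma}(n)$, which corresponds to (\ref{eq:fork-comm}). Isomorphic posets also give equal classes, so $\Psi$ is well defined.

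Next I show $\Phi\circ\Psi=\id$ by a direct computation, since $\Phi$ of a normal form rebuilds exactly the poset it came from. For $\Psi\circ\Phi=\id$, I use \Cref{thm:normal-form-surj}: every term equals some normal form $N$. So it suffices that $\Psi(\Phi(N))$ is $=_\forkth$-equal to $N$. Both are normal forms producing the same poset, and the closure conditions make a normal form's parameters exactly the down-sets in that poset. Hence $N$ and $\Psi(\Phi(N))$ agree up to reordering and semi-lattice equations, and the two maps are mutually inverse bijections. Naturality of $\Psi$ comes from $\Phi$ being natural and bijective.

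The main obstacle is the first half: checking (\ref{eq:fork-sub}) and (\ref{eq:fork-assoc}) in the poset model, including the visibility relation and the claim that fork preserves acyclicity. The argument there is that no element of the parent is below anything in the child, so added edges only ever point from child to parent. I would handle this with the PROP-style gluing description mentioned in the Remark of \Cref{sec:model-caus-struct}, so that each equation reduces to associativity and unit laws of that gluing.
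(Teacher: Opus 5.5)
Your proposal is correct and follows the paper's route: you check directly that $\repr{\Gamma}$ is a model, your $\Phi$ is the paper's $\interp$, your $\Psi$ is $\inc\circ\reify$ built from linear extensions, and the two inverse identities plus surjectivity of normal forms (\Cref{thm:normal-form-surj}) give the isomorphism. The differences are minor: your ``direct computation'' of $\Phi\circ\Psi=\id$ is really the paper's induction on the number of labelled vertices; you get naturality of $\Psi$ from the natural bijection $\Phi$ instead of checking it directly; and $u_{ik}$ should be read from $f(k)\setminus\{e\}$, since $e\in f(k)$ is not yet in scope.
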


  Recall that in~\Cref{sec:subst-labell-posets} we informally discussed substitution for labelled posets.
  Using the isomorphism of models in the theorem above we can define substitution by translating a labelled poset into an equivalence class of terms, using term substitution, then translating back to a poset.

  \subsubsection{Proof Sketch of the Representation Theorem}\label{sec:proof-sketch-repr}

  To prove~\Cref{thm:main-theorem} we consider the diagram below.
  Recall that the functor $\nfset{\Gamma}$, from~\Cref{def:normal-form}, contains equivalence classes of normal forms, and $F_\forkth(V_\Gamma)$ contains equivalence classes of terms.

  \begin{wrapfigure}[5]{l}{0.2\textwidth}
    \vspace{-2mm}
\begin{tikzcd}
F_\forkth(V_\Gamma) \arrow[r, "\interp"]  & \repr{\Gamma} \arrow[ld, "\reify"] \\
\nfset{\Gamma} \arrow[u, "\inc"] &
\end{tikzcd}
  \end{wrapfigure}
  The $\inc$ map is given by the inclusion of normal forms into terms.
  The map $\interp$ is the unique map obtained by instantiating the universal property of the free model $F_\forkth(V_\Gamma)$~(\Cref{def:free-model}) for a suitable $V_\Gamma\rightarrow \repr{\Gamma}$ that maps computation variables to labelled posets; $\interp$ is essentially given by the interpretation $\den{-}_{\repr{\Gamma}}$ of terms in the labelled poset model, from~\Cref{sec:models-param-theor}.

  For each natural number $n$, we define a function $\reify_n$ that linearizes a labelled poset into a normal form, using the intuition from~\Cref{sec:lab-poset-as-nf}.
  To show that this gives a well-defined function, natural in $n$, we use the conditions for a well-formed labelled poset~(\Cref{def:lab-poset-well-formed}), the closure conditions and the equivalence relation on normal forms~(\Cref{sec:clos-cond-norm}).

  We show that $\reify$ is both a left and a right inverse to the composite $\interp\circ\inc$.
  To show it is a left inverse we use induction on the number of child threads in a normal form; for the right inverse, we use induction on the number of labelled elements in a poset.
  Knowing that $\inc$ is surjective~(\Cref{thm:normal-form-surj}) means that $\interp$ is an isomorphism.
  We already know from the definition of the free model that $\interp$ is a homomorphism of models.

  \begin{corollary}\label{cor:norm-form-unique}
    The inclusion of equivalence classes of normal forms into equivalence classes of terms is injective.
    By~\Cref{thm:normal-form-surj}, every term is equal to a unique equivalence class of normal forms.
  \end{corollary}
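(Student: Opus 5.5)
This follows almost directly from the proof of \Cref{thm:main-theorem}. Write $\mathsf{nf}$ for the composite
\[
\interp\circ\inc : \nfset{\Gamma}\to\repr{\Gamma}.
\]
The proof of the theorem shows that $\reify$ is a left inverse of $\mathsf{nf}$, that is, $\reify_n\circ\interp_n\circ\inc_n=\mathrm{id}_{\nfset{\Gamma}(n)}$ for every $n$. So I would argue injectivity of $\inc_n$ directly. Take two classes of normal forms $N_1,N_2\in\nfset{\Gamma}(n)$ with $\inc_n(N_1)=\inc_n(N_2)$, meaning their representatives are $(=_\forkth)$-equal as terms. Applying $\interp_n$ gives equal labelled posets, and applying $\reify_n$ then gives $N_1=N_2$. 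Hence $\inc_n$ is injective.

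Note that this step needs only the left-inverse property and the well-definedness of $\interp$ on $=_\forkth$-classes. Well-definedness holds because $\repr{\Gamma}$ is a model, so $\interp$ factors through the quotient $F_\forkth(V_\Gamma)$ (soundness of derivable equality). The main point to check is therefore that the left-inverse claim was established at the level of equivalence classes in $\nfset{\Gamma}(n)$, and not merely up to $=_\forkth$. This is the content of the induction on the number of child threads in a normal form. At each step, $\reify$ recovers the outermost forked child (an element that is minimal among the labelled vertices in the linearization) together with its wait-set $u_i$, which is read off as the set of predecessors. The closure conditions of \Cref{sec:clos-cond-norm} guarantee that the predecessors read off the poset coincide exactly with the IDs in $u_i$ modulo the semi-lattice equations. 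Different choices of linearization differ only by reordering independent children, which is precisely the equivalence built into \Cref{def:normal-form}. I expect this matching of closure conditions with the transitive and downward-closed structure of the poset to be the only delicate step.

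For the second sentence, \Cref{thm:normal-form-surj} gives existence: every term $t$ in context $\Gamma\vbar a_1,\sdots,a_n$ is derivably equal to $\inc_n(N)$ for some $N\in\nfset{\Gamma}(n)$. For uniqueness, suppose $t=_\forkth\inc_n(N)$ and $t=_\forkth\inc_n(N')$. Then $\inc_n(N)=\inc_n(N')$ in $F_\forkth(V_\Gamma)(n)$, and injectivity gives $N=N'$. Equivalently, $\inc$ is a bijection, and its inverse is $\reify\circ\interp$.
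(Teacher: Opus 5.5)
Your proposal is correct and follows the paper's route: the corollary is read off from the proof of \Cref{thm:main-theorem}, where $\reify$ is shown (by induction on the number of child threads) to be a left inverse of $\interp\circ\inc$ at the level of classes in $\nfset{\Gamma}$, and this forces $\inc$ to be injective. Uniqueness then follows as you say by combining injectivity with the surjectivity from \Cref{thm:normal-form-surj}, so $\inc$ is a bijection with inverse $\reify\circ\interp$.
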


\section{A Completeness Theorem for the Theory of Dynamic Threads}\label{sec:compl-theor-theory}

\Cref{thm:main-theorem} shows that terms $\Gamma\vbar a_1,\sdots,a_n \vdash t$ in the theory of dynamic threads correspond exactly to $(\Gamma,\Sigma)$-labelled posets with $n$ inputs.
When the context $\Gamma$ is empty and $n=0$, these labelled posets are in fact ordinary labelled posets i.e.~a partially ordered set $(X,\leq)$ with a function $X\rightarrow\Sigma\uplus\{\possop\}$.
The next theorem shows that our axiomatization of $(\Gamma,\Sigma)$-labelled posets from~\Cref{fig:fork-wait-eqs} is complete with respect to an equivalence relation induced by isomorphism of ordinary labelled posets.

Given a term in context $\Gamma\vbar \Delta\vdash t$, a closing substitution $\gamma$ is one that assigns to each variable $(x:m)$ from $\Gamma$ a term $\gamma(x)=\bigl(\cdot\vbar\Delta,b_1,\sdots,b_m \vdash s \bigr)$ with no free computation variables, so that $\cdot\vbar \Delta \vdash t[\gamma]$ holds.
A closing context $\MCC[-]$ is a term with a hole such that given a term $\cdot\vbar \Delta \vdash t$, the judgement $\cdot\vbar \cdot \vdash \MCC[t]$ holds.

\begin{theorem}[Completeness]\label{thm:completeness}
  Consider terms $\Gamma \mid \Delta \vdash t_1,t_2$ in the theory of dynamic threads.
  If for all closing substitutions $\gamma$ and for all closing contexts $\MCC[-]$ the two terms are equal, meaning $\cdot\vbar \cdot \vdash \MCC[t_1[\gamma]] = \MCC[t_2[\gamma]]$, then $\Gamma \mid \Delta \vdash t_1=t_2$.
\end{theorem}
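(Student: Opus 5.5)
By the Representation Theorem (\Cref{thm:main-theorem}), $t_1$ and $t_2$ are derivably equal iff their interpretations $\interp(t_1),\interp(t_2)\in\repr{\Gamma}(n)$ are isomorphic well-formed $(\Gamma,\Sigma)$-labelled posets, where $n=\abs{\Delta}$. So I will prove the contrapositive semantically: given two non-isomorphic well-formed labelled posets $G_1\not\cong G_2$ with $n$ inputs, I will construct a closing substitution $\gamma$ and a closing context $\MCC[-]$ such that the ordinary labelled posets $\interp(\MCC[t_1[\gamma]])$ and $\interp(\MCC[t_2[\gamma]])$ are non-isomorphic. The theorem then follows, because closed terms are equal iff their ordinary labelled posets are isomorphic, again by \Cref{thm:main-theorem} with $\Gamma$ empty.

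\textbf{Closing context.} This step handles the inputs. I take $\MCC[-]$ to fork, for each parameter $a_j\in\Delta$, a child thread $\act{\iota_j}$ with a fresh distinguishing label $\iota_j$, and to plug the child IDs in for the parameters. Concretely, $\MCC[t]=\fork(a_1.\cdots\fork(a_n.t,\act{\iota_n})\cdots,\act{\iota_1})$. On posets, each input $j$ becomes a vertex labelled $\iota_j$ just below $\possop$, except that it is connected to $\possop$ only through the order of $t$. The input order relations are therefore recoverable from the order relations involving the $\iota_j$-labelled vertices.

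\textbf{Closing substitution.} This step handles the holes. For each variable vertex I need a gadget that records its variable name and its visibility sets $f(i)$. For $(x:m)\in\Gamma$, I set
\[
\gamma(x)=b_1\sdots b_m.\;\fork(c_1.\cdots\fork(c_m.\act{\xi_x},\wait(b_m,\act{\xi_{x,m}}))\cdots,\wait(b_1,\act{\xi_{x,1}})),
\]
using fresh labels. Substituting this for a vertex $e$ with $l_2(e)=(x,f)$ produces three things:
\begin{itemize}
\item a vertex labelled $\xi_x$ in the main thread of $e$, which inherits exactly the solid-order position of $e$ (it is below whatever $e$ was below, and above whatever was below $e$);
\item for each $i$, a child vertex labelled $\xi_{x,i}$ whose down-set is exactly $f(i)$, via the wait on $b_i$ after substitution;
\item no link from the child $\xi_{x,i}$ to $\possop$.
\end{itemize}
I will check two further points. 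First, well-formedness (downward-closure and $e\in f(i)$) makes the down-set computation correct. Second, the $\wait$ inside $\gamma(x)$ doesn't create spurious order, since the child vertices are not in the main thread. If labels must come from a fixed $\Sigma$, I will instead encode each fresh label as a small distinctive pattern, e.g.\ a chain of a fixed label whose length is unique to the label; I assume the general case reduces to this, or that $\Sigma$ is rich enough.

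\textbf{Recovering $G$.} From the resulting ordinary labelled poset $P=\interp(\MCC[G[\gamma]])$ I reconstruct $G$ up to isomorphism. The vertices of $G$ are the vertices of $P$ labelled by original actions, by $\xi_x$, or by $\iota_j$. The order on $G$ is the restriction of $P$'s order. Each $f(i)$ for a vertex $e$ is read off as the down-set of the corresponding $\xi_{x,i}$ vertex, with $\xi_x$-vertices standing for variable vertices. Hence $P_1\cong P_2$ implies $G_1\cong G_2$.

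The main obstacle is pairing each child vertex $\xi_{x,i}$ with its parent vertex $e$ when several vertices carry the same variable $x$. The parent $e$ is the unique $\xi_x$-vertex in the down-set of $\xi_{x,i}$ that is maximal among the $\xi$-vertices there; this works because $e\in f(i)$ and well-formedness condition~(\ref{item:well-formed-pos-cond-3}) rules out cycles. It may still be ambiguous, because another $x$-vertex could lie below $e$ and also appear in that down-set. If so, I will strengthen the gadget by having the $\xi_{x,i}$ children also wait on a private fresh child of $e$'s thread, for example an extra $\act{\kappa}$ vertex forked inside $\gamma(x)$ whose ID is passed to all $m$ child waits, which makes the grouping unambiguous. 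Verifying these computations through the poset description of $\fork$ and $\wait$ is routine but delicate.
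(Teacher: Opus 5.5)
Your overall plan matches the paper's: a contrapositive via \Cref{thm:main-theorem}, a closing context that turns inputs into fresh actions, one gadget per variable, and then reading $G$ back off the closed poset. Your closing context is fine. Isomorphisms of closed posets fix $\possop$, so keeping $t$ as the main thread works as well as the paper's extra $\act{\sigma_{n+1}}$.

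The gap is in the gadget $\gamma(x)$. Its children $\wait(b_i,\act{\xi_{x,i}})$ are not below the gadget's own $\xi_x$. After substitution, $b_i$ becomes the IDs passed to $x$, and these never include the ID of $e$ itself: that ID is bound only in the parent's continuation, and the clause $e\in f(i)$ in \Cref{def:lab-poset-well-formed} is bookkeeping. So the down-set of $\xi_{x,i}$ is $f(i)\setminus\{e\}$, not $f(i)$, and nothing ties $\xi_{x,i}$ to the $\xi_x$ of its own vertex.

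This genuinely loses information. Take $x:1\mid a\vdash t_1=\fork(c.x(a),x(\emptid))$ and $t_2=\fork(c.x(\emptid),x(a))$. They are not equal, because the only $x$-vertex below $\possop$ sees $a$ in $t_1$ but not in $t_2$. Yet under your $\gamma$ both denote the same poset: $\xi_x<\possop$, a second isolated $\xi_x$, one $\xi_{x,1}$ above $a$, and one isolated $\xi_{x,1}$. Hence $t_1[\gamma]=t_2[\gamma]$ and no context separates them. Your fallback, a $\act{\kappa}$ waited on only by the $\xi_{x,i}$, fails on the same example, since $\kappa$ is still unrelated to $\xi_x$.

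The repair is the paper's gadget. Fork $\act{\xi_x}$ as a child with ID $d$, let the main thread be $\wait(d,\sop)$, and let the $i$th child be $\wait(d\oplus b_i,\act{\xi_{x,i}})$. Now the down-set of $\xi_{x,i}$ is exactly $f(i)$ with $e$ replaced by its $\xi_x$.

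Even then, your local rule (``the unique maximal $\xi_x$ in the down-set'') is false. In $\fork(c.x(c),x(\emptid))$ the main $x$-vertex sees the incomparable child $x$-vertex, so both $\xi_x$'s are maximal below the main vertex's $\xi_{x,1}$. What is needed is a global argument:
\begin{itemize}
\item For fixed $x$ and $i$, the true pairing is a perfect matching between $x$-vertices and $\xi_{x,i}$-vertices inside the ``is below'' relation.
\item Any other such matching differs from it by a permutation $\rho$ of the $x$-vertices with $\rho(e)\in f_e(i)$.
\item A nontrivial cycle of $\rho$ would then be a cycle of the visibility relation, contradicting condition~(\ref{item:well-formed-pos-cond-3}).
\end{itemize}
So the matching is unique in both posets. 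Any isomorphism of the closed posets must therefore carry one true matching to the other, and it descends to $G_1\cong G_2$. This is where the paper invokes acyclicity.
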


\begin{proof}[Proof sketch]
  Consider terms with no free computation variables $\cdot \mid a_1,\sdots,a_n \vdash t_1,t_2$.
  Define a context which binds each free thread ID to an observable action, and adds an  action $\act{\sigma_{n+1}}$ at the end of the main thread ($\sigma_{1},\sdots,\sigma_{n+1}$ are distinct from any observable actions occurring in $t_1$ and $t_2$):
  \begin{displaymath}
    \MCC[-] = \fork(a_1.\sdots\fork(a_n.\fork(a_{n+1}.\wait(a_{n+1},\act{\sigma_{n+1}}),[-]),\act{\sigma_{n}}),\sdots \act{\sigma_{1}})
  \end{displaymath}

  Recall that $\MCC[t_1]$ and $\MCC[t_2]$ are interpreted as labelled posets in $\repr{\emptyset}(0)$.
  If they are equal then by~\Cref{thm:main-theorem} there is an isomorphism of labelled posets between $\den{\MCC[t_1]}$ and $\den{\MCC[t_1]}$.
  We adapt this isomorphism into one between $\den{t_1},\den{t_2}\in \repr{\emptyset}(n)$ and deduce $t_1=t_2$, but we omit the details.

\noindent\begin{tabular}{@{}p{.72\textwidth}p{.27\textwidth}}
\hspace{\parindent}
Now consider terms in context $\Gamma \mid a_1,\sdots,a_n \vdash t_1,t_2$.
Consider the substitution $\gamma$ which maps each computation
variable $(x:m)$ from $\Gamma$ to the term in context $\cdot\vbar
a_1,\sdots,a_n,b_1,\sdots,b_{m}$ depicted on the right as
a labelled poset with $n+m$ inputs.
  We assume that the actions $\sigma_{x},\sigma_{x_1},\sdots,\sigma_{x_{m}}$ are distinct for each computation variable $x$, and distinct from the actions in $t_1$ and $t_2$.
  We may encode ``new'' actions as distinct combinations. Even for a single action, this can be done by alternating differing parallel combinations.
  &
  \begin{tikzpicture}
	\begin{pgfonlayer}{nodelayer}
		\node [style=action] (1) at (0, 0) {$\possop$};
		\node [style=medium box] (2) at (1.5, 0) {$\sigma_{x_1}$};
		\node [style=action] (3) at (3.5, 0) {\dots};
		\node [style=medium box] (4) at (5.5, 0) {$\sigma_{x_{m}}$};
		\node [style=action] (5) at (0, -2) {$\sigma_x$};
		\node [style=medium box] (6) at (1.5, -3) {$(n+1)$};
		\node [style=medium box] (7) at (5.5, -3) {$(n+m)$};
		\node [style=action] (8) at (5.5, -4.5) {$n$};
		\node [style=action] (9) at (3.5, -4.5) {\dots};
		\node [style=action] (10) at (1.5, -4.5) {$1$};
		\node [style=action] (11) at (3.5, -3) {\dots};
	\end{pgfonlayer}
	\begin{pgfonlayer}{edgelayer}
		\draw (5) to (1);
		\draw (5) to (2);
		\draw (5) to (4);
		\draw (6) to (2);
		\draw (7) to (4);
	\end{pgfonlayer}
\end{tikzpicture}
\end{tabular}

  Assuming we have an isomorphism of labelled posets $\alpha:\den{t_1[\gamma]}\rightarrow \den{t_2[\gamma]}$, we can construct an isomorphism $\alpha':\den{t_1}\rightarrow \den{t_2}$.
  On labelled elements, $\alpha'$ acts the same as $\alpha$, forgetting about the elements labelled $\sigma_{x_1},\sdots,\sigma_{x_{m}}$.
  The elements labelled $\sigma_x$ in $\den{t_1[\gamma]}$ correspond exactly to the elements labelled $(x:m)$ in $\den{t_1}$, and the dependencies of the elements $\sigma_{x_i}$ encode the visibility relation of $\den{t_1}$.
  To show $\alpha'$ is an isomorphism of $(\Gamma,\Sigma)$-labelled posets with $n$ inputs, we use the well-formedness of labelled posets that represent terms.
  In particular, we rely on condition~(\ref{item:well-formed-pos-cond-3}) from~\Cref{def:lab-poset-well-formed} which says that the visibility relation does not induce cycles.
\end{proof}

\section{Denotational Semantics, Soundness, Adequacy and Full Abstraction}\label{sec:denot-semant-progr}

\Crefrange{sec:param-algebr-theor}{sec:compl-theor-theory} have developed a theory for an algebraic language based on fork and wait.
The idea of algebraic effects is that this can easily be extended to a full language.
To demonstrate this, we return to the programming language from Section~\ref{sec:background}, outlining how it can be given a semantics by using our complete representation, which is sound, adequate, and fully abstract at first order with respect to the operational semantics.

\subsection{Interpretation}\label{sec:interpretation}
\subsubsection{Summary}We interpret all types $A$ of the programming language as functors $\den A\in\Set^\FinRel$.
The idea is that $\den A(w)$ is the set of interpretations of values of type $A$ in world $w$, i.e.~%
$\vterm w v:A$.
Similarly, we interpret contexts $\Gamma$ as functors $\den \Gamma\in\Set^\FinRel$, i.e.~world-indexed sets of valuations.

We will interpret value expressions $\Gamma\vterm{w} v:A$ in world $w$ as natural transformations $\den v: \FinRel(w,-) \times \den\Gamma\to\den A\text,$ in $\Set^\FinRel$, 
where $\FinRel(w,-)$ is the representable functor at $w$.
If the world $w$ is empty this reduces to a natural transformation $\den\Gamma\to\den A$.
To interpret computation expressions, we build a monad $T$ on $\Set^\FinRel$ from the representation, and interpret expressions $\Gamma \cterm{w} t: A$ as
natural transformations
$\den t:\FinRel(w,-) \times\den\Gamma\to T\den A$.

\subsubsection{Interpretation of First Order Types}
The interpretation of the type of thread IDs is:
\[
  \den \tid= \FinRel(1,-)\quad i.e.~\den\tid(w)\cong \{w'~|~w'\subseteq w\}\text.
\]
We thus interpret (compound) thread id values in world $w$ as subsets of (non-compound) tids in $w$.

The interpretation of product and sum types uses the well-known and canonical categorical structure of the functor category $\Set^\FinRel$. Recall that products and coproducts in functor categories are computed pointwise.
Moreover, we have a strong connection with Section~\ref{sec:free-models},
since
for a context $(x_1:m_1\dots x_k:m_k)$, the functor of variables is an interpretation
of a type:
\begin{equation}\label{eqn:vartype}
\textstyle\hspace{-3mm}
  \den {\prod_{i=1}^k A_i} (p)=
  \prod_{i=1}^k \den {A_i}(p)
  \quad
  \den {\sum_{i=1}^k A_i} (p)=
  \biguplus_{i=1}^k \den {A_i}(p)
\quad \textstyle  V_{x_1:m_1\dots x_k:m_k}\cong \den {\sum_{i=1}^k \tid^{m_i}}\text.
\end{equation}
In fact, every first-order type is isomorphic to one of this form, since products distribute over sums.

\subsubsection{Monad}

We extend the parameterized algebraic theory for fork and wait to a strong monad on $\Set^\FinRel$, along the lines of~\cite{Staton13,Staton13PL}.
Recall~(e.g.~\cite{PlotkinP02}) that a plain algebraic theory (such as monoids) induces a monad (such as lists) on the category of sets, by letting $T(X)$ be the free model of the theory on the set $X$.
For a parameterized algebraic theory (such as the theory of fork and wait), we can define a strong monad $T$ on the category $\Set^\FinRel$ by letting $T(X)$ be the free model of the theory on the functor $X\in \Set^\FinRel$. Recall that in~\Cref{sec:labelled-poset-model}, we built a model of the theory of dynamic threads, $T_\Gamma$, for each context $\Gamma$, and showed that it is the free model over $V_\Gamma$~(\Cref{thm:main-theorem}).
Thus, for first-order types, which are all interpreted as some $V_\Gamma$, we let $T(V_\Gamma)$ be the functor $T_\Gamma$.

Aside: every strong monad on $\Set^\FinRel$ that preserves sifted colimits arises from a parameterized algebraic theory in this way, giving a monad-theory correspondence.
To prove this correspondence one can generalize the results in~\cite[\S 5]{Staton13} straightforwardly, from a parameterizing Lawvere theory generated from a signature with no equations to an arbitrary Lawvere theory, in our case $\FinRel^\mathsf{op}$.

\subsubsection{Interpretation of Higher Order Types}

It is well-known that the category $\Set^\FinRel$ is cartesian closed. For functors
$G,H$ we have a functor $H^G$
determined by the currying isomorphism: to give a natural transformation $F\times G\to H$
is to give a natural transformation $F\to H^G$.
We then interpret the function type $A\to B$ using the monad, and this cartesian closed structure:
\[
  \den{A\to B}=
  (T\den B)^{\den A}\text.
\]
\subsubsection{Interpretation of Terms}
We interpret the concurrency-specific primitives ($\gfork,\gwait,\gsop,\gprintstop{}$) using the
fact that $T(X)$ is always a model of the parameterized algebraic
theory, as follows. These maps are sometimes called the `generic effects' of the algebraic operations~\cite{pp-algop-geneff}.
\begin{align*}
&  \den{\gfork}=\lambda (). \fork(\lambda a.\eta(\inj 1(a)),\eta(\inj 2()))
  :1\to T(\den\tid+1)&&  \den{\gsop}=\lambda (). \sop:1\to T(0)
  \\&
  \den{\gwait}=\lambda a. \wait(a,\eta()):\den\tid\to T(1)
&&  \den{\gprintstop\sigma}=\lambda (). \print\sigma:1\to T(0)
\end{align*}
  If we elide the third equation in \eqref{eqn:vartype} we can also regard these as the canonical terms:
  \[
    \gfork()=\fork(a.x(a),y())\qquad
    \gwait(a)=\wait(a,x)\qquad
    \gsop()=\sop\qquad
    \gprintstop\sigma()=\print\sigma
  \]
  The remainder of the interpretation of value and computation terms is the long-established interpretation of a call-by-value
  language in a bicartesian closed category with a monad~\cite{moggi_notions_1991}. The language constructs (sums, products and functions) match up
  with the categorical structure (coproducts, products, and cartesian closure).

  The interpretation of $\letin x \dots\dots$ is given using the monad strength and the multiplication.
  For first-order types, this amounts to the substitution of the parameterized algebraic theory. This is informative to spell out.
  Let $A=\sum_{i=1}^k \tid^{m_i} $ and $B=\sum_{i=1}^{k'} \tid^{m'_i}$. Consider program expressions:
  \begin{mathpar}
    \cterm{a_1\dots a_p} t : A
    \and
    x:A\cterm{a_1\dots a_p} u : B
  \end{mathpar}
  and we explain $\den{\letin x t u}$.
  For $1\leq i\leq k$, let $\cterm{a_1\dots a_p, c_1\dots c_{m_i}} u_i\defeq u[^{\inj i(\vec c)}\!/\!_x]:B$, where each $\inj i(\vec c)$ has type $\sum_{i=1}^k \tid^{m_i}$.
  Then, by the third clause in~\cref{eqn:vartype} and~\Cref{thm:main-theorem}, we can regard $\den t\in T(V_{x:m_1\dots x_k:m_k})(\vec a)$ and each $\den {u_i}\in T(V_{x'_1:m'_1\dots x'_k:m'_{k'}})(\vec a,\vec c)$ as terms
    \begin{mathpar}
      x_1:m_1,\sdots, x_k :m_k\vbar a_1,\sdots, a_p\vdash  \bar t
      \and
      x'_1:m'_1,\sdots, x'_{k'}:m'_{k'}\vbar a_1,\sdots, a_p , c_1,\sdots, c_{m_i}\vdash \bar u_i
    \end{mathpar}  
    in the parameterized algebraic theory. Then
    the interpretation of $\cterm{a_1\dots a_p}\letin x t u:B$ in $T(V_{m'_1\dots m'_{k'}})(\vec a)$ amounts to the following substituted term in the parameterized algebraic theory:
    \[
     x'_1:m'_1,\sdots, x'_k :m'_{k'}\vbar a_1,\sdots, a_p \vdash \bar t[^{\bar u_i}\!/\!_{x_i}]
    \]
    For example,
    $\den{\gprint\sigma()}\in T(1)(0)$ is the semantics of both
    the program for $\gprint \sigma()$ on the left,
    and the term in the parameterized algebraic theory on the right:
    \[\begin{array}{@{}l@{}}
    \cterm{\emptyset}\letin{z}{\gfork()}{
      \casevert{z}{
        \begin{array}{@{}l@{}}
          \clause1{a}{\gwait(a)}\\
          \clause2{}{\gprintstop \sigma()}
        \end{array}}
      :\tone}\qquad\!
    x:0\vbar \cdot \vdash \fork(a.\wait(a,x),\print \sigma)\text.
    \end{array}
    \]

\subsection{Adequacy, Contextual Equivalence, Soundness, and Full Abstraction}\label{sec:adeq-cont-equiv}

We prove a version of adequacy, which usually says that, at ground types, if the denotation of a term is equal to that of a value then the term reduces to that value.
We use adequacy to show that denotational equality is a sound proof technique for contextual equivalence.
For terms of first-order type, we show the converse (a partial full abstraction result), using the completeness result from~\Cref{thm:completeness}.
General adequacy results for algebraic effects have been proved by Plotkin and Power~\cite{DBLP:conf/fossacs/PlotkinP01} and Kavvos~\cite{DBLP:journals/pacmpl/Kavvos25} but these results do not apply to our setting directly because we express effects using a parameterized signature and model them in a functor category.

\begin{lemma}[Adequacy]\label{lemma:adequacy}
  For all $\cterm{\emptyset} t :\tzero$, we have
    $\sconfig t \Downarrow \den t$.
  \end{lemma}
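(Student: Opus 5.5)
We prove the statement by induction on the length of reduction sequences, using a semantic invariant on configurations rather than on single terms. Here $\den t\in T(0)(0)=\repr{\emptyset}(0)$ is a labelled poset with no inputs and no variable-labelled vertices, i.e.\ an ordinary labelled poset plus the distinguished $\possop$. We read $\sconfig t\Downarrow\den t$ as saying that the poset obtained after forgetting $\possop$ is the one observed. By Proposition~\ref{prop:determinacy}(1), $\sconfig t$ reaches a terminal configuration, and by part (2) every such run yields the same poset up to isomorphism. So it suffices to check that one run, or any run, produces $\den t$.

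The key step is to extend the denotation to configurations. For a configuration $C=\config w\waiter\threads$ that is well-formed for $(\tzero,<,\emptyset)$, we define $\den C\in\repr{\emptyset}(0)$ in three steps:
\begin{enumerate}
\item Linearise $w$ along the creation order $<$ as nested $\fork$s, the greatest thread being the main thread.
\item Guard each unfinished thread $a$ by $\wait(\bigoplus\{b\mid b\waiter a\},\den{\threads(a)})$, and replace each finished thread $a$ by $\wait(\bigoplus\{b\mid b\waiter a\},\sop)$.
\item Add a record of the actions already performed. A thread that finished by an action $\sigma$ contributes $\wait(\dots,\act\sigma)$, so we track in the configuration which finished threads acted. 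Equivalently, we add a history component of observed labels.
\end{enumerate}
Using axioms \eqref{eq:fork-comm} and \eqref{eq:fork-assoc}, this is independent of the choice of $<$. For the initial configuration, \eqref{eq:wait-neutral-elem} gives $\den{\sconfig t}=\den t$. For a terminal configuration, $\den C$ is by construction the labelled poset of Definition~\ref{def:observ-as-labell} with a $\possop$ adjoined. This uses transitivity of $\waiter$ together with \eqref{eq:wait-closed} and \eqref{eq:fork-sub} to erase the silent finished threads while keeping their dependencies transitively.

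The main lemma is \emph{invariance}: if $C\xlongrightarrow{(\sigma)}C'$ then $\den C=\den{C'}$, where the $\sigma$-step moves an action from the thread code into the recorded history. We prove it by induction on the transition derivation. The local rules are checked one by one:
\begin{itemize}
\item $\beta$-rules and $\mathsf{let}$-of-$\mathsf{return}$ are validated by the standard Moggi semantics, namely the monad laws and the cartesian closed structure, via a substitution lemma for values.
\item $\gwait(v)$ becomes $\wait(\den v,\eta())$, matching the new $\waiter$ edges, with \eqref{eq:wait-accumulates} merging them with existing guards.
\item $\gfork()$ corresponds exactly to the generic effect $\fork(a.\eta(\inj1 a),\eta(\inj2()))$, placing the child just below $a$ in the creation order.
\item $\gsop$ and $\gprintstop\sigma$ are immediate.
\end{itemize}
The evaluation-context rule uses the fact that $\mathsf{let}$ is interpreted by the Kleisli extension, which is a model homomorphism and so commutes with $\fork$ and $\wait$. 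It therefore distributes the continuation $s$ over all threads created by $t$, matching the copied continuations in the rule.

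The step we expect to be hardest is the local-to-global rule. The enabling condition (every $b\waiter a$ has finished) and the transitive closure $(\waiter\cup\waiter'\cup\{(b,c)\mid b\waiter a,c\in w\})^*$ must be matched semantically by pushing the guard $\wait(u_a,-)$ on thread $a$ inside the newly produced threads. For this we plan to use \eqref{eq:fork-wait-comm} to push the wait through forks, and \eqref{eq:wait-accumulates} and \eqref{eq:wait-closed} to close under transitivity. We would first reduce this to a statement on normal forms, or equivalently on labelled posets via Theorem~\ref{thm:main-theorem}: there the guard on $a$ is simply a set of edges into every vertex spawned from $a$, and transitivity is automatic in the poset, so the equality becomes a direct check on posets.

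Given invariance, chaining it along a terminating run from $\sconfig t$ gives $\den t=\den{C'}$, which is the observed labelled poset. This proves the lemma.
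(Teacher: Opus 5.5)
Your plan is the paper's own three-step argument. You extend $\den{-}$ to well-formed configurations while accumulating the action labels, show that reduction preserves the denotation, and read $\den t$ off a terminal configuration reached by a run that Proposition~\ref{prop:determinacy} supplies. Your treatment of the evaluation-context rule (Kleisli extension is a model homomorphism) and of the local-to-global rule (check on labelled posets via Theorem~\ref{thm:main-theorem}) fleshes out what the paper calls a straightforward induction.

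\textbf{The gap: $\den C$ depends on the creation order.} You claim $\den C$ is independent of $<$, and this is false as stated. Axioms \eqref{eq:fork-comm} and \eqref{eq:fork-assoc} only let you permute or regroup children. Nothing lets you change which thread is greatest, i.e.\ whose end is $\possop$.

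For instance, $t=\letin{y}{\gfork()}{\casetwo{y}{x}{\gprintstop\sigma()}{}{\gprintstop\tau()}}$ reaches $C=\config{\{a,b\}}{\emptyset}{[a]\gprintstop\sigma(),[b]\gprintstop\tau()}$. This configuration is well-formed for both $b<a$ and $a<b$. The two linearisations are $\fork(b.\act\sigma,\act\tau)$ and $\fork(a.\act\tau,\act\sigma)$. They differ in $\repr{\emptyset}(0)$, since a different action lies below $\possop$, and only the first equals $\den t$. So an invariance lemma for a $\den C$ built from an arbitrary valid order fails.

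\textbf{The repair.} Define $\den C$ relative to the order, or at least relative to a designated main thread. Then prove $\den{C_1}_{<_1}=\den{C_2}_{<_2}$, where $<_2$ is the extension given by the Preservation proposition. In any such extension the original thread stays greatest, because a thread that forks $b$ has a next term mentioning $b$. Among orders with the same greatest element, independence does hold by \eqref{eq:fork-comm} alone: any two such orders are connected by adjacent swaps of children that do not mention each other.

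\textbf{A related point.} State invariance for configurations of arbitrary type $A$ with an external world $w'$, valued in $T\den A(w')$. Your definition covers only type $\tzero$ with $w'$ empty, but the induction on derivations passes through local configurations such as the one produced by $\gfork()$. Your definition extends to this general case verbatim.
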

  \begin{proof}[Proof outline]
      We prove this in three steps.
  \begin{enumerate}
  \item We extend term interpretations $\den t$ to well-formed configurations~$\den C$.
  \item
  We show a soundness property for the reduction relation: semantic interpretation is preserved by reduction.
  For example, if $C\longrightarrow C'$ then $\den C=\den {C'}$.
  This is a straightforward induction proof, but the statement is subtle, requiring accumulating the action labels.
\item
  We pick a reduction sequence from $\sconfig t$, noting by Proposition~\ref{prop:determinacy} that the choice of sequence doesn't matter and that it will terminate.
  A finished configuration only has the waiting relation $\preceq$ remaining, and all the stopped threads. With the accumulated action labels,
  this labelled poset is $\den t$, because reduction preserves semantic interpretations.
\end{enumerate}
 \vspace{-12pt}
\end{proof}
\begin{definition}
  Let $\Gamma$ be a typing context and $A$ a type.
  A \emph{program context} $\ctx-$ for $\Gamma,A$ is
  a program of type $\tzero$ with a hole of type $A$. Thus,
  if
  $\Gamma \cterm{\emptyset} t:A$ then $\cterm{\emptyset} \ctx t: \tzero$.

  Two programs $\Gamma \cterm{\emptyset} t,u:A$ are \emph{contextually equivalent},
  written \(
    t\ctxeq u
  \), if
  for every $(\Gamma,A)$ context $\ctx-$,
  letting $(\cong)$  denote isomorphism of labelled posets,
  we have that
  \[
    \sconfig {\ctx t}\Downarrow (P,\ell_P)    \  \ \& \ \     \sconfig {\ctx u}\Downarrow (Q,\ell_Q)
    \implies (P,\ell_P)\cong (Q,\ell_Q) 
  \]
\end{definition}

By Proposition~\ref{prop:determinacy}, the $(P,\ell_P)$ and $(Q,\ell_Q)$ are uniquely determined by $\ctx t$ and $\ctx u$ respectively.

\begin{proposition}[Soundness]
  Suppose that $\Gamma\cterm{\emptyset} t,u :A$.
If $\den t=\den u$ then
$    t\ctxeq u$.\label{thm:soundness}
\end{proposition}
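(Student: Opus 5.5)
The plan is the standard compositionality-plus-adequacy argument. Suppose $\den t=\den u$ and fix a $(\Gamma,A)$ program context $\ctx-$. The conclusion is that the labelled posets observed from $\sconfig{\ctx t}$ and $\sconfig{\ctx u}$ are isomorphic. By Proposition~\ref{prop:determinacy} both configurations terminate with uniquely determined labelled posets $(P,\ell_P)$ and $(Q,\ell_Q)$, so it suffices to show that each of them is isomorphic to a common object.

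\textbf{Step 1: compositionality.} First I would establish that the interpretation is compositional: if $\den t=\den u$ then $\den{\ctx t}=\den{\ctx u}$ for any program context. This is a routine induction on the structure of the context $\ctx-$, using that every term former is interpreted by composing the denotations of its immediate subterms with fixed categorical structure. That structure comprises products, coproducts, currying in the cartesian closed category $\Set^\FinRel$, and the unit, strength and multiplication of $T$.

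The one point needing care is that the hole may sit under binders, for instance inside a $\lambda$ or a branch of a $\mathsf{case}$, so the context of the hole is an extension $\Gamma'\supseteq\Gamma$. I would therefore state the induction for contexts whose holes are typed in arbitrary extended contexts, and use weakening to reduce to the given $\Gamma$. Thread IDs pose no obstacle here, since the world is $\emptyset$ throughout and $\ctx t$ is closed in the empty world.

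\textbf{Step 2: adequacy.} Next, Lemma~\ref{lemma:adequacy} applied to $\cterm{\emptyset}\ctx t:\tzero$ gives $\sconfig{\ctx t}\Downarrow\den{\ctx t}$, and likewise for $u$. Here $\den{\ctx t}\in T(\den\tzero)(0)=\repr{\emptyset}(0)$ is read as an ordinary labelled poset. The $\possop$ vertex must be discarded, since the observation records only the actions; this reading should be fixed uniformly, exactly as in the adequacy statement.

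\textbf{Step 3: conclusion.} Since $\den{\ctx t}=\den{\ctx u}$, the same isomorphism class of labelled posets is reached from both configurations. Proposition~\ref{prop:determinacy}(2) then gives $(P,\ell_P)\cong\den{\ctx t}=\den{\ctx u}\cong(Q,\ell_Q)$.

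The main obstacle is Step 1, specifically getting the statement of compositionality right for contexts with binders and for the interpretation of $\mathsf{let}$ via strength. The equalities hold because denotations are natural transformations and equal inputs give equal outputs, so nothing deep is involved. Only bookkeeping is needed to make the induction hypothesis general enough.
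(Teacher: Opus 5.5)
Your proposal is correct and follows the paper's proof: you apply adequacy (Lemma~\ref{lemma:adequacy}) to both $\ctx t$ and $\ctx u$, and use compositionality of the semantics to get $\den{\ctx t}=\den{\ctx u}$. You also make explicit two points the paper leaves implicit, namely the appeal to Proposition~\ref{prop:determinacy}(2) to handle arbitrary observed posets and the discarding of $\possop$ when reading $\den{\ctx t}$ as an observation.
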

\begin{proof}
  We deduce the result from Lemma~\ref{lemma:adequacy} as follows.
  We consider any two terms in any typing context, $\Gamma \cterm{\emptyset} t,u:A$, and any $(\Gamma,A)$-context, $\ctx-$.
  Suppose that $\den t=\den u$.
  From Lemma~\ref{lemma:adequacy}, $\sconfig {\ctx t}\Downarrow \den{\ctx t}$ and also
  $\sconfig {\ctx u}\Downarrow \den {\ctx u}$.
  Since the denotational semantics is compositional and $\den t=\den u$, also $\den{\ctx t}=\den{\ctx u}$.
  Thus $t\ctxeq u$.
\end{proof}
In particular, the standard $\beta/\eta$ laws are sound,
as are all the equations in Figure~\ref{fig:fork-wait-eqs},
such as~(\ref{eq:fork-wait-comm}):
\[
  \gwait(b);\gfork() \ =\
  \letin x {\gfork()} {\gwait(b);\ret x}\text.
\]
\begin{theorem}[Full abstraction at first order]\label{thm:full-abstraction}
  Suppose that $a_1\colon A_1,\sdots, a_p\colon A_p\cterm{\emptyset} t,u :B$
  and $A_1\dots A_p$ and $B$ are all first order (no function types).
  Then \(\den t=\den u\)
  if and only if \(
    t\ctxeq u\).
\end{theorem}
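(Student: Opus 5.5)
The forward direction is Proposition~\ref{thm:soundness}, so only the converse needs work: if $t\ctxeq u$ then $\den t=\den u$. I will reduce it to the Completeness Theorem~\ref{thm:completeness} together with the Representation Theorem~\ref{thm:main-theorem} and adequacy (Lemma~\ref{lemma:adequacy}).

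\emph{Step 1: normalise the types.} Every first-order type is isomorphic to a sum of powers of $\tid$, as noted after~\eqref{eqn:vartype}, and these isomorphisms are definable by programs. So by composing with definable isomorphisms, and using compositionality plus the fact that $\ctxeq$ is closed under such composition, I may assume $B=\sum_{i=1}^{k}\tid^{m_i}$. Similarly, each $A_j$ can be taken to be such a sum. By case-splitting on the arguments (again definable), I may also assume the context is just $\vec a:\tid$, i.e.~a finite list of thread-ID variables.

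Under these assumptions, $\den t,\den u$ are elements of $T(V_\Gamma)(n)$ with $\Gamma=x_1:m_1,\dots,x_k:m_k$. By Theorem~\ref{thm:main-theorem}, each is represented by a term $\Gamma\vbar a_1,\dots,a_n\vdash\bar t,\bar u$ of the parameterized theory. It then suffices to show $\bar t=_\forkth\bar u$, and for that I apply Theorem~\ref{thm:completeness}. Concretely, I must show that $\MCC[\bar t[\gamma]]=\MCC[\bar u[\gamma]]$ for every closing substitution $\gamma$ and closing context $\MCC$.

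\emph{Step 2: realise algebraic contexts as program contexts.} This is the main obstacle. Each closing $\gamma(x_i)$ is a term $\cdot\vbar\vec a,\vec b\vdash s_i$. Using the generic effects, I translate it into a program $\vec a,\vec b:\tid\cterm{\emptyset}\hat s_i:\tzero$ with $\den{\hat s_i}=s_i$, via the canonical terms $\gfork,\gwait,\gsop,\gprintstop\sigma$ and Theorem~\ref{thm:normal-form-surj}. Likewise, a closing algebraic context $\MCC[-]$ with hole in world $\vec a$ is realised as a program that binds the $a$'s through $\gfork$ in the way the context $\MCC$ does. Combining these gives the program context
\[\ctx- \defeq \hat{\MCC}\bigl[\letin{z}{-}{\casesum{z}{\vec c}{\hat s_i}{i}{1}{k}}\bigr].\]
By the description of $\mathsf{let}$ as substitution in \S\ref{sec:interpretation}, $\den{\ctx t}=\MCC[\bar t[\gamma]]$, and similarly for $u$.

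\emph{Step 3: conclude.} Contextual equivalence together with adequacy (Lemma~\ref{lemma:adequacy}) and determinacy (Proposition~\ref{prop:determinacy}) gives $\den{\ctx t}\cong\den{\ctx u}$ as labelled posets. In $\repr{\emptyset}(0)$, isomorphism classes of labelled posets are exactly the elements, so $\MCC[\bar t[\gamma]]=\MCC[\bar u[\gamma]]$ holds for all $\gamma$ and $\MCC$. Theorem~\ref{thm:completeness} then gives $\bar t=\bar u$, hence $\den t=\den u$.

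The delicate points are in Step~2. First, the program context must respect the world discipline: free thread IDs must be bound by forks inside the context, since contexts have the empty world. Second, the observation is only the labelled poset without the $\possop$ label, so I must check that the completeness gadgets (distinct actions, and $\act{\sigma_{n+1}}$ marking the end of the main thread) already encode $\possop$ by an observable action. I expect that inspecting the proof of Theorem~\ref{thm:completeness} confirms this.
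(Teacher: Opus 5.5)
Your proposal is correct and is essentially the paper's proof: reduce via definable isomorphisms and case-splitting to a $\tid$-context and a result type $\sum_i \tid^{m_i}$, then realise each closing algebraic context and substitution as a program context using the generic effects and a $\mathsf{case}$ on the result, and finally combine adequacy with Theorem~\ref{thm:completeness}. The $\possop$ issue you flag is real, and the paper does not spell it out; it resolves as you expect, because the completeness context waits for the hole's main thread before performing the fresh $\act{\sigma_{n+1}}$ at the end of the outer main thread, so the operational observation determines exactly which elements lie below $\possop$ (alternatively, wrap each algebraic context in $\fork(a.\wait(a,\act{\tau}),[-])$ with $\tau$ fresh).
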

\begin{proof}
  From left to right follows from Theorem~\ref{thm:soundness}.
  From right to left, we first consider the case where $p=0$ and $B=\tzero$. Then
  contextual equivalence with the empty context in particular, together with Lemma~\ref{lemma:adequacy}, gives
  $\den t= \den u$.

  We next consider the case where $A_1=A_2=\dots A_p=\tid$ and $B=\sum_{i=1}^k \tid^{m_i}$.
  Suppose $t\ctxeq u$.
  Via~(\ref{eqn:vartype}), $\den t,\den u\in T(V_{x_1:m_1\dots x_k:m_k})(p)$, that is, $t$ and $u$ are interpreted directly in the parameterized algebraic theory.
  We must show that they are equal. By Theorem~\ref{thm:completeness},
  it suffices to show that $\ctx{\den t}[\gamma]=\ctx{\den u}[\gamma]$ for all algebraic contexts $\ctx-$ and algebraic substitutions $\gamma$.
  We deduce this by converting $\ctx-$ and $\gamma$ into a program context (`full definability' at first order) so that we can use the contextual equivalence $t\ctxeq u$.

  First, we note that the programming language supports algebraic operations at all types, via the generic effects:
    $\print\sigma =\gprintstop\sigma()$,
    $\sop=\gsop()$ and
  \begin{equation}\label{eqn:alg-to-geneff}
    \fork(t,u)=\casetwo{\gfork()}{a}t{()}u
    \qquad
    \wait(a,t)=\gwait(a);t
  \end{equation}
  We use this to convert the algebraic context $\ctx-$ to a program context that binds the free variables $a_1\dots a_p$. Moreover, each `substituend'
  $\gamma(x_i)$
  has no computation variables, and hence can also be regarded as a program of type~$\tzero$ under~\eqref{eqn:alg-to-geneff}.
  Now we define the computation program expression
  \[t[\gamma]\defeq
    \casesum{t} {\vec a}{\gamma(x_i)}{i}{1}{k}
  \]
  so that $\den {t[\gamma]}=\den{t}[\gamma]\in T(0)(p)$.
  Thus $\ctx{\den t}[\gamma]=\den{\ctx t [\gamma]}=\den{\ctx u [\gamma]}=\ctx {\den u}[\gamma]$ as required.
  Finally, we deduce the full result by using $\beta/\eta$ laws for sums and the fact that every first-order type is definably isomorphic to one of the form~(\ref{eqn:vartype}).
    \end{proof}

\section{Further Related and Future Work and Concluding Remarks}
    \label{sec:conc-etc}
Before concluding, we discuss some additional related work and
future directions our work enables.
\subsection{Further Related Work}

\paragraph{Algebraic Effects for Concurrency}
As briefly discussed in Section~\ref{sec:intro}, algebraic theories have been used to axiomatize features of process calculi, including in the style of algebraic effects.
This includes an algebraic-effects analysis of name creation and communication of names over channels in the $\pi$-calculus~\cite{DBLP:journals/tcs/Stark08}, and
a treatment of features of CSP such as action, choice and concealment~\cite{DBLP:books/daglib/p/GlabbeekP10} using algebraic effects and handlers.
From a programming language perspective, concurrency in the presence of nondeterminism and global shared state has been modelled using algebraic effects by Abadi and Plotkin~\cite{coop} and Dvir et al.~\cite{ra-conf, two-sorted-brookes}.
As discussed in~\Cref{sec:fork-wait-general},
our work differs from this previous work in that parallel composition of programs (i.e.~forking) is an operation in the equational axiomatization, whereas in previous work it was defined on top of the algebraic effects presentation.
The key ingredient that makes this possible is that we treat thread IDs as primitive and use the framework of parameterized algebraic theories to capture thread creation.

\paragraph{Trace Semantics}
Brookes's influential work~\cite{brookes:fa-shared-state} models a
preemptive concurrent programming language with global shared
state. Programs denote closed sets of traces; these traces represent a protocol involving the changes to memory by the program and its environment.
This form of semantics is robust under variation and
extensions~\cite{bhn,tw,xrh}, including variations to weak memory
models~\cite{tso-traces,ra-conf}.
Dvir et al.~\cite{two-sorted-brookes} give
a \emph{two-sorted} algebraic theory for Brookes-like traces.
Their representation theorem recovers Brookes's monad
when restricted to one of the sorts. Interestingly, the same
representation recovers Abadi and Plotkin's~\cite{coop} monad for
\emph{cooperative} concurrent programming with shared state when
restricted to the other sort.
Both Dvir et al.'s and Abadi and Plotkin's presentations presuppose non-deterministic choice as an algebraic
operation. In contrast, in our parameterized algebraic theory
the non-deterministic behaviour emerges from the more primitive
behavior of thread forking.

\paragraph{Effect Handlers for Concurrency}
Effect handlers arose from the theoretical study of algebraic
effects~\citep{PlotkinP13} as a way of supporting non-algebraic
effects, such as an operation for catching exceptions.
They were
quickly adopted as a general feature for modular programming
with effects~\citep{KammarLO13}, and are central to how
concurrency is currently implemented in OCaml
5~\citep{SivaramakrishnanDWKJM21} and in
WebAssembly~\citep{PhippsCostinRGLHSPL23}. They provide the basis for
a whole range of different concurrency effects such as actors,
async/await, coroutines, generators, and green threads.
Alas, the practice of programming with effect handlers departs
substantially from the established theory:
we do not yet know how to specify the semantics
of these effects using any kind of equational axiomatization, let
alone as an algebraic effect.

Hazel is a separation logic~\citep{VilhenaP21} for effect handlers
built on the concurrent separation logic Iris~\citep{JungKJBBD18} in
the Rocq proof assistant.
Hazel provides a powerful framework for reasoning about concurrency
effects implemented as effect handlers, but
it is quite a departure
from the elementary equational reasoning provided by the theory of
algebraic effects and
 gives little semantic insight into the effects
being defined.
In contrast, our work characterizes a particular concurrency effect
(dynamic threads) as an algebraic effect (specifically a parameterised
algebraic effect) corresponding to a natural denotational model.
Future work may adapt and extend our approach to support a broad
range of different concurrency effects or connect
to effect handlers and programming practice.

%

\subsection{Future Work}
The framework of algebraic theories allows for modular combination of effects~\cite{HPP06Combine}.
We could use this to combine concurrency based on dynamic threads with other effects such as global and local state~\cite{PlotkinP02} which is shared between threads, and to model probabilistic scheduling of threads.

We have used labelled posets (pomsets), which are standard in the study of true concurrency e.g.~\cite{DBLP:journals/ijpp/Pratt86}, as the notion of observation in our operational semantics.
We hope our denotational model can connect in the future with an operational semantics based on interleaving traces, which is more standard in process calculus e.g.~\cite{DBLP:books/daglib/0067019}.

Possible semantic variations of $\fork$ and $\wait$ abound, such as waiting for a thread and all its descendants to finish, or limiting the number of threads that can exist at one time.
Another extension involves threads that finish with a value rather than with the empty type, and so the $\wait$ operation returns that value to the parent. This extension is an abstract
form of inter-thread communication.
In this paper we concentrated on a minimal idealized fragment of POSIX-like threads, so there are many thread features that we could attempt to model in future work, such as allowing a thread to find out its own ID, or other thread synchronization mechanisms.

\subsection{Concluding Remarks}
We have studied the semantics of dynamic creation of threads using the framework of parameterized algebraic theories, by treating thread IDs as abstract parameters.
In~\Cref{sec:pres-theory-fork} we gave an algebraic theory that axiomatizes operations such as forking and waiting for threads.
In~\Cref{sec:rep-theorem} we provided a syntax-free characterization of terms in this theory~(\Cref{thm:main-theorem}) based on an extension of labelled posets, which are well-established in concurrency theory.
We then showed in~\Cref{sec:compl-theor-theory} that our theory is in a certain sense complete with respect to equality of ordinary labelled posets.

In~\Cref{sec:background}, we introduced a simple concurrent programming language and its operational semantics.
To model this language denotationally, in~\Cref{sec:denot-semant-progr}, we used our algebraic theory of dynamic threads and the connection between algebraic theories and monadic semantics.
We proved that the denotational semantics is adequate, sound and fully abstract at first order.

In summary, our simple language demonstrates that the theory of algebraic effects applies directly to concurrency primitives, and that it is profitable to pursue this algebraic perspective.

%
%
\begin{acks}
  Ohad Kammar, Jack Liell-Cock and Sam Staton were funded in part by a Royal Society University Research Fellowship, ERC Grant BLAST, AFOSR under award number FA9550-21-1-0038, a Clarendon Scholarship, and grants from ARIA Safeguarded AI.
  Sam Lindley and Cristina Matache were supported by UKRI Future Leaders Fellowship ``Effect Handler Oriented Programming'' (MR/T043830/1 and MR/Z000351/1) and by the Huawei Edinburgh Joint Lab project ``EPOCH: Effectful programming on capability hardware''.
For the purpose of open access, the authors have applied a Creative Commons Attribution (CC BY) licence to any Author Accepted Manuscript version arising from this submission.

We thank the anonymous referees and many people for helpful discussions and useful suggestions, including Rob van Glabbeek and Sean Moss.
\end{acks}

\bibliographystyle{ACM-Reference-Format}
\bibliography{refs.bib}

\end{document}